\newtheorem{Theorem}{Theorem}
\newtheorem{Lemma}{Lemma}
\newtheorem{Corollary}{Corollary}
\newtheorem*{Assumption_I1}{Assumption I1}
\newtheorem*{Assumption_I2}{Assumption I2}
\newtheorem*{Assumption_I3}{Assumption I3}
\newtheorem*{Assumption_I3'}{Assumption I3'}
\newtheorem*{Assumption_I4}{Assumption I4}
\newtheorem*{Assumption_D1}{Assumption D1}
\newtheorem*{Assumption_D2}{Assumption D2}
\newtheorem*{Assumption_D3}{Assumption D3}
\newtheorem*{Assumption_D4}{Assumption D4}
\newtheorem*{Assumption_P1}{Assumption P1}
\newtheorem*{Assumption_P2}{Assumption P2}
\newtheorem*{Assumption_P3}{Assumption P3}
\newtheorem*{Assumption_P4}{Assumption P4}
\newtheorem*{Assumption_C1}{Assumption C1}
\newtheorem*{Assumption_C2}{Assumption C2}
\newtheorem*{Assumption_C3}{Assumption C3}
\newtheorem*{Assumption_C4}{Assumption C4}
\newtheorem*{Assumption_C5}{Assumption C5}
\newtheorem*{Assumption_N1}{Assumption N1}
\newtheorem*{Assumption_N2}{Assumption N2}
\newtheorem*{Assumption_N3}{Assumption N3}
\newtheorem*{Assumption_N3'}{Assumption N3'}
\newtheorem{Example}{Example}
\newtheorem{Simulation}{Simulation}
\begin{document}

\author{Takuya Ishihara\footnote{University of Tokyo, Graduate School of Economics, email: ishihara-takuya143@g.ecc.u-tokyo.ac.jp.
I would like to express my appreciation to the editor, the associate editor, and an anonymous referee for their careful reading and comments.
I also would like to thank Katsumi Shimotsu, Hidehiko Ichimura, and the seminar participants at the University of Tokyo, Hakodate Conference, Otaru University of Commerce, and Nanzan University.
This research is supported by Grant-in-Aid for JSPS Research Fellow (17J03043) from the JSPS.}}
\title{Identification and Estimation of Time-Varying Nonseparable Panel Data Models without Stayers}
\date{\today}
\maketitle

\begin{abstract}
This paper explores the identification and estimation of nonseparable panel data models.
We show that the structural function is nonparametrically identified when it is strictly increasing in a scalar unobservable variable, the conditional distributions of unobservable variables do not change over time, and the joint support of explanatory variables satisfies some weak assumptions.
To identify the target parameters, existing studies assume that the structural function does not change over time, and that there are ``stayers", namely individuals with the same regressor values in two time periods.
Our approach, by contrast, allows the structural function to depend on the time period in an arbitrary manner and does not require the existence of stayers.
In estimation part of the paper, we consider parametric models and develop an estimator that implements our identification results. 
We then show the consistency and asymptotic normality of our estimator.
Monte Carlo studies indicate that our estimator performs well in finite samples.
Finally, we extend our identification results to models with discrete outcomes, and show that the structural function is partially identified.

Keywords: Nonseparable models, nonparametric identification, panel data, unobserved heterogeneity.
\end{abstract}

\section{Introduction}

This paper considers the identification and estimation of the following nonseparable panel data model:
\begin{eqnarray}
Y_{it} &=& g_t(X_{it},U_{it}),  \ \ \ \ i = 1, \cdots, n , \ \  t=1,\cdots ,T, \label{model}
\end{eqnarray}
where $Y_{it}\in \mathbb{R}$ is a scalar response variable, $X_{it} \in \mathbb{R}^k$ is a vector of explanatory variables, and $U_{it} \in \mathbb{R}$ is a scalar unobservable variable.
Suppose that $\mathbf{Y}_i = (Y_{i1}, \cdots , Y_{iT})$ and $\mathbf{X}_i = (X_{i1}', \cdots , X_{iT}')$ are observable.
Many widely used panel data models fall into this category.
For example, this specification contains the textbook linear panel data model
$$
Y_{it} = X_{it}'\beta + \alpha_i +\epsilon_{it},
$$
because we can regard $\alpha_i + \epsilon_{it}$ as $U_{it}$.
Furthermore, it contains the following nonlinear panel data models:
\begin{eqnarray}
Y_{it} &=& h^{-1}(X_{it}'\beta + \gamma_t + \alpha_i +\epsilon_{it}), \label{abrevaya} \\
Y_{it} &=& c(U(\alpha_i,\epsilon_{it})) + X_{it}' \beta(U(\alpha_i,\epsilon_{it})), \label{galvao}
\end{eqnarray}
where (\ref{abrevaya}) is the transformation model proposed by \cite{abrevaya1999leapfrog}, and (\ref{galvao}) is the random effects quantile regression model proposed by \cite{galvao2017quantile}.

The importance of unobserved heterogeneity when modeling economic behavior is widely recognized.
Nonseparable models capture the unobserved heterogeneity effect of explanatory variables on outcomes because these models allow the derivative of the structural function to depend on an unobserved variable.
Indeed, there is extensive literature on nonseparable panel data models including \cite{altonji2005cross},\cite{evdokimov2010identification}, \cite{hoderlein2012nonparametric}, \cite{chernozhukov2013average}, \cite{d2013nonlinear}, and \cite{chernozhukov2015nonparametric}.

This study shows that we can nonparametrically identify $g_t(x,u)$ when $g_t(x,u)$ is strictly increasing in $u$, the conditional distributions of $U_{it}$ are the same over time, and the support of $\mathbf{X}_i$ satisfies some weak assumptions.
To identify the target parameters, many nonseparable panel data models assume that the structural function does not change over time, and require the existence of ``stayers", namely individuals with the same regressor values in two time periods.
By contrast, our approach allows $g_t$ to depend on the time period $t$ in an arbitrary manner and does not require the existence of stayers.

Although modeling time trends is important in research on panel data, existing nonseparable panel data models assume that the structural function does not change over time or impose some restrictions on these time trends.
For instance, \cite{altonji2005cross} do not allow $g_t$ to depend on time period $t$; \cite{evdokimov2010identification} and \cite{hoderlein2012nonparametric} use additive time effects; and \cite{chernozhukov2013average} and \cite{chernozhukov2015nonparametric} use additive location time effects
and multiplicative-scale time effects. Moreover,
\cite{chernozhukov2015nonparametric} assume that $g_t(x,u)$ can be written as $\mu_t(x) + \sigma_t(x) \phi(x,u)$.
Thus, time effects are linearly conditional on explanatory variables in this model, and as such it does not allow for nonlinear time effects.
Indeed, \cite{d2013nonlinear} allow for nonlinear time effects by assuming that $g_t(x,u)$ can be written as $m_t(h(x,u))$, where $m_t$ is a monotonic transformation.
While this transformation extends the typical additive location time effects and captures macro-shocks, it does not allow the effect of macro-shocks to depend directly on an unobserved variable, and stipulates that $\nabla_x g_t(x,u) / \nabla_u g_t(x,u)$ does not depend on time. 
For example, consider the relationship between consumption and income.
We write the Engel function of the $i$-th household as 
$$
Y_{it} = \phi(X_{it},M_t,U_{it}),
$$
where $Y_{it}$ is consumption, $X_{it}$ is income, $U_{it}$ is the scalar unobserved heterogeneity that represents preference, and $M_t$ is a macroeconomic variable.
However, such a model does not satisfy \cite{d2013nonlinear} since $\nabla_x \phi(x,M_t,u) / \nabla_u \phi(x,M_t,u)$ depends on $M_t$.
By contrast, our assumptions can accommodate this model, because we can rewrite this as (\ref{model}) by treating $\phi(x,M_t,u)$ as $g_t(x,u)$.

Many nonseparable panel data models require the existence of stayers:
\cite{evdokimov2010identification}, \cite{hoderlein2012nonparametric}, and \cite{chernozhukov2015nonparametric} require stayers in order to identify the structural functions or derivatives of the average and quantile structural functions.
In particular, to identify the structural function, \cite{evdokimov2010identification} requires the support of $(X_{i1},X_{i2})$ contains $(x,x)$ for all $x$.
Many empirically important models do not satisfy this assumption.
For example, in standard difference-in-differences (DID) models, there are no individuals treated during both time periods.
Our approach does not require the existence of stayers and allows the support condition employed in standard DID models.

Our identification approach is based on the conditional stationary condition, that is, the conditional distribution function of $U_{it}$ given $\mathbf{X}_i$ does not change over time.
Similar assumptions are employed in all the aforementioned papers except for \cite{altonji2005cross}.
Indeed, \cite{manski1987semiparametric}, \cite{abrevaya1999leapfrog}, \cite{athey2006identification}, \cite{hoderlein2012nonparametric}, \cite{graham2012identification}, \cite{chernozhukov2013average}, and \cite{chernozhukov2015nonparametric} essentially make the same assumption.
Whereas \cite{evdokimov2010identification} does not impose this assumption explicitly, a similar assumption is made by considering the following model: $Y_{it} = m(X_{it},\alpha_i)+U_{it}$.
In this model, the unobservable variable $\alpha_i$ automatically satisfies the conditional stationarity because $\alpha_i$ does not depend on $t$.
By contrast, \cite{d2013nonlinear} do not assume the conditional stationarity of $U_{it}$ given $\mathbf{X}_i$ because they consider the identification of nonseparable models using repeated cross-sections.
Rather, they assume that the conditional distribution of $U_{it}$ given $V_{it} \equiv F_{X_t}(X_{it})$ does not depend on time.

In the literature on nonseparable panel data models, many papers allow the unobservable variable to be a vector or do not impose monotonicity on the structural function, for example, \cite{altonji2005cross}, \cite{evdokimov2010identification}, \cite{hoderlein2012nonparametric}, \cite{chernozhukov2013average}, and \cite{chernozhukov2015nonparametric}.
On the other hand, our model assumes that the unobservable variable is scalar, and that the structural function is strictly increasing in the unobservable variable.
These assumptions are restrictive but crucial for our identification results.

In the estimation part of the paper, we assume that the admissible collection of structural functions is indexed by a finite-dimensional parameter.
In what follows, we develop an estimator based on the conditional stationary condition.
Our method is similar to that of \cite{torgovitsky2017minimum}.
The estimator is obtained by minimizing the distance between the conditional distributions of $U_{i1}$ and $U_{i2}$.
We then prove the consistency and asymptotic normality of this estimator.
Because the asymptotic variance is complicated, we also show the validity of the nonparametric bootstrap.
Monte Carlo studies indicate that our estimator performs well in finite samples.

Finally, we extend our identification results to models in which outcomes are discrete.
This class of models includes many empirically important models such as binary choice panel data models.
Models in this class cannot point-identify $g_t$, but can partially identify it by using the suggestion developed in \cite{chesher2010instrumental}.
We also allows $g_t$ to depend on the time period $t$ in an arbitrary manner and does not require the existence of stayers.
However, the support condition becomes stronger than it is in models with continuous outcomes.

The remainder of the paper is organized as follows.
Section 2 demonstrates the nonparametric identification of $g_t$ when outcome variables are continuous.
In Section 3, we propose the estimator under the parametric assumption and discuss its consistency, asymptotic normality, and bootstrap.
Section 4 reports the results of several Monte Carlo simulations.
In Section 5, we consider the case where $Y_{it}$ is discrete and show the partial identification of $g_t$.
Section 6 offers concluding remarks.
The proofs of the theorems and auxiliary lemmas are collected in the Appendix.

\section{Identification}

First, for notational convenience we drop the subscript $i$ and let $T=2$.
It is straightforward to extend the results to the case with $T \geq 3$.
For any random variables $V$ and $W$, let $F_{V|W}$ and $Q_{V|W}$ denote the conditional distribution function and the conditional quantile function, respectively.
$\mathcal{X}_{t}$, $\mathcal{X}_{12}$, and $\mathcal{U}_t$ denote, respectively, the supports of $X_t$, $(X_1,X_2)$, and $U_t$.

First, we assume $g_t(x,u)$ is strictly increasing in $u$, and $U_t$ is continuously distributed.

\begin{Assumption_I1}
(i) For all $t$, the function $g_t(x,u)$ is continuous and strictly increasing in $u$ for all $x$.
If $X_t$ is continuously distributed, then $g_t(x,u)$ is also continuous in $x$.
(ii) For all $t$, $U_t|\mathbf{X}=\mathbf{x}$ is continuously distributed for all $\mathbf{x}$.
\end{Assumption_I1}

\begin{Assumption_I2}
For all $t$ and $\mathbf{x} \in \mathcal{X}_{12}$, the conditional distribution of the $Y_t$ conditional on $\mathbf{X}=\mathbf{x}$ is continuous and strictly increasing.
\end{Assumption_I2}

Many nonseparable panel data models do not employ the strict monotonicity assumption, for example, \cite{altonji2005cross}, \cite{hoderlein2012nonparametric}, \cite{chernozhukov2013average}, and \cite{chernozhukov2015nonparametric}.
These models allow for unobserved variables to be multivariate.
Hence, our model is more restrictive than theirs.
However, as noted in the previous section, our model covers many widely used panel data models, such as typical linear fixed-effects models.

Assumptions I1 and I2 rule out the case where outcomes are discrete.
In Section 5, we relax the strict monotonicity assumption by allowing $g_t(x,u)$ to be flat inside the support of $U_t$, and consider the case where outcomes are discrete.

Next, we impose the normalization assumption.

\begin{Assumption_I3}
For some $\bar{x} \in \mathcal{X}_1$, we have $g_1(\bar{x},u)=u$ for all $u$.
\end{Assumption_I3}

Assumption I3 is a normalization assumption common in nonseparable models (see, e.g., \cite{matzkin2003nonparametric}).
Because we assume $U_1 | \mathbf{X}=\mathbf{x} \overset{d}{=} U_2 | \mathbf{X}=\mathbf{x}$ below, it is sufficient to normalize $g_1(x,u)$ exclusively.
The functions $g_t(x,u)$ and distributions of $U_t$ depend on the choice of $\bar{x}$.
However, it is easy to show that the function
$$
h_t(x,\tau) \equiv g_t(x,Q_{U_t}(\tau))
$$
does not depend on the choice of $\bar{x}$.

Nevertheless, we can normalize this model in an alternative way.

\begin{Assumption_I3'}
For all $t$, the marginal distribution of $U_t$ is uniform on $[0,1]$.
\end{Assumption_I3'}

Under this normalization and additional assumptions, we can regard this structural function as the quantile function of the potential outcome considered by \cite{chernozhukov2005iv}.
They refer to $U_t$ as the rank variable.
As they show, under the rank invariance or rank similarity assumption, we can think of the function $g_t(x,u)$ as the quantile function of the potential outcome.
It is easy to show that the function $g_t(x,\tau)$ under Assumption I3' is the same as the function $h_t(x,\tau)$ under Assumption I3.

Hereafter, we use Assumption I3, but we can replace Assumption I3 with Assumption I3' and identify the structural function $g_t$, as we show below.

We assume the conditional stationarity of $U_t$ by following \cite{manski1987semiparametric}, \cite{abrevaya1999leapfrog}, \cite{athey2006identification},  \cite{hoderlein2012nonparametric}, \cite{graham2012identification}, \cite{chernozhukov2013average}, and \cite{chernozhukov2015nonparametric}.

\begin{Assumption_I4}
(i) The conditional distributions of the unobservable $U_t$ conditional on $\mathbf{X}$ is the same across $t$.
That is, for all $\mathbf{x} \in \mathcal{X}_{12}$, we have
\begin{equation}
U_1 | \mathbf{X}=\mathbf{x} \overset{d}{=} U_2 | \mathbf{X}=\mathbf{x}, \label{conditional stationary assumption} 
\end{equation}
which implies that $\mathcal{U}_1 = \mathcal{U}_2 \equiv \mathcal{U}$.
(ii) For all $t$, the conditional support of $U_t|\mathbf{X}=\mathbf{x}$ is $\mathcal{U}$.
\end{Assumption_I4}

When we can decompose $U_t$ into time-variant and time-invariant parts, this assumption does not impose any restrictions on the dependence between the time-invariant part and $\mathbf{X}$.
To see this, let $U_t = U(\alpha, \epsilon_t)$, where $\alpha$ is time-invariant and $\epsilon_t$ is time-variant.
Then, Assumption I4 holds, if 
\begin{equation}
\epsilon_1|\alpha=a, \mathbf{X}=\mathbf{x} \, \overset{d}{=} \, \epsilon_2|\alpha=a, \mathbf{X}=\mathbf{x}. \label{conditional stationarity assumption 2}
\end{equation}
Because condition (\ref{conditional stationarity assumption 2}) allows $\alpha$ to be correlated with $\mathbf{X}$ arbitrarily, Assumption I4 imposes no restrictions on the time-invariant unobservable variables.

Indeed, \cite{evdokimov2010identification} and \cite{d2013nonlinear} employ similar assumptions, although the former does not make this assumption explicitly.
By considering the model $Y_{it} = m(X_{it},\alpha_i)+U_{it}$, the unobservable variable $\alpha_i$ automatically satisfies the conditional stationarity.
Moreover, since \cite{d2013nonlinear} consider the identification using repeated cross-sections, they do not impose this assumption.
Instead, they impose the following:
$$
U_1|\mathbf{V}_1=\mathbf{v} \overset{d}{=} U_2|\mathbf{V}_2=\mathbf{v},
$$
where $\mathbf{V}_t \equiv (F_{X_{t,1}}(X_{t,1}), \cdots , F_{X_{t,k}}(X_{t,k}))$ and $\mathbf{v} \in (0,1)^k$.

To show the identification of $g_t$, we introduce the following sets.
Define $\mathcal{S}_0^1 \equiv \{\bar{x}\}$, $\mathcal{S}_0^2 \equiv \{x\in\mathcal{X}_2: (\bar{x},x) \in \mathcal{X}_{12}\}$, namely the cross-section of $\mathcal{X}_{12}$ at $X_1=\overline{x}$.
For $m \geq 1$, define
\begin{eqnarray}
\mathcal{S}_m^1 &\equiv & \{ x\in \mathcal{X}_1 : \text{there exists $x_2 \in \mathcal{S}_{m-1}^2$ such that $(x,x_2)\in\mathcal{X}_{12}$.} \}, \nonumber \\
\mathcal{S}_m^2 &\equiv & \{ x\in \mathcal{X}_2 : \text{there exists $x_1 \in \mathcal{S}_{m}^1$ such that $(x_1,x)\in\mathcal{X}_{12}$.} \}. \nonumber
\end{eqnarray}
Figure 1 illustrates these sets.
Because $U_1|\mathbf{X}=\mathbf{x} \overset{d}{=} U_2|\mathbf{X}=\mathbf{x}$, for all $(x_1,x_2) \in \mathcal{X}_{12}$, we have
\begin{eqnarray}
F_{Y_1|\mathbf{X}}\left( g_1(x_1,u)|x_1,x_2 \right) &=& P\left( g_1(x_1,U_1) \leq g_1(x_1,u)|X_1=x_1,X_2=x_2 \right) \nonumber \\
&=& P\left( U_1 \leq u |X_1=x_1,X_2=x_2 \right) \nonumber \\
&=& P\left( U_2 \leq u |X_1=x_1,X_2=x_2 \right) \nonumber \\
&=& P\left( g_2(x_2,U_2) \leq g_2(x_2,u) |X_1=x_1,X_2=x_2 \right) \nonumber \\
&=& F_{Y_2|\mathbf{X}}\left( g_2(x_2,u)|x_1,x_2 \right). \nonumber
\end{eqnarray}
Because $F_{Y_t|\mathbf{X}}(y|x_1,x_2)$ is invertible in $y$ for all $(x_1,x_2) \in \mathcal{X}_{12}$, we obtain 
\begin{eqnarray}
g_1(x_1,u) &=& Q_{Y_1|\mathbf{X}}\left( F_{Y_2|\mathbf{X}}\left( g_2(x_2,u)|x_1,x_2 \right) |x_1,x_2 \right) \nonumber \\
g_2(x_2,u) &=& Q_{Y_2|\mathbf{X}}\left( F_{Y_1|\mathbf{X}}\left( g_1(x_1,u)|x_1,x_2 \right) |x_1,x_2 \right). \label{QF}
\end{eqnarray}

Equations (\ref{QF}) imply that if $g_1(x_1,u)$ (or $g_2(x_2,u)$) is identified and $(x_1,x_2) \in \mathcal{X}_{12}$, then $g_2(x_2,u)$ (or $g_1(x_1,u)$) is also identified.
First, we can identify $g_1(\bar{x},u)$ because $g_1(\bar{x},u)=u$ holds by Assumption I3.
Hence, we can identify $g_2(x,u)$ for all $x \in \mathcal{S}_0^2$, because $g_2(x,u) = Q_{Y_2|\mathbf{X}}\left( F_{Y_1|\mathbf{X}}\left( g_1(\bar{x},u)|\bar{x},x \right) |\bar{x},x \right) = Q_{Y_2|\mathbf{X}}\left( F_{Y_1|\mathbf{X}}\left( u|\bar{x},x \right) |\bar{x},x \right)$.
We now turn to identifying $g_1(x,u)$ for $x \in \mathcal{S}_1^1$.

First, we fix $x \in \mathcal{S}_1^1$.
According to the definition of $\mathcal{S}_1^1$, there exists $x_2 \in \mathcal{S}_0^2$ such that $(x,x_2) \in \mathcal{X}_{12}$.
Then, it follows from (\ref{QF}) that 
$$
g_1(x,u) = Q_{Y_1|\mathbf{X}}\left( F_{Y_2|\mathbf{X}}\left( g_2(x_2,u)|x,x_2 \right) |x,x_2 \right),
$$
and hence, $g_1(x,u)$ is identified because $g_2(x_2,u)$ is already identified.
Similarly, by using (\ref{QF}), we can identify $g_2(x,u)$ for all $x \in \mathcal{S}_1^2$.
Repeating this argument provides the following theorem.

\begin{Theorem}
Suppose that Assumptions I1, I2, I3, and I4 are satisfied.
For all $t$, if we have $\mathcal{X}_t = \overline{\cup_{m=0}^{\infty} \mathcal{S}_m^t}$, then the structural function $g_t(x,u)$ is identified for all $x\in\mathcal{X}_t$ and $u\in \mathcal{U}$.
\end{Theorem}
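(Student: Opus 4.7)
The plan is to formalize the iterative scheme sketched in the text preceding the theorem by induction on the layer index $m$, and then to pass from the countable union $\bigcup_{m\geq 0}\mathcal{S}_m^t$ to its closure $\mathcal{X}_t$ using the continuity clause of Assumption I1.

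The inductive claim, run simultaneously on both time periods, is that $g_1(\cdot,u)$ is identified on $\mathcal{S}_m^1$ and $g_2(\cdot,u)$ on $\mathcal{S}_m^2$ for every $u\in\mathcal{U}$. The base case $m=0$ uses Assumption I3 to fix $g_1(\bar{x},u)=u$, and then the first line of (\ref{QF}) at each $(\bar{x},x)\in\mathcal{X}_{12}$ with $x\in\mathcal{S}_0^2$ delivers $g_2(x,u)$ as a composition of observable conditional CDFs and quantiles. For the inductive step, any $x\in\mathcal{S}_{m+1}^1$ admits some $x_2\in\mathcal{S}_m^2$ with $(x,x_2)\in\mathcal{X}_{12}$; applying the first line of (\ref{QF}) at $(x,x_2)$ with $g_2(x_2,\cdot)$ known by the hypothesis identifies $g_1(x,\cdot)$, and the symmetric use of the second line of (\ref{QF}) identifies $g_2$ on $\mathcal{S}_{m+1}^2$ from the freshly identified $g_1$ on $\mathcal{S}_{m+1}^1$. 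Assumption I2 ensures that the relevant conditional CDFs and their quantile inverses are well defined at every needed argument, and Assumption I4(ii) guarantees that the argument $u$ ranges over the full common support $\mathcal{U}$.

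To extend identification from $\bigcup_{m\geq 0}\mathcal{S}_m^t$ to $\mathcal{X}_t$, I would use the continuity of $g_t(\cdot,u)$ granted by Assumption I1(i) whenever $X_t$ is continuously distributed: for any $x\in\mathcal{X}_t$ pick a sequence $x_n\in\bigcup_m\mathcal{S}_m^t$ with $x_n\to x$, and observe that any admissible alternative $\tilde{g}_t$ satisfying the same assumptions must coincide with $g_t$ at each $x_n$, hence at $x$ by continuity in $x$. When $X_t$ is discrete the points of $\mathcal{X}_t$ are isolated, so closure is vacuous and $\bigcup_m\mathcal{S}_m^t=\mathcal{X}_t$ already. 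I expect this last closure step to be the main obstacle: the induction itself is essentially bookkeeping given (\ref{QF}), but Assumption I1(i) only supplies continuity in $x$ in the continuously distributed case, so for $X_t$ with mixed continuous-discrete components one must treat the two pieces separately and verify that the chain $\mathcal{S}_m^t$ accumulates the correct limit points, which is exactly what the support hypothesis $\mathcal{X}_t=\overline{\bigcup_{m=0}^{\infty}\mathcal{S}_m^t}$ is engineered to provide.
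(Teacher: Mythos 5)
Your proposal is correct and follows essentially the same route as the paper's proof: the same induction along the sets $\mathcal{S}_m^t$ using the relations (\ref{QF}) derived from monotonicity and conditional stationarity, followed by the same passage to the closure via the continuity of $g_t$ in $x$ from Assumption I1(i). Your extra remarks on the discrete and mixed cases are a slightly more careful reading of I1(i) than the paper makes explicit, but they do not change the argument.
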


We also show the identification of $g_t$ under Assumption I3' instead of I3.

\begin{Corollary}
Suppose that Assumptions I1, I2, I3', and I4 are satisfied.
For all $t$, if $\mathcal{X}_t = \overline{\cup_{m=0}^{\infty} \mathcal{S}_m^t}$ holds for some $\overline{x} \in \mathcal{X}_1$, then the function $g_t(x,u)$ is identified for all $x\in\mathcal{X}_t$ and $u\in \mathcal{U}$.
\end{Corollary}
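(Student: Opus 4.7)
The plan is to reduce the Assumption I3' setting to the Assumption I3 setting via a strictly monotone reparametrization of the unobservable, invoke Theorem 1 for the reparametrized model, and then exploit the uniform-marginal constraint from I3' to pin down the reparametrization and thus $g_t$ itself.

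First, I would fix the $\bar{x}\in\mathcal{X}_1$ supplied by the hypothesis and introduce the map $\phi(u)\equiv g_1(\bar{x},u)$, which is continuous and strictly increasing in $u$ by Assumption I1. Setting $\tilde{U}_t\equiv \phi(U_t)$ and $\tilde{g}_t(x,u)\equiv g_t(x,\phi^{-1}(u))$ yields the equivalent observational representation $Y_t=\tilde{g}_t(X_t,\tilde{U}_t)$. The next step is to verify that $(\tilde{g}_t,\tilde{U}_t)$ satisfies Assumptions I1--I4: continuity and strict monotonicity in $u$ are preserved by composition with $\phi^{-1}$; Assumption I2 depends only on the law of $Y_t\mid\mathbf{X}$, which is unchanged; Assumption I3 holds by construction because $\tilde{g}_1(\bar{x},u)=\phi(\phi^{-1}(u))=u$; and since $\phi$ is a $t$-independent strictly increasing map, the conditional-stationarity statement $\tilde{U}_1\mid\mathbf{X}\overset{d}{=}\tilde{U}_2\mid\mathbf{X}$ and the common-support property (with $\tilde{\mathcal{U}}=\phi(\mathcal{U})$) follow from those of $U_t$. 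Because the sets $\mathcal{S}_m^t$ depend only on $\bar{x}$ and $\mathcal{X}_{12}$, the support hypothesis transfers verbatim, so Theorem 1 identifies $\tilde{g}_t(x,u)$ for every $x\in\mathcal{X}_t$ and $u\in\tilde{\mathcal{U}}$.

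To finish I would recover $\phi$ and then read off $g_t(x,u)=\tilde{g}_t(x,\phi(u))$ on $u\in[0,1]=\mathcal{U}$. Since $U_t$ is Uniform$[0,1]$ marginally by Assumption I3', the marginal quantile function of $\tilde{U}_t=\phi(U_t)$ equals $\phi$ itself. On the observational side, once $\tilde{g}_t$ is identified the conditional CDF $F_{\tilde{U}_t\mid\mathbf{X}}(y\mid\mathbf{x})=F_{Y_t\mid\mathbf{X}}(\tilde{g}_t(x_t,y)\mid\mathbf{x})$ is identified for every $\mathbf{x}\in\mathcal{X}_{12}$ and $y\in\tilde{\mathcal{U}}$, and averaging against the observable $F_{\mathbf{X}}$ identifies the marginal $F_{\tilde{U}_t}$, whose inverse is $\phi$.

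The only step that is not pure bookkeeping is checking that Assumption I4 transports through the reparametrization, and this reduces to the observation that $\phi$ depends on neither $t$ nor $\mathbf{x}$ and is a homeomorphism, so both equality in conditional distribution and full conditional support are preserved. Assembling the two identified pieces $\tilde{g}_t$ and $\phi$ then produces $g_t$ and completes the Corollary.
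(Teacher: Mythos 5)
Your proposal is correct and follows essentially the same route as the paper: your reparametrized function $\tilde{g}_t(x,\cdot)$ is exactly the paper's map $T^*_{t,x}$ satisfying $g_t(x,u)=T^*_{t,x}(g_1(\bar{x},u))$, which the paper builds from the Theorem 1 recursion and extends to $\overline{\cup_{m=0}^{\infty}\mathcal{S}^t_m}$ by continuity. Your recovery of $\phi=g_1(\bar{x},\cdot)$ as the inverse of the integrated conditional distribution $\int F_{Y_t|X_t}\left(\tilde{g}_t(x',\cdot)\,|\,x'\right)dF_{X_t}(x')$ uses the Uniform$[0,1]$ marginal in the same way as the paper's inversion of $G^t_x$, so the two arguments coincide up to bookkeeping.
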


This identification approach is similar to that of \cite{d2015identification}, \cite{torgovitsky2015identification}, and \cite{ishihara2017partial}, who all identify nonseparable IV models.
\cite{d2015identification} and \cite{ishihara2017partial} use the same normalization as Assumption I3'.
\cite{d2015identification} show that under appropriate assumptions, if for all $x$ and $x'$, we identify the function $T_{x',x}(y)$ that is strictly increasing in $y$ and satisfies
$$
g(x',u) = T_{x',x}(g(x,u)),
$$
then we can identify the structural function $g(x,u)$.
We can also construct similar functions and show that $g_t$ is point identified.

We next introduce some examples that satisfy this support condition.

\begin{Example}[DID model]
In standard DID models, if $X_t$ is a treatment indicator, then we have $\mathcal{X}_{12}=\{(0,0),(0,1)\}$.
Because $\mathcal{X}_1=\{0\}$, we assume $\bar{x}=0$.
That is, $g_1(0,u)=u$ for all $u$.
Hence, we identify $g_1(x,u)$ for all $x \in \mathcal{X}_1$ and $u \in \mathcal{U}$.
Then, because $\mathcal{S}^2_0=\{0,1\}=\mathcal{X}_{12}$, the support condition of Theorem 1 holds and we can identify $g_2(x,u)$ for all $x \in \mathcal{X}_2$ and $u \in \mathcal{U}$.

Our identification approach does not require the joint distribution of $(Y_1,Y_2)$.
Hence, if we can observe $D\equiv \mathbf{1}\{X_2=1\}$, then we can identify the structural function $g_t$ by using repeated cross-sections.
If the potential outcome $Y_t(x)$ is equal to $g_t(x,U_t)$, then this setting is similar to \cite{athey2006identification}.

Similar to \cite{athey2006identification}, we can also identify the counterfactual distribution even when $\mathcal{X}_{12} \neq \{(0,0),(0,1)\}$.
Let $Y_t(x) \equiv g_t(x,U_t)$ denote the potential outcomes.
Then, we can identify $F_{Y_2(x)|X_2}(y|x')$, where $x \neq x'$.
Suppose that there exists $x_1 \in \mathcal{X}_1$ such that $(x_1,x), (x_1,x') \in \mathcal{X}_{12}$.
In this case, it follows from (\ref{QF}) that
\begin{eqnarray}
& & F_{Y_1|\mathbf{X}=(x_1,x')}\left( Q_{Y_1|\mathbf{X}=(x_1,x)}(F_{Y_2|\mathbf{X}=(x_1,x)}(y)) \right) \nonumber \\
&=& F_{Y_1|\mathbf{X}}\left( g_1(x_1,g_2^{-1}(x,y)) |x_1,x' \right) \nonumber \\
&=& P\left( g_1(x_1,U_1) \leq g_1(x_1,g_2^{-1}(x,y)) | X_1=x_1, X_2=x' \right) \nonumber \\
&=& P\left( g_2(x,U_1) \leq y | X_1=x_1, X_2=x' \right) \nonumber \\
&=& P\left( Y_2(x) \leq y | X_1=x_1, X_2=x' \right). \nonumber 
\end{eqnarray}
Hence, we can obtain $F_{Y_2(x)|X_2}(y|x')$ by integrating out $x_1$.
The left-hand side is similar to the counterfactual distribution of \cite{athey2006identification}.
When $\mathcal{X}_{12} = \{(0,0), (0,1)\}$, this result is same as their result.
\end{Example}

\begin{Example}[connected support]
When the interior of $\mathcal{X}_{12}$ is connected, the support condition of Theorem 1 holds.
Because the interior of $\mathcal{X}_{12}$ is connected, for any $x \in \mathcal{X}_1$, there exists a series $(x_1^0,x_2^0),(x_1^1,x_2^0),(x_1^1,x_2^1),(x_1^2, x_2^1),(x_1^2,x_2^2) \cdots$ such that $x_1^0=\bar{x}$, $(x_1^m,x_2^m), (x_1^{m+1},x_2^m) \in \mathcal{X}_{12}$ for all $m = 0, 1, \cdots$, and $\lim_{m \rightarrow \infty} x_1^m = x$.
Figure 2 illustrates this result intuitively.
From the definition of $\mathcal{S}^1_m$, $x_1^m \in \mathcal{S}_1^m$ for all $m$.
Hence, we have $x \in \overline{\cup_{m=0}^{\infty} \mathcal{S}_m^1}$, and the support condition of Theorem 1 holds.
\end{Example}

The support condition of Theorem 1 rules out the case where $X_1=X_2$.
Hence, if the explanatory variables do not vary across time periods, such as sex or race, this support condition does not hold.

If we have panel data with more than two periods, we can relax this support condition.
Similar to the case where $T=2$, we define the following sets.
Define $\tilde{\mathcal{S}}^1_0 \equiv \{\bar{x}\}$, $\tilde{\mathcal{S}}^t_0 \equiv \{x \in \mathcal{X}_t : (\bar{x},x) \in supp(X_1,X_t)\}$, $t=2, \cdots , T$, and for $m \geq 1$ and $t = 1, \cdots , T$,
\begin{eqnarray}
\tilde{\mathcal{S}}^t_m \equiv \bigcup_{s \neq t} \{ x \in \mathcal{X}_t : \text{there exists $x_s \in \tilde{\mathcal{S}}^s_{m-1}$ such that $(x_s,x) \in supp(X_s,X_t)$.} \}. \nonumber
\end{eqnarray}
Then, we show that $g_t(x,u)$ is point-identified under a similar support condition to that of Theorem 1.

\begin{Corollary}
Suppose Assumptions I1, I2, I3, and I4 are satisfied for $T\geq 3$.
For $t = 1, \cdots, T$, if $\mathcal{X}_t = \overline{\cup_{m=0}^{\infty} \tilde{\mathcal{S}}^t_m}$, then the function $g_t(x,u)$ is identified for all $x \in \mathcal{X}_t$ and $u \in \mathcal{U}$.
\end{Corollary}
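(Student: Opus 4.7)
My plan is to mimic the proof of Theorem 1, but with the alternation now running over arbitrary pairs of time periods rather than just $(1,2)$. The key preliminary step is to establish the analogue of (\ref{QF}) for any ordered pair $(s,t)$ with $s\neq t$ and any $(x_s,x_t)\in supp(X_s,X_t)$:
$$
g_s(x_s,u)=Q_{Y_s|X_s,X_t}\bigl(F_{Y_t|X_s,X_t}(g_t(x_t,u)\mid x_s,x_t)\bigm| x_s,x_t\bigr).
$$
To derive this I would first observe that Assumption I4, applied to the full vector $\mathbf{X}=(X_1,\dots,X_T)$, gives $U_s\mid\mathbf{X}=\mathbf{x}\overset{d}{=}U_t\mid\mathbf{X}=\mathbf{x}$, and then marginalize out the remaining coordinates $X_r$ for $r\notin\{s,t\}$ to conclude that $U_s\mid X_s,X_t\overset{d}{=}U_t\mid X_s,X_t$ on $supp(X_s,X_t)$. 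The continuity and strict monotonicity of $F_{Y_t\mid X_s,X_t}$ required to invert it follow by averaging the corresponding property of $F_{Y_t\mid\mathbf{X}}$ from Assumption I2 over the other coordinates. Once these pieces are in place, repeating the five-line computation that precedes (\ref{QF}) yields the displayed equation.

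With the pairwise identification relation available, the induction on $m$ is straightforward. For the base case, Assumption I3 gives $g_1(\bar{x},u)=u$, and the pairwise relation with $s=1$, $x_s=\bar{x}$ identifies $g_t(x,u)$ for every $t\geq 2$ and every $x\in\tilde{\mathcal{S}}^t_0$. For the inductive step, suppose $g_s(x,u)$ has been identified for all $s$ and all $x\in\tilde{\mathcal{S}}^s_{m-1}$. Fix any $t$ and any $x\in\tilde{\mathcal{S}}^t_m$. By the definition of $\tilde{\mathcal{S}}^t_m$ there is some $s\neq t$ and $x_s\in\tilde{\mathcal{S}}^s_{m-1}$ with $(x_s,x)\in supp(X_s,X_t)$, and the pairwise relation then recovers $g_t(x,u)$ from the already-identified $g_s(x_s,u)$.

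Iterating the induction shows that $g_t(x,u)$ is identified at every $x\in\bigcup_{m=0}^{\infty}\tilde{\mathcal{S}}^t_m$. If $X_t$ is continuously distributed, Assumption I1(i) gives continuity of $g_t(\cdot,u)$, so identification extends to the closure; if $X_t$ is discrete, its support is already closed and the closure adds nothing. The support condition $\mathcal{X}_t=\overline{\cup_{m=0}^{\infty}\tilde{\mathcal{S}}^t_m}$ then delivers the conclusion. The main obstacle I anticipate is the marginalization argument for pairwise conditional stationarity and the corresponding verification that the bivariate conditional $F_{Y_t\mid X_s,X_t}$ inherits continuity and strict monotonicity from Assumption I2; everything after that step is essentially bookkeeping that parallels the $T=2$ case of Theorem 1.
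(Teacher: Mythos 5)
Your proposal is correct and follows essentially the same route as the paper, which simply notes that the proof of Corollary 2 is the same as that of Theorem 1: replace the single pair of periods by arbitrary pairs $(s,t)$, alternate through the sets $\tilde{\mathcal{S}}^t_m$, and pass to the closure by continuity of $g_t$ in $x$. The only detail you add beyond the paper's one-line proof is the (valid) marginalization step showing that the conditional stationarity and the continuity/strict monotonicity of the conditional distribution of $Y_t$ carry over from conditioning on the full vector $\mathbf{X}$ to conditioning on the pair $(X_s,X_t)$, which correctly fills in the bookkeeping needed to match the pairwise-support definition of $\tilde{\mathcal{S}}^t_m$.
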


\section{Estimation and Inference}

In the previous sections, we considered nonparametric identification.
In this section, we assume that the admissible collection of structural functions is indexed by a finite-dimensional parameter.
Consider the following parametric model:
\begin{equation}
Y_{it} = g_t(X_{it},U_{it};\theta_0) \ \ \ \ \ i = 1, \cdots, n, \ \  t=1, \cdots, T. \label{model_p}
\end{equation}
The outcome functions are parameterized by $\theta \in \Theta \subset \mathbb{R}^{d_{\theta}}$, where $\theta_0 \in \Theta$ is the true parameter.
We assume that $\{(\mathbf{Y}_i, \mathbf{X}_i)\}_{i=1}^n$ are independent and identically distributed.

Indeed, \cite{torgovitsky2017minimum} consider a similar setting, and develop an estimator based on the identification result of \cite{torgovitsky2015identification}.
Following \cite{torgovitsky2017minimum}, we develop a minimum distance estimator based on our identification results.

The following assumptions are the parametric versions of Assumptions I1, I2, I3, and I4.

\begin{Assumption_P1}
(i) For all $t$, the function $g_t(x,u;\theta)$ is continuous and strictly increasing in $u$ for all $\theta \in \Theta$.
(ii) For all $t$, $U_{it}|\mathbf{X}_i=\mathbf{x}$ is continuously distributed for all $\mathbf{x}$.
\end{Assumption_P1}

\begin{Assumption_P2}
For all $t$ and $\mathbf{x} \in supp(\mathbf{X})$, the conditional distribution of $Y_{it}$ conditional on $\mathbf{X}_i=\mathbf{x}$ is continuous and strictly increasing.
\end{Assumption_P2}

\begin{Assumption_P3}
(i) For some $\bar{x} \in \mathcal{X}_1$, $g_1(\bar{x},u;\theta)=u$ holds for all $u\in \mathcal{U}$ and $\theta \in \Theta$.
(ii) For all $\theta, \theta' \in \Theta$ with $\theta \neq \theta'$, we have $g_t(\cdot,\cdot;\theta) \neq g_t(\cdot,\cdot;\theta')$ for some $t$.
\end{Assumption_P3}

\begin{Assumption_P4}
(i) For all $\mathbf{x} \in supp(\mathbf{X})$ and $s,t \in \{1, \cdots, T\}$, we have $U_{is} | \mathbf{X}_i=\mathbf{x} \overset{d}{=} U_{it} | \mathbf{X}_i=\mathbf{x}$.
(ii) The support of $U_{it}|\mathbf{X}_i=\mathbf{x}$ is $\mathcal{U}$.
\end{Assumption_P4}

These assumptions are similar to the assumptions in Section 2.
Assumption P3(ii) allows that $g_t(x,u;\theta)$ does not depend on some part of $\theta$.
For example, consider $\theta = (\theta_1,\theta_2, \cdots, \theta_T)$.
Then, this condition allows that $g_t$ depends exclusively on $\theta_t$.

Similar to Section 2, we suppose $T=2$.
Under Assumptions P1--P4 and the support condition of Theorem 1, it follows from Theorem 1 that
\begin{equation}
U_{i1,\theta} | \mathbf{X}_i = \mathbf{x} \overset{d}{=}  U_{i2,\theta} | \mathbf{X}_i = \mathbf{x} \ \ \text{for all } \mathbf{x} \ \ \Leftrightarrow \ \ \theta = \theta_0, \label{identification condition}
\end{equation}
where $U_{it,\theta} \equiv g_t^{-1}(X_{it},Y_{it};\theta)$, $t = 1,2$.
Therefore, (\ref{identification condition}) implies that the function
\begin{eqnarray}
D_{\theta}(v) &\equiv & P\left( U_{i1,\theta} \leq v_u, \ \mathbf{X}_i \leq v_{\mathbf{x}} \right) - P\left( U_{i2,\theta} \leq v_u, \ \mathbf{X}_i \leq v_{\mathbf{x}} \right) \nonumber \\
&=& E\left[ \left( \mathbf{1}\{Y_{i1} \leq g_1(X_{i1},v_u;\theta)\}-\mathbf{1}\{Y_{i2} \leq g_2(X_{i2},v_u;\theta)\} \right) \mathbf{1}\{\mathbf{X}_i \leq v_{\mathbf{x}}\} \right] \label{D}
\end{eqnarray}
is zero for all $v=(v_{\mathbf{x}},v_u) \in \mathcal{V} \equiv \mathcal{X}_{12} \times \mathcal{U}$ if and only if $\theta=\theta_0$.
Let $\|\cdot\|_{\mu}$ denote the $L_2$-norm with respect to a probability measure with support $\mathcal{V}$.
Then, we have $\|D_{\theta}\|_{\mu} \geq 0$ and
\begin{equation}
\|D_{\theta}\|_{\mu} = 0 \ \ \Leftrightarrow \ \ \theta=\theta_0. \label{identification}
\end{equation}
Hence, $\theta_0$ is the value that provides a global minimum for $\|D_{\theta}\|_{\mu}$.

We construct the estimator $\hat{D}_{n,\theta}(v)$ of $D_{\theta}(v)$ as a sample analogue of (\ref{D}):
\begin{equation}
\hat{D}_{n,\theta}(v) \equiv \frac{1}{n} \sum_{i=1}^n \left( \mathbf{1}\{Y_{i1} \leq g_1(X_{i1},v_u;\theta)\}-\mathbf{1}\{Y_{i2} \leq g_2(X_{i2},v_u;\theta)\} \right) \mathbf{1}\{\mathbf{X}_i \leq v_{\mathbf{x}}\}. \label{D_hat}
\end{equation}
This is a natural estimator of $D_{\theta}(v)$.
We can obtain the estimator $\hat{\theta}_n$ by minimizing $\|\hat{D}_{n,\theta}\|_{\mu}$.
That is,
\begin{equation}
\hat{\theta}_n = \text{arg}\min_{\theta \in \Theta} \|\hat{D}_{n,\theta}\|_{\mu}. \label{estimator}
\end{equation}
This estimator is similar to the estimators proposed by \cite{brown2002weighted} and \cite{torgovitsky2017minimum}.
They prove the consistency and the asymptotic normality of their estimators, and also show the consistency of the nonparametric bootstrap.
In what follow, we likewise prove the consistency and the asymptotic normality of our estimator, and show that the nonparametric bootstrap is consistent.

First, we collect the observable data together into a single vector, $W_i=(\mathbf{Y}_i,\mathbf{X}_i)=(Y_{i1},Y_{i2},X_{i1},X_{i2})$.
Next, we define 
$$
A_{\theta}^v(w) \equiv \left[ \mathbf{1}\{y_1 \leq  g_1(x_1,v_u;\theta)\}-\mathbf{1}\{y_2 \leq g_2(x_2,v_u;\theta)\} \right] \mathbf{1}\{\mathbf{x} \leq v_{\mathbf{x}}\},
$$
where $w = (y_1,y_2,x_1,x_2)$.
Then, $\hat{D}_{n,\theta}(v) = \frac{1}{n} \sum_{i=1}^n A_{\theta}^v(W_i)$.

\subsection{Consistency}

Here, we demonstrate the consistency of $\hat{\theta}_n$.
Under condition (\ref{identification}), the following assumptions are sufficient for $\hat{\theta}_n$ to be consistent.

\begin{Assumption_C1}
$\hat{\theta}_n$ satisfies $\|\hat{D}_{n,\hat{\theta}_n}\|_{\mu} = \inf_{\theta \in \Theta} \|\hat{D}_{n,\theta}\|_{\mu}$.
\end{Assumption_C1}

\begin{Assumption_C2}
$\Theta$ is compact.
\end{Assumption_C2}

\begin{Assumption_C3}
For all $\theta',\theta$, $|g_t(x,u;\theta')-g_t(x,u;\theta)|\leq \bar{g}(x)\|\theta'-\theta\|$ holds for some strictly positive $\bar{g}(x)$ with $E[\bar{g}(X_t)] \leq K$, where $0<K<\infty$.
\end{Assumption_C3}

\begin{Assumption_C4}
For all $t$, $Y_{it}$ is absolutely continuously distributed given $\mathbf{X}_i$, with a conditional pdf $f_{Y_t|\mathbf{X}}(y|\mathbf{x})$ that is uniformly bounded above and continuous in $y$.
\end{Assumption_C4}

\begin{Assumption_C5}
For all $t$, there exists an integer $J_t$ and functions $\{\beta_j\}_j^{J_t}$ such that for every $\theta \in \Theta$ and $u \in \mathcal{U}$ there is an $\alpha^t(\theta,u) \in \mathbb{R}^{J_t}$ with $g_t(x,u)= \sum_{j=1}^{J_t} \alpha_j^t(\theta, u) \beta_j(x)$.
\end{Assumption_C5}

Assumption C1 entails that $\hat{\theta}_n$ minimizes $\|\hat{D}_{n,\theta}\|_{\mu}$.
Assumptions C3 and C4 imply that $\|D_{\theta}\|_{\mu}$ is continuous in $\theta$.
Assumption C5 ensures that a class of functions, $\{A_{\theta}^v:\theta \in \Theta, v \in \mathcal{V}\}$, is $P$-Glivenko--Cantelli.
Hence, we show that $\|\hat{D}_{n,\theta}\|_{\mu}$ uniformly convergences to $\|D_{\theta}\|_{\mu}$ almost surely.
\cite{brown2002weighted} and \cite{torgovitsky2017minimum} also make similar assumptions.
From these results and the compactness of $\Theta$, we show the consistency from the usual arguments of extremum estimators (e.g., \cite{newey1994large}).

\begin{Theorem}
Under Assumptions P1--P4, C1--C5, and (\ref{identification}), we have $\hat{\theta}_n \rightarrow_{a.s.} \theta_0$.
\end{Theorem}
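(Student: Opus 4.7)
The strategy is the standard Newey--McFadden route for extremum estimators: I will verify (a) uniform almost-sure convergence of the sample criterion $\|\hat D_{n,\theta}\|_\mu$ to the population criterion $\|D_\theta\|_\mu$ over $\Theta$, (b) continuity of $\|D_\theta\|_\mu$ in $\theta$, and (c) that $\theta_0$ is the well-separated unique minimizer, so that any sequence of approximate minimizers (Assumption C1) converges to $\theta_0$. Compactness (C2) then closes the argument.

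For (a), the natural route is to show that the class $\mathcal F \equiv \{A_\theta^v : \theta\in\Theta,\, v\in\mathcal V\}$ is $P$--Glivenko--Cantelli. Using Assumption C5, each set $\{w : y_t \le g_t(x_t,v_u;\theta)\}$ can be written as $\{w : y_t - \sum_{j=1}^{J_t} \alpha_j^t(\theta,v_u)\beta_j(x_t) \le 0\}$, i.e., a half-space in the transformed coordinates $(y_t,\beta_1(x_t),\ldots,\beta_{J_t}(x_t))$ as $(\theta,v_u)$ varies; hence this collection of sets has finite VC dimension. The rectangle class $\{\mathbf 1\{\mathbf x \le v_\mathbf x\}\}$ is also VC, and boolean combinations (differences of indicators multiplied by another indicator) preserve a finite VC index. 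Since $|A_\theta^v|\le 1$, $\mathcal F$ has a trivial envelope and is therefore $P$--Glivenko--Cantelli, giving $\sup_{\theta,v}|\hat D_{n,\theta}(v)-D_\theta(v)|\to 0$ a.s. By dominated convergence for the $L_2(\mu)$ norm (integrands bounded by $4$), this upgrades to $\sup_{\theta\in\Theta}\bigl|\|\hat D_{n,\theta}\|_\mu-\|D_\theta\|_\mu\bigr|\to 0$ a.s.

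For (b), I will use Assumptions C3 and C4. By conditioning on $\mathbf X$ and then applying the mean value theorem to $F_{Y_t\mid\mathbf X}$ with the uniform density bound $\bar f$,
\begin{equation*}
\bigl|D_{\theta'}(v)-D_\theta(v)\bigr| \;\le\; \sum_{t=1}^{2} \bar f\,E\!\left[\,|g_t(X_t,v_u;\theta')-g_t(X_t,v_u;\theta)|\,\right] \;\le\; 2\bar f K\,\|\theta'-\theta\|,
\end{equation*}
uniformly in $v\in\mathcal V$. Taking $L_2(\mu)$ norms and using the reverse triangle inequality, $\theta\mapsto\|D_\theta\|_\mu$ is Lipschitz, hence continuous on $\Theta$.

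For (c), condition (\ref{identification}) together with continuity and compactness yields a well-separated minimum: for every $\varepsilon>0$, $\inf_{\|\theta-\theta_0\|\ge\varepsilon}\|D_\theta\|_\mu>0$, since otherwise a convergent subsequence of minimizers would violate (\ref{identification}). Combining this with the uniform convergence in (a) and Assumption C1 via the standard extremum-estimator argument (e.g., Theorem 2.1 of \cite{newey1994large}) gives $\hat\theta_n\to_{a.s.}\theta_0$. I expect the main technical hurdle to be step (a)---specifically, verifying the VC/Glivenko--Cantelli property cleanly---because $A_\theta^v$ mixes indicators whose thresholds depend nonlinearly on $(\theta,v_u)$ through the coefficient map $\alpha^t(\theta,v_u)$; Assumption C5 is precisely what reduces this to a finite-dimensional halfspace index and makes the VC bound available.
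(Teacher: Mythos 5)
Your proposal is correct and follows essentially the same route as the paper: uniform almost-sure convergence of $\|\hat{D}_{n,\theta}\|_{\mu}$ via the Glivenko--Cantelli property of $\{A_{\theta}^v\}$ (obtained from Assumption C5 through VC-class and stability arguments, as in the paper's Lemma 2), Lipschitz continuity of $\|D_{\theta}\|_{\mu}$ from C3--C4 (the paper's Lemma 1), and then the well-separated-minimum/compactness argument combined with C1 in the standard extremum-estimator fashion. The only cosmetic difference is that the paper concludes by bounding $\|D_{\hat{\theta}_n}\|_{\mu}$ directly with the triangle inequality rather than invoking the Newey--McFadden theorem, and your appeal to dominated convergence is unnecessary since $\sup_{\theta}\|\hat{D}_{n,\theta}-D_{\theta}\|_{\mu}$ is bounded by the uniform sup-norm deviation because $\mu$ is a probability measure.
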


\subsection{Asymptotic Normality}

Because the objective function $\|\hat{D}_{n,\theta}\|_{\mu}$ is not differentiable in $\theta$, our approach follows from \cite{pakes1989simulation}.
Similarly, although $\hat{D}_{n,\theta}(v)$ is not differentiable in $\theta$, we also assume $D_{\theta}(v)$ is differentiable in $\theta$.
We let $\nabla_x f$ denote the column vector of partial derivatives of $f$ with respect to $x$.
We define $\Gamma_{\theta}(v) \equiv \nabla_\theta D_{\theta}(v)$ and $\Gamma_{0}(v) \equiv \Gamma_{\theta_0}(v)$.

\begin{Assumption_N1}
$\theta_0$ is an interior point of $\Theta$.
\end{Assumption_N1}

\begin{Assumption_N2}
For all $t$, $g_t(x,u;\theta)$ is continuously differentiable in $\theta$ in the neighborhood of $\theta_0$.
In the neighborhood of $\theta_0$, $| \nabla_{\theta} g_t(x,u;\theta) |$ is bounded by some positive function $\nabla\bar{g}(x)$ with $E|\nabla \bar{g}(X_t)| < \infty$.
\end{Assumption_N2}

\begin{Assumption_N3}
(i) There exists $c >0$ such that $\| \Gamma_0(v)'a \|_{\mu} \geq c \|a\|$ for all $a \in \mathbb{R}^{d_{\theta}}$.
(ii) $\{\Gamma_{\theta}(v):v\in \mathcal{V}\}$ is equicontinuous in $\theta$ at $\theta_0$.
(iii) $\int \|\Gamma_0(v)\|^2 d\mu(v) < \infty$.
\end{Assumption_N3}

Assumption N1 is a standard assumption.
Combined with Assumption C4, Assumption N2 implies that $D_{\theta}(v)$ is continuously differentiable in $\theta$ in the neighborhood of $\theta_0$.
Assumption N3(i) is a rank condition that
corresponds to Assumption D4 in \cite{torgovitsky2017minimum}.
Assumption N3(ii) implies that $\Gamma_0(v)'(\theta - \theta_0)$ approximates $D_{\theta}(v)$ in the neighborhood of $\theta_0$ uniformly over $v$.

\begin{Theorem}
Under assumptions P1--P4, C1--C5, N1--N3, and (\ref{identification}),
$$
\sqrt{n}(\hat{\theta}_n - \theta_0) \rightsquigarrow N(0,\Delta_0^{-1}\Sigma_0\Delta_0^{-1}),
$$
where $\Delta_0 \equiv \int \Gamma_0(v)\Gamma_0(v)'d\mu(v)$ and
$$
\Sigma_0 \equiv \int \int_{\mathcal{V}\times \mathcal{V}} \left\{ \Psi(v,v') \Gamma_0(v) \Gamma_0(v')' \right\} d\mu(v) d\mu(v')
$$
with $\Psi(v,v') \equiv E[A_{\theta_0}^v(W)A_{\theta_0}^{v'}(W)]$.
\end{Theorem}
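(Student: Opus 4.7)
The plan is to apply the framework of Pakes and Pollard (1989) for $M$-estimators whose criterion function is not differentiable in $\theta$. The overall strategy has three ingredients: (i) a deterministic linear approximation of $D_\theta$ around $\theta_0$, (ii) a functional central limit theorem for $\sqrt{n}(\hat{D}_{n,\theta}-D_\theta)$ on $\Theta\times\mathcal{V}$, and (iii) a stochastic equicontinuity bound that allows us to replace $\hat{D}_{n,\theta}-\hat{D}_{n,\theta_0}$ by its mean. Combining these with the rank condition N3(i) and the consistency already established in Theorem 2 yields the asymptotic linear representation of $\hat\theta_n-\theta_0$.

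First I would verify that $D_\theta(v)$ is continuously differentiable in $\theta$ near $\theta_0$. Because the indicator inside $A_\theta^v$ is smoothed by integration against the conditional density $f_{Y_t\mid\mathbf{X}}$, Assumptions N2 and C4 permit differentiation under the integral, giving $\nabla_\theta D_\theta(v)=\Gamma_\theta(v)$. Assumption N3(ii) then upgrades this to the uniform expansion
\begin{equation*}
\sup_{v\in\mathcal V}\bigl|D_\theta(v)-\Gamma_0(v)'(\theta-\theta_0)\bigr|=o(\|\theta-\theta_0\|)\quad\text{as }\theta\to\theta_0.
\end{equation*}
Next I would show that the class $\mathcal F=\{A_\theta^v:\theta\in\Theta,\,v\in\mathcal V\}$ is $P$-Donsker. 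Assumption C5 writes $g_t(x,u;\theta)=\sum_{j=1}^{J_t}\alpha_j^t(\theta,u)\beta_j(x)$, so each set $\{y_t\le g_t(x_t,v_u;\theta)\}$ lies in a VC class of subgraphs parameterized by the finite-dimensional vector $(\alpha^t(\theta,v_u),v_{\mathbf{x}})$. A product/difference of uniformly bounded VC-indicators is Donsker with a constant envelope, so $\sqrt n(\hat D_{n,\cdot}-D_\cdot)$ converges weakly in $\ell^\infty(\Theta\times\mathcal V)$ to a tight Gaussian process $\mathbb G$ with covariance $\Psi$.

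From Donskerness and the $L_2$-continuity of $\theta\mapsto A_\theta^v$ (which follows from C3--C4 bounding $E[(A_\theta^v-A_{\theta_0}^v)^2]$ by a constant times $\|\theta-\theta_0\|$) I would obtain the stochastic equicontinuity conclusion
\begin{equation*}
\sup_{\|\theta-\theta_0\|\le\delta_n}\sqrt n\,\bigl\|\hat D_{n,\theta}-\hat D_{n,\theta_0}-D_\theta\bigr\|_\mu=o_p(1)
\end{equation*}
for any $\delta_n\downarrow0$. Combining with the linear expansion of $D_\theta$ gives the key representation
\begin{equation*}
\hat D_{n,\theta}(v)=\hat D_{n,\theta_0}(v)+\Gamma_0(v)'(\theta-\theta_0)+R_n(\theta,v),
\end{equation*}
with $\sup_v|R_n(\theta,v)|=o_p(n^{-1/2})+o(\|\theta-\theta_0\|)$ on shrinking neighborhoods of $\theta_0$. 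Substituting $\theta=\hat\theta_n$ into $\|\hat D_{n,\theta}\|_\mu^2$, expanding the inner product, and using that $\hat\theta_n$ is an approximate minimizer yields the first-order representation
\begin{equation*}
\Delta_0(\hat\theta_n-\theta_0)=-\int\Gamma_0(v)\hat D_{n,\theta_0}(v)\,d\mu(v)+o_p(n^{-1/2}),
\end{equation*}
where $\Delta_0$ is invertible by N3(i) and N3(iii). A standard CLT, together with Fubini applied to $\int\Gamma_0(v)A_{\theta_0}^v(W_i)\,d\mu(v)$ whose second moment equals $\Sigma_0$, delivers $\sqrt n(\hat\theta_n-\theta_0)\rightsquigarrow N(0,\Delta_0^{-1}\Sigma_0\Delta_0^{-1})$.

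The main obstacle I anticipate is verifying the stochastic equicontinuity uniformly in both arguments $(\theta,v)$: the criterion involves indicators of events whose defining functions depend non-smoothly on $\theta$, so a pointwise Lipschitz bound is unavailable. The resolution hinges critically on Assumption C5, which reduces the relevant class of subgraphs to a VC class parameterized by $(\alpha^t(\theta,u),v_{\mathbf{x}})$ of fixed finite dimension; once this VC bound is in hand, Pakes--Pollard-type maximal inequalities, combined with the $L_2$-modulus control from C3--C4, close the argument.
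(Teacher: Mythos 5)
Your proposal is correct and follows essentially the same route as the paper: a Pakes--Pollard-type argument with the VC/Donsker class from C5, the stochastic equicontinuity bound, the uniform linearization $D_\theta(v)=\Gamma_0(v)'(\theta-\theta_0)+o(\|\theta-\theta_0\|)$ from N2--N3, and the rank condition N3(i) delivering the asymptotically linear representation, CLT, and Fubini step (the paper implements the final step via the auxiliary minimizer $\tilde{\theta}_n$ of the linearized criterion $\hat{L}_{n,\theta}$, which is just a packaged version of your direct expansion of $\|\hat{D}_{n,\theta}\|_{\mu}^2$). The only point you compress is the intermediate rate argument: before linearizing one must show $\|\hat{\theta}_n-\theta_0\|=O_p(n^{-1/2})$ (the paper gets $\|D_{\hat{\theta}_n}\|_{\mu}=O_p(n^{-1/2})$ and then invokes $\|D_\theta\|_{\mu}\geq (c-o(1))\|\theta-\theta_0\|$ from N3(i)), since otherwise the $o(\|\theta-\theta_0\|)$ remainder in your key representation cannot be absorbed into $o_p(n^{-1/2})$.
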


The proof of Theorem 3 is similar to the proof in \cite{pakes1989simulation} for their Theorem (3.3).

The asymptotic distribution of $\sqrt{n}(\hat{\theta}_n-\theta_0)$ depends on the probability measure $\mu$.
\cite{carrasco2000generalization} consider the generalized method of moments (GMM) procedure with a continuum of moment conditions, obtaining the optimal estimator.
They consider the following type of GMM estimator to minimize
$$
\int \int \hat{D}_{n,\theta}(v) a_n(v,v') \hat{D}_{n,\theta}(v') dv dv',
$$
where $a_n(v,v')$ converges to a kernel $a(v,v')$.
As in \cite{torgovitsky2017minimum}, we consider only the special case where $a_n(v,v')=a(v,v')$ and $a(v,v')=0$ for $v\neq v'$.
Although our setting appears to be similar to that of \cite{carrasco2000generalization}, their approach is not directly applicable because their objective function is smooth.
Hence, we do not pursue this problem.

\subsection{Bootstrap}

Let $\{W_{in}^*\}_{i=1}^n$ denote a bootstrap sample drawn with replacement from $\{W_i\}_{i=1}^n$.
That is, $\{W_{in}^*\}_{i=1}^n$ are independently and identically distributed from the empirical measure $P_n$, conditional on the realizations $\{W_i\}_{i=1}^n$.
We define
$$
\hat{D}^*_{n,\theta}(v) \equiv \frac{1}{n} \sum_{i=1}^n A_{\theta}^v(W_{in}^*)
$$
as the bootstrap counterpart to $\hat{D}_{n,\theta}(v)$.
Next, we suppose that $\hat{\theta}_n^*$ satisfies
\begin{equation}
\|\hat{D}^*_{n,\hat{\theta}_n^*} \|_{\mu} = \inf_{\theta \in \Theta} \|\hat{D}^*_{n,\theta}\|. \label{bootstrap estimator}
\end{equation}
Then, we can obtain the following theorem.

\begin{Theorem}
Suppose that $\hat{\theta}_n^*$ satisfies (\ref{bootstrap estimator}).
Under the assumptions of Theorem 4, $\sqrt{n}(\hat{\theta}_n^*-\hat{\theta}_n)$ converges weakly to the limit distribution of $\sqrt{n}(\hat{\theta}_n-\theta_0)$ in probability.
\end{Theorem}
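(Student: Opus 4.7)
The plan is to mirror the Pakes--Pollard style argument used for Theorem 3, replacing the empirical process $\sqrt{n}(\hat D_{n,\theta}-D_{\theta})$ by the bootstrap empirical process $\mathbb G_n^* \equiv \sqrt n(\hat D^*_{n,\theta}-\hat D_{n,\theta})$, and then transferring the weak limit through the conditional bootstrap CLT for Donsker classes. Assumption C5 gives $\{A_\theta^v:\theta\in\Theta,v\in\mathcal V\}$ a finite-dimensional span structure in $\theta$ (combined with the VC structure of indicators in $v$), so this class is $P$-Donsker with a square-integrable envelope; this is the same ingredient that powers the proof of Theorem 3, and it is what makes a Gin\'e--Zinn-type conditional CLT applicable to the bootstrap.

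First I would establish bootstrap consistency, $\hat\theta_n^*\to \theta_0$ conditionally in probability (in outer probability), by reproducing the argument for Theorem 2: the bootstrap analogue of uniform Glivenko--Cantelli gives $\sup_\theta\bigl|\|\hat D^*_{n,\theta}\|_\mu-\|D_\theta\|_\mu\bigr|\to 0$ in probability, which combined with identification (\ref{identification}), compactness of $\Theta$, and the near-minimization property (\ref{bootstrap estimator}) delivers consistency. Given consistency, I would carry out a stochastic expansion around $\theta_0$. By Donskerness plus differentiability of $D_\theta$ at $\theta_0$ (Assumptions N2, C4 and N3(ii)), stochastic equicontinuity yields, uniformly over $\|\theta-\theta_0\|=o(1)$,
\[
\hat D^*_{n,\theta}(v) \;=\; \hat D^*_{n,\theta_0}(v) + \Gamma_0(v)'(\theta-\theta_0) + o_{p^*}(n^{-1/2}) + o_{p^*}(\|\theta-\theta_0\|),
\]
with the analogous identity for $\hat D_{n,\theta}(v)$. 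Substituting these linearizations into the squared $L_2(\mu)$ objective, using the rank condition N3(i) to invert $\Delta_0=\int \Gamma_0\Gamma_0'\,d\mu$, and exploiting that $\hat\theta_n^*$ and $\hat\theta_n$ are near-minimizers, I obtain
\[
\sqrt n(\hat\theta_n^* - \theta_0) \;=\; -\Delta_0^{-1}\!\int \Gamma_0(v)\,\sqrt n\,\hat D^*_{n,\theta_0}(v)\,d\mu(v) + o_{p^*}(1),
\]
and similarly $\sqrt n(\hat\theta_n-\theta_0)=-\Delta_0^{-1}\int \Gamma_0(v)\,\sqrt n\,\hat D_{n,\theta_0}(v)\,d\mu(v)+o_p(1)$. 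Subtracting gives
\[
\sqrt n(\hat\theta_n^* - \hat\theta_n) \;=\; -\Delta_0^{-1}\!\int \Gamma_0(v)\,\mathbb G_n^* A_{\theta_0}^v\,d\mu(v) + o_{p^*}(1).
\]

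To conclude, I would invoke the Gin\'e--Zinn conditional bootstrap CLT: since $\{A_{\theta_0}^v:v\in\mathcal V\}$ is $P$-Donsker, $\mathbb G_n^*A_{\theta_0}^v$ converges weakly (conditional on the data, in outer probability) to the same tight Gaussian process with covariance $\Psi(v,v')$ that governs the limit of $\sqrt n(\hat D_{n,\theta_0}-D_{\theta_0})$ in Theorem 3. By Assumption N3(iii), the map $f\mapsto -\Delta_0^{-1}\!\int \Gamma_0(v)f(v)\,d\mu(v)$ is a continuous linear functional on $L_2(\mu)$, so the continuous mapping theorem delivers conditional weak convergence of $\sqrt n(\hat\theta_n^*-\hat\theta_n)$ to $N(0,\Delta_0^{-1}\Sigma_0\Delta_0^{-1})$ in probability.

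The main obstacle is the stochastic-expansion step: because $\hat D^*_{n,\theta}(v)$ contains indicator functions of $\theta$ and is not differentiable in $\theta$, one cannot Taylor-expand the sample criterion. The expansion must instead be obtained by pairing bootstrap stochastic equicontinuity (on the Donsker class indexed jointly by $\theta$ and $v$) with the smoothness of the population object $D_\theta$, and then controlling the remainder uniformly in a shrinking neighborhood of $\theta_0$ in the bootstrap world. Once this uniform control is in place, the rest is routine $Z$-estimator bookkeeping combined with the conditional CLT.
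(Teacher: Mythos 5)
Your proposal is correct and follows essentially the same route as the paper: bootstrap consistency via the Glivenko--Cantelli/Donsker structure of $\{A_\theta^v\}$, the Gin\'e--Zinn bootstrap stochastic equicontinuity (the paper's Lemma 6) to linearize the non-smooth criterion around $\theta_0$ in the bootstrap world, the rank condition N3(i) to obtain the Bahadur-type representation $\sqrt{n}(\hat{\theta}_n^*-\hat{\theta}_n)=-\sqrt{n}\,\gamma_n^*+o_B(1)$, and a bootstrap CLT for the linear term. The only cosmetic difference is the last step and the bookkeeping: the paper organizes the expansion through the criterion differences $M_n^*(\theta)-M_n(\theta)$ and $M_n^*(\theta)-M_n^*(\eta)$ \`a la Brown--Wegkamp and then invokes the finite-dimensional bootstrap CLT for the mean of $\xi_i=\int A_{\theta_0}^v(W_i)\Gamma_0(v)\,d\mu(v)$, whereas you substitute the linearizations directly and invoke the conditional functional CLT plus a continuous linear functional, which yields the same conclusion.
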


The proof of this theorem is similar to the proof of Theorem 6 in \cite{brown2002weighted}.
From Theorem 3, we show that
\begin{eqnarray}
\hat{\theta}_n - \theta_0 &=& \gamma_n + o_p(n^{-1/2}), \nonumber 
\end{eqnarray}
where $\gamma_n = \Delta_0^{-1} \cfrac{1}{n} \sum_{i=1}^n \int \Gamma_0(v) (A_{\theta_0}^v(W_i)-E[A_{\theta_0}^v(W_i)]) d\mu(v)$.
By using the bootstrap stochastic equicontinuity due to \cite{gine1990bootstrapping}, we show that
$$
\sqrt{n} \| \hat{\theta}_n^*-\hat{\theta}_n - \gamma^*_n \|
$$
converges to zero in probability, conditional on almost all samples, where $\gamma^*_n$ is the bootstrap counterpart of $\gamma_n$.
The term $\gamma_n^*$ has the same limiting distribution as $\gamma_n$ according to the bootstrap theorem for the mean in $\mathbb{R}^{d_{\theta}}$.
Hence, we show that $\sqrt{n}(\hat{\theta}_n^*-\hat{\theta}_n)$ converges weakly to the limit distribution of $\sqrt{n}(\hat{\theta}_n-\theta_0)$ in probability.

\section{Simulations}

To evaluate the finite sample performance of our estimator, we conducted two Monte Carlo experiments.

\begin{Simulation}
The outcome equation is given by
\begin{eqnarray}
g_1(x,u) &=& u + (\theta_1 + \theta_2 u) (x-\bar{x}), \nonumber \\
g_2(x,u) &=& u + (\theta_1 + \theta_3 u) (x-\bar{x}), \nonumber
\end{eqnarray}
where $\bar{x}=0$.
Because $g_1(\bar{x},u) = u$ for all $x$, the normalization Assumption I3 is satisfied.
We assume
\begin{eqnarray}
X_t &=& 4\Phi(Z_t) \ \ \ t=1,2 , \nonumber \\
U_t &=& \alpha + \epsilon_t \ \ \ t=1,2, \nonumber
\end{eqnarray}
where $\Phi(\cdot)$ is the standard normal distribution function and
\begin{eqnarray}
(Z_1,Z_2,\alpha) &\sim & N\left( \left( \begin{array}{c}
0 \\ 0 \\ 0
\end{array} \right) ,\left( \begin{array}{ccc}
1.0 & 0.3 & 0.6 \\ 0.3 & 1.0 & 0.5 \\ 0.6 & 0.5 & 1.0
\end{array} \right) \right), \nonumber \\
(\epsilon_1,\epsilon_2) &\sim & N\left( \left( \begin{array}{c}
0 \\ 0
\end{array} \right) ,\left( \begin{array}{cc}
(0.3)^2 & 0 \\ 0 & (0.3)^2
\end{array} \right) \right). \nonumber
\end{eqnarray}
Because the correlations between $\alpha$ and $(Z_1$, $Z_2)$ are not zero, $X_1$ and $X_2$ are correlated with $U_t$.
Because $\epsilon_1|\mathbf{X}=\mathbf{x}, \alpha=a \overset{d}{=} \epsilon_2|\mathbf{X}=\mathbf{x}, \alpha=a$, the conditional stationarity assumption holds.
We used $\mu = \text{Unif}(0,4) \times \text{Unif}(0,4) \times N(0,1)$ as the integrating measure.

We considered the following two settings: (i) $(\theta_1,\theta_2,\theta_3) = (0.5,1.0,0.7)$, (ii) $\theta_2 = \theta_3$ and $(\theta_1,\theta_2) = (0.5,1.0)$.
Under Setting (i), we cannot use estimation methods of other papers, because their time effects depend on $U_t$.
On the other hand, under Setting (ii), there are no time effects.
Hence, we can estimate $E\left[ \nabla_x g_t(X_t,U_t)|X_1=X_2=x \right]$ by using the method proposed in \cite{hoderlein2012nonparametric} because there are stayers.
Thus, we estimated $E\left[ \nabla_x g_t(X_t,U_t)|X_1=X_2=2 \right]$ using our method and their method and compared the results of both.
Under Setting (ii), we have $E\left[ \nabla_x g_t(X_t,U_t)|X_1=X_2=2 \right] = \theta_1 +\theta_2 E[U_t|X_1=X_2=2] = 0.5$.
Hence, we estimated $E\left[ \nabla_x g_t(X_t,U_t)|X_1=X_2=2 \right]$ by using $\hat{\theta}_1 + \hat{\theta}_2 \hat{E}[U_t|X_1=X_2=2]$.

Table 1 contains the results under Setting (i) for sample sizes of $400$, $800$, and $1600$.
The number of replications was set to $1000$ throughout.
Table 1 shows the bias, standard deviation, and the mean squared error (MSE) of the estimates of $(\theta_1,\theta_2, \theta_3)$, highlighting that the standard deviation and MSE decrease as the sample size increases.
In some cases, the biases of the estimates do not decrease.
However they are relatively small under all settings.

Table 2 contains the results under Setting (ii) for sample sizes of $500$ and $1000$.
Table 2 shows that the standard error of our estimator is smaller than that of \cite{hoderlein2012nonparametric} for all settings.
Although the bias of our estimator is larger than their estimator, the MSE of our estimator is smaller. 
\end{Simulation}

\begin{Simulation}[DID model]
We considered the case where $\mathcal{X}_{12}=\{(0,0),(0,1)\}$.
The outcome equation is given by
\begin{eqnarray}
g_1(x,u) &=& u  \nonumber \\
g_2(x,u) &=& (\theta_1 + \theta_2 u)(1-x) +(\theta_3 + \theta_4 u)x, \nonumber
\end{eqnarray}
where $(\theta_1,\theta_2,\theta_3,\theta_4) = (0.5,0.7,0.5,1.2)$.
Because $g_1(x,u)$ does not depend on $x$, the normalization Assumption I3 holds for any $\bar{x} \in \mathcal{X}_1$.
We assumed
\begin{eqnarray}
X_2 &=& \mathbf{1}\{Z>0\}, \nonumber \\
U_t &=& \alpha + \epsilon_t \ \ \ t=1,2, \nonumber
\end{eqnarray}
where $\Phi(\cdot)$ is the standard normal distribution function and
\begin{eqnarray}
(Z,\alpha) &\sim & N\left( \left( \begin{array}{c}
0 \\ 2.0
\end{array} \right) ,\left( \begin{array}{cc}
1.0 & 0.6  \\ 0.6 & 1.0 
\end{array} \right) \right), \nonumber \\
(\epsilon_1,\epsilon_2) &\sim & N\left( \left( \begin{array}{c}
0 \\ 0
\end{array} \right) ,\left( \begin{array}{cc}
(0.3)^2 & 0 \\ 0 & (0.3)^2
\end{array} \right) \right). \nonumber
\end{eqnarray}
Because $\epsilon_1|X_2=x \overset{d}{=} \epsilon_2|X_2=x$ for all $\mathbf{x}$, the conditional stationarity Assumption I4 holds.
When $X_2=0$, we have $Y_2=g_2(0,U_2)=\theta_1 + \theta_2 U_2$, and when $X_2=1$, we have $Y_2=g_2(1,U_2)=\theta_3 + \theta_4 U_2$.
This specification is similar to that of a typical DID model.
However, letting $Y_t(x)=g_t(x,U_2)$ be potential outcomes, this model does not satisfy the parallel trend assumption if $\theta_2 \neq 1$, because $E[Y_2(0)-Y_1(0)|X_2=x] = \theta_1 + (\theta_2-1)E[U_1|X_2=x]$ holds by the conditional stationarity of $U_t$.
Hence, we cannot estimate the average treatment effect on the treated (ATT) or the average treatment effect (ATE) by using a standard DID method.
Under this setting, we have
\begin{eqnarray}
ATE & \equiv & E[Y_2(1)-Y_2(0)] = 1.00, \nonumber \\
QTE_{25} & \equiv & Q_{Y_2(1)}(0.25)-Q_{Y_2(0)}(0.25)\fallingdotseq 0.65, \nonumber \\
QTE_{50} & \equiv & Q_{Y_2(1)}(0.50)-Q_{Y_2(0)}(0.50)=1.00, \nonumber \\
QTE_{75} & \equiv & Q_{Y_2(1)}(0.75)-Q_{Y_2(0)}(0.75)\fallingdotseq 1.35. \nonumber 
\end{eqnarray}
We also estimated ATE and QTE as follows:
\begin{eqnarray}
\hat{ATE} &=& \left( \hat{\theta}_3 + \hat{\theta}_4 \hat{E}[\hat{U}] \right) - \left( \hat{\theta}_1 + \hat{\theta}_2 \hat{E}[\hat{U}] \right), \nonumber \\
\hat{QTE}_{100\tau} &=& \left( \hat{\theta}_3 + \hat{\theta}_4 \hat{Q}_{\hat{U}}(\tau) \right) - \left( \hat{\theta}_1 + \hat{\theta}_2 \hat{Q}_{\hat{U}}(\tau) \right), \nonumber
\end{eqnarray}
where $\hat{E}[\hat{U}]$ is a sample average of $\hat{U} \equiv (Y_{i1}, \cdots, Y_{in}, g_2^{-1}(X_{12},Y_{12};\hat{\theta}), \cdots, g_2^{-1}(X_{n2},Y_{n2};\hat{\theta}))$, and $\hat{Q}_{\hat{U}}(\tau)$ is a sample $\tau$-th quantile of $\hat{U}$.
Because $X_1=0$ and $X_2$ is discrete, we used
$$
A_{\theta}^v(w) = \left( \mathbf{1}\{y_1 \leq v_u \} - \mathbf{1}\{ y_2 \leq g_2(x_2,v_u; \theta ) \} \right) \mathbf{1}\{x_2 = v_x\},
$$
where $v = (v_x,v_u)$.
We used $\mu = Ber(0.5) \times N(\bar{Y}_1,s_{Y_1})$ as the integrating measure, where $\bar{Y}_1$ is the sample average of $Y_1$ and $s_{Y_1}$ is the standard deviation of $Y_1$.
Table 4 contains the results of this experiment for sample sizes of $400$, $800$, and $1600$.
The number of replications was set to $1000$ throughout.
Table 4 shows the bias, standard deviation, and MSE of the estimates of parameters, the ATE, and QTE, highlighting that the standard deviation and MSE of estimates decrease as the sample size increases.
The biases of the estimates of parameters, the ATE, and $QTE_{50}$ are relatively small, whereas the biases of the estimates of $QTE_{25}$ and $QTE_{75}$ are large.
This may be caused by the fact that the sample quantiles are biased.
\end{Simulation}

\section{Discrete Outcomes}

In this section, we consider the case where outcomes are discrete.
In the case of discrete outcomes, we cannot point-identify $g_t(x,u)$.
This is likewise true in \cite{athey2006identification}, \cite{chesher2010instrumental}, and \cite{ishihara2017partial}.
They consider the case where outcomes are discrete, and instead show partial identification of the structural function.
Hence, in this section, we also consider partial identification of $g_t(x,u)$.

First, we drop the $i$ subscript and let $T=2$, as in Section 2.
Let $\mathcal{Y}_t$ denote the support of $Y_t$.
The assumptions employed in Section 2 do not allow the outcomes to be discrete.
Hence, we impose the following assumptions.

\begin{Assumption_D1}
(i) For all $t\in \{1,2\}$, the function $g_t(x,u)$ is weakly increasing in $u$ for all $x$.
(ii) For all $t\in \{1,2\}$, $U_t|\mathbf{X}=\mathbf{x}$ is continuously distributed for all $\mathbf{x}$.
\end{Assumption_D1}

\begin{Assumption_D2}
(i) For all $t \in \{1,2\}$, $Y_t$ is discretely distributed.
(ii) $\mathcal{Y}_1 = \mathcal{Y}_2 \equiv \mathcal{Y}$ with $\underline{y} \equiv \inf \mathcal{Y}$ and $\overline{y} \equiv \sup \mathcal{Y}$.
\end{Assumption_D2}

\begin{Assumption_D3}
For all $t \in \{1,2\}$, the marginal distribution of $U_t$ is uniform on $[0,1]$.
\end{Assumption_D3}

\begin{Assumption_D4}
(i) For all $\mathbf{x} \in \mathcal{X}_{12}$, $U_1 | \mathbf{X}=\mathbf{x} \overset{d}{=} U_2 | \mathbf{X}=\mathbf{x}$ holds.
(ii) The support of $U_t|\mathbf{X}=\mathbf{x}$ is $[0,1]$.
\end{Assumption_D4}

Assumption I1(i) stipulates that $g_t(x,u)$ is strictly increasing in $u$.
If $U_t$ is continuously distributed, then $Y_t$ must be continuously distributed under Assumption I1(i).
Hence, in this section, we relax Assumption I1 by allowing $g_t$ to be flat inside the support of $U_t$.
\cite{athey2006identification} and \cite{chesher2010instrumental} also employ this weakly monotonic assumption in models with discrete outcomes.
Furthermore, when outcomes are discrete, we cannot use Assumption I3, because $U_t$ is continuously distributed.
Hence, we use another normalization assumption.
Assumption D4 is identical to Assumption I4.

We can thus obtain the following theorem.

\begin{Theorem}
Suppose that Assumptions D1, D2, D3, and D4 are satisfied.
For all $t \in \{1,2\}$, if $\mathcal{X}_{12}=\mathcal{X}_1 \times \mathcal{X}_2$ holds, then we have
\begin{eqnarray}
g_t(x,u) &\geq & g^L_t(x,u) \equiv \inf \{y \in [\underline{y},\overline{y}] :G_{t,x}^L(y) \geq u\}, \nonumber \\
g_t(x,u) &\leq & g^U_t(x,u) \equiv \sup \{y\in [\underline{y},\overline{y}]: G^U_{t,x}(y) \leq u\}, \nonumber
\end{eqnarray}
where $G^L_{t,x}$ and $G^U_{t,x}$ are defined by (\ref{def GL}) and (\ref{def GU}), respectively.
\end{Theorem}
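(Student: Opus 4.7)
The plan is to use the weak monotonicity of $g_t(x,\cdot)$ to translate between the observable conditional CDFs of $Y_t$ and the (unobserved but time-invariant) conditional CDF of $U_t$, and then to invert the resulting bounds to control $g_t(x,u)$ itself. The starting point is a basic sandwich: since $g_t(x,\cdot)$ is weakly increasing by D1(i) and $U_t\mid\mathbf{X}=\mathbf{x}$ is continuous by D1(ii), the event $\{U_t\leq u\}$ is contained in $\{Y_t\leq g_t(x_t,u)\}$ and the event $\{U_t\geq u\}$ is contained in $\{Y_t\geq g_t(x_t,u)\}$, so for every $\mathbf{x}\in\mathcal{X}_{12}$ one obtains
\[
F_{Y_t\mid\mathbf{X}}\!\bigl(g_t(x_t,u)^{-}\mid\mathbf{x}\bigr)\;\leq\;F_{U_t\mid\mathbf{X}}(u\mid\mathbf{x})\;\leq\;F_{Y_t\mid\mathbf{X}}\!\bigl(g_t(x_t,u)\mid\mathbf{x}\bigr).
\]

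By the conditional stationarity in D4(i), the two period-specific versions of this sandwich share the same middle term, which I denote $F_{U\mid\mathbf{X}}(u\mid\mathbf{x})$. Splicing them yields, for every $(x_1,x_2)\in\mathcal{X}_{12}$,
\[
F_{Y_2\mid\mathbf{X}}\!\bigl(g_2(x_2,u)^{-}\mid x_1,x_2\bigr)\;\leq\;F_{Y_1\mid\mathbf{X}}\!\bigl(g_1(x_1,u)\mid x_1,x_2\bigr),
\]
together with the symmetric inequality obtained by swapping the two periods. Under the rectangular support $\mathcal{X}_{12}=\mathcal{X}_1\times\mathcal{X}_2$ these inequalities can be aggregated over every $x_{-t}\in\mathcal{X}_{-t}$ while holding $x_t=x$ fixed; the functions $G^L_{t,x}$ and $G^U_{t,x}$ in the statement should be read as the tightest aggregates of the observed conditional CDFs of $Y_1$ and $Y_2$ that bound $F_{U\mid\mathbf{X}}(u\mid\mathbf{x})$ from above and from below respectively in terms of observables alone.

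The argument then closes by a quantile-style inversion. For the lower bound I would verify that $G^L_{t,x}\!\bigl(g_t(x,u)\bigr)\geq u$, which places $g_t(x,u)$ inside the set $\{y\in[\underline{y},\overline{y}]:G^L_{t,x}(y)\geq u\}$ whose infimum is by definition $g^L_t(x,u)$; the upper bound follows dually from $G^U_{t,x}\!\bigl(g_t(x,u)\bigr)\leq u$. The main obstacle will be the aggregation step: one has to check that $G^L_{t,x}$ and $G^U_{t,x}$ really do sandwich $F_{U\mid\mathbf{X}}(u\mid\mathbf{x})$ across every admissible $\mathbf{x}$, handling the strict/weak inequalities generated by jumps of the discrete conditional CDFs of $Y_1$ and $Y_2$ with enough care that no slack is introduced when one takes suprema and infima over $x_{-t}\in\mathcal{X}_{-t}$ through the rectangular-support chaining.
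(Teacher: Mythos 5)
Your opening sandwich and the final quantile-inversion step match the paper's, but the part you defer as ``the main obstacle'' --- the aggregation --- is the actual content of the proof, and as described it does not go through. The cross-period inequality you derive, $F^-_{Y_t|\mathbf{X}}(g_t(x_t,u)\mid x_1,x_2)\leq F^+_{Y_s|\mathbf{X}}(g_s(x_s,u)\mid x_1,x_2)$, relates $g_t$ at $x_t$ to $g_s$ at $x_s$, i.e.\ it crosses time periods. To bound $g_t(x,u)$ for a \emph{fixed} $x$ you need, for every other $x'\in\mathcal{X}_t$ in the \emph{same} period, an observable monotone map $T^U_{t,x',x}$ with $g_t(x',u)\leq T^U_{t,x',x}\left(g_t(x,u)\right)$. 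The paper builds this by inverting the one-step inequality with the generalized quantile $Q^+_{Y_t|\mathbf{X}}(\tau\mid\mathbf{x})=\sup\{y:F^-_{Y_t|\mathbf{X}}(y\mid\mathbf{x})\leq\tau\}$, yielding $\tilde{T}^{U,t,s}_{x_t,x_s}(y)=Q^+_{Y_t|(X_t,X_s)=(x_t,x_s)}\left(F^+_{Y_s|(X_t,X_s)=(x_t,x_s)}(y)\right)$, and then \emph{composing two such steps} through an intermediate $\tilde{x}\in\mathcal{X}_s$, so that $g_t(x',u)\leq\tilde{T}^{U,t,s}_{x',\tilde{x}}\circ\tilde{T}^{U,s,t}_{\tilde{x},x}\left(g_t(x,u)\right)$, finally taking $\inf_{\tilde{x}}$ over intermediates; the rectangular support is what guarantees that both $(x,\tilde{x})$ and $(x',\tilde{x})$ lie in $\mathcal{X}_{12}$ for every $\tilde{x}$. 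Merely ``aggregating the inequalities over $x_{-t}$'' without this inversion-and-composition never produces a function of $y=g_t(x,u)$ alone, so the set whose infimum defines $g^L_t(x,u)$ is not yet well defined in your argument.

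A second, related omission is Assumption D3. You describe $G^L_{t,x}$ and $G^U_{t,x}$ as bounding the \emph{conditional} law $F_{U|\mathbf{X}}(u\mid\mathbf{x})$ for every $\mathbf{x}$, but the correct object is $G^L_{t,x}(y)=\int F^+_{Y_t|X_t=x'}\left(T^U_{t,x',x}(y)\right)dF_{X_t}(x')$, and the chain $G^L_{t,x}\left(g_t(x,u)\right)\geq\int F^+_{Y_t|X_t=x'}\left(g_t(x',u)\right)dF_{X_t}(x')\geq\int P(U_t\leq u\mid X_t=x')\,dF_{X_t}(x')=P(U_t\leq u)=u$ closes only because the \emph{marginal} distribution of $U_t$ is normalized to be uniform on $[0,1]$. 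Your sketch never invokes D3; without it the last expression is an unknown distribution function evaluated at $u$ rather than $u$ itself, and the inversion $g_t(x,u)\geq\inf\{y:G^L_{t,x}(y)\geq u\}$ loses its meaning.
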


This identification approach is similar to that in \cite{ishihara2017partial}, who considers the identification of nonseparable models with binary instruments and shows that the structural functions are partially identified when outcomes are discrete.

In Theorem 5, we assume that $\mathcal{X}_{12}=\mathcal{X}_1 \times \mathcal{X}_2$.
Although this support condition does not require stayers, it is nevertheless stronger than that of Theorem 1.
Indeed, we can relax this condition and partially identify $g_t$ under a weaker support condition.
However, if we do, then the bounds of $g_t$ may be looser.

To illustrate Theorem 5, we introduce two examples.

\begin{Example}[DID model with binary outcomes]
Suppose that the outcomes are binary and $\mathcal{X}_{12} = \{(0,0),(0,1)\}$.
Then, $\mathcal{X}_{12} = \mathcal{X}_1 \times \mathcal{X}_2$, where $\mathcal{X}_1=\{0\}$ and $\mathcal{X}_2 = \{0,1\}$.
This is the usual DID setting.
Define $D\equiv \mathbf{1}\{X_2=1\}$.
We consider the partial identification of $g_2(0,u)$ and $g_2(1,u)$.

In this case, we have 
$$
G^L_{2,0}(y) = P(D=1) F^+_{Y_2|D=1}(T^U_{2,1,0}(y)) + P(D=0) F^+_{Y_2|D=0}(T^U_{2,0,0}(y)),
$$
where $T^U_{2,1,0}(y) = Q^+_{Y_2|D=1} \circ F^+_{Y_2|D=1} \circ Q^+_{Y_1|D=0} \circ F^+_{Y_1|D=0}(y)$ and $T^U_{2,0,0}(y)=y$.
We define $p_t(d) \equiv P(Y_t=1|D=d)$, then
$$
G^L_{2,0}(y)= \begin{cases}
P(D=1,Y_2 \leq \mathbf{1}\{p_2(1) \geq p_1(1)\}) & \text{if $y<0$} \\
P(D=1,Y_2 \leq \mathbf{1}\{p_2(1) \geq p_1(1) \ \text{or} \ p_1(0) \geq p_2(0) \}) + P(D=0,Y_2=0) & \text{if $0\leq y<1$} \\
1 & \text{if $y \geq 1$}
\end{cases}.
$$
Therefore, we can obtain a lower bound
$$
g^L_2(0,u) = \begin{cases}
\mathbf{1}\{u > P(Y_2=0)\} & \text{if $p_1(1) > p_2(1)$ and $p_1(0) < p_2(0)$} \\
\mathbf{1}\{u > P(Y_2=0) + P(D=1,Y_2=1)\} & \text{if $p_1(1) \leq p_2(1)$ or $p_1(0) \geq p_2(0)$}
\end{cases}.
$$
Similarly, we can obtain the following functions:
\begin{eqnarray}
g^L_2(1,u) &=& \begin{cases}
\mathbf{1}\{u > P(Y_2=0)\} & \text{if $p_1(1) < p_2(1)$ and $p_1(0) > p_2(0)$} \\
\mathbf{1}\{u > P(D=0) + P(D=1,Y_2=0)\} & \text{if $p_1(1) \geq p_2(1)$ or $p_1(0) \leq p_2(0)$}
\end{cases} \nonumber \\
g^U_2(0,u) &=& \begin{cases}
\mathbf{1}\{u \geq P(D=0,Y_2=0)\} & \text{if $p_1(1) \geq p_2(1)$ or $p_1(0) \leq p_2(0)$} \\
\mathbf{1}\{u \geq P(Y_2=0)\} & \text{if $p_1(1) < p_2(1)$ and $p_1(0) > p_2(0)$}
\end{cases} \nonumber \\
g^U_2(1,u) &=& \begin{cases}
\mathbf{1}\{u \geq P(D=1,Y_2=0)\} & \text{if $p_1(1) \leq p_2(1)$ or $p_1(0) \geq p_2(0)$} \\
\mathbf{1}\{u \geq P(Y_2=0)\} & \text{if $p_1(1) > p_2(1)$ and $p_1(0) < p_2(0)$} 
\end{cases} \nonumber
\end{eqnarray}

If we define the potential outcomes as $Y_t(x) = g_t(x,U_t)$, we can partially identify the ATE.
Because $g_2^L(x,u)$ and $g_2^U(x,u)$ respectively denote the lower and upper bounds of $g_2(x,u)$, we have
$$
E[g_2^L(x,U)] \leq E[Y_2(x)] \leq E[g_2^U(x,U)], \ \ \text{for $x=1,2$},
$$
where $U \sim U(0,1)$.
Hence, we have
$$
E[g_2^L(1,U)]-E[g_2^U(0,U)] \leq \mu_{ATE} \leq E[g_2^U(1,U)] - E[g_2^L(0,U)],
$$
where $\mu_{ATE} \equiv E[Y_2(1)-Y_2(0)]$.

Hence, above bounds of $g_2$ imply that the lower (upper) bound of ATE is not larger (smaller) than 0.
Actually, when $p_1(1) < p_2(1)$ and $p_1(0) > p_2(0)$, that is $E[Y_1(0)|D=1] < E[Y_2(1)|D=1]$ and $E[Y_1(0)|D=0] > E[Y_2(0)|D=0]$, a lower bound of ATE becomes 0.
This situation implies that the mean of the treated group increases, although the time trend effect is negative.
Hence, in this case, it is intuitive that the ATE is larger than 0.
Contrarily, when $p_1(1) > p_2(1)$ and $p_1(0) < p_2(0)$, that is $E[Y_1(0)|D=1] > E[Y_2(1)|D=1]$ and $E[Y_1(0)|D=0] < E[Y_2(0)|D=0]$, an upper bound of ATE becomes 0.
This situation implies that the mean of the treated group decreases, although the time trend effect is positive.
Hence, in this case, it is intuitive that the ATE is smaller than 0.

As an example, we consider the following case:
\begin{eqnarray}
&E[Y_1|D=1] = 0.4,  &E[Y_1|D=0] = 0.3, \nonumber \\
&E[Y_2|D=1] = 0.5,  &E[Y_1|D=0] = 0.2, \nonumber \\
&P(D=1) = 0.5. & \nonumber
\end{eqnarray}
In this case, we can obtain 
\begin{eqnarray}
g_2^L(0,u) &=& \mathbf{1}\{u>0.9\}, \nonumber \\
g_2^L(1,u) &=& \mathbf{1}\{u>0.65\}, \nonumber \\
g_2^U(0,u) &=& \mathbf{1}\{u>0.65\}, \nonumber \\
g_2^U(1,u) &=& \mathbf{1}\{u>0.25\}. \nonumber 
\end{eqnarray}
Hence, in this case, ATE is smaller than 0.65 and larger than 0.
As discussed above, because $E[Y_1|D=1] < E[Y_2|D=1]$ and $E[Y_1|D=0] > E[Y_2|D=0]$, a lower bound of ATE becomes 0.
\end{Example}

\begin{Example}
We consider the following model:
$$
Y_t = g_t(X_t,U_t) = \mathbf{1}\{U_t > (1+\exp(\alpha_t + \beta_t X_t))^{-1}\}, \ \ t = 1,2,
$$
where $U_t = \Phi(\epsilon_t)$ and
$$
(X_1,X_2,\epsilon_t) \sim  N\left( \left( \begin{array}{c}
0 \\ 0 \\ 0
\end{array} \right) ,\left( \begin{array}{ccc}
1.0 & 0.6 & 0.4 \\ 0.6 & 1.0 & 0.4 \\ 0.4 & 0.4 & 1.0
\end{array} \right) \right).
$$
Hence, $U_t \sim U(0,1)$ for all $t$ and $U_1|\mathbf{X}=\mathbf{x} \overset{d}{=} U_2|\mathbf{X}=\mathbf{x}$ for all $\mathbf{x}$.
We set $(\alpha_1,\alpha_2,\beta_1,\beta_2)=(0,0.3,0.5,0.6)$.
Under this setting, we calculate $g_t^L(x,u)$ and $g_t^U(x,u)$ defined by Theorem 5 for $x = -2, -1, 0, 1, 2$.
Table 4 shows $g_t^L(x,u)$, $g_t^U(x,u)$, and $g_t(x,u)$ at $x=-2,-1,0,1,2$.

When $x$ is small, the lower (upper) bounds are uninformative (informative).
Contrarily, when $x$ is large, lower (upper) bounds are informative (uninformative).
In this model, there is a positive time trend because $g_1(x,u) \leq g_2(x,u)$.
These bounds reflect this fact.
That is, they also satisfy $g_1^L(x,u) \leq g_2^L(x,u)$ and $g_1^U(x,u) \leq g_2^U(x,u)$.
\end{Example}

We can extend Theorem 5 to panel data with more than two periods.

\begin{Corollary}
Suppose Assumptions D1, D2, D3, and D4 are satisfied for $T \geq 3$.
For $t=1, \cdots, T$, if $supp(\mathbf{X}) = \mathcal{X}_1 \times \cdots \times \mathcal{X}_T$, then we have
\begin{eqnarray}
g_t(x,u) & \geq & g_t^L(x,u) \equiv \inf \{y \in [\underline{y},\overline{y}] : G_{t,x}^L(y) \geq u\}, \nonumber \\
g_t(x,u) & \leq & g_t^U(x,u) \equiv \sup \{y \in [\underline{y},\overline{y}] : G_{t,x}^U(y) \leq u\}, \nonumber
\end{eqnarray}
where $G_{t,x}^L(y)$ and $G_{t,x}^U(y)$ are defined by (\ref{GL and GU T>=3}).
\end{Corollary}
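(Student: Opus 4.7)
The plan is to reduce the multi-period case to the bivariate analysis of Theorem 5. The key observation is that under the product support condition $supp(\mathbf{X}) = \mathcal{X}_1 \times \cdots \times \mathcal{X}_T$, every pairwise marginal satisfies $supp(X_s, X_t) = \mathcal{X}_s \times \mathcal{X}_t$ for all $s \neq t$. Moreover, the conditional stationarity of Assumption D4(i) restricts naturally to any pair: integrating out $\{X_r : r \neq s, t\}$ preserves $U_s \mid (X_s, X_t) \overset{d}{=} U_t \mid (X_s, X_t)$. Consequently, the bivariate framework underlying Theorem 5 applies to every pair of periods using only observed bivariate marginals.

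I would first fix the target pair $(t, x)$ with $x \in \mathcal{X}_t$ and, for each reference period $s \neq t$, carry out the Theorem 5 construction on the pair $(s, t)$. This produces period-$s$ versions of the transformation maps underlying $T^L$ and $T^U$ and, after integrating over $x_s$ against its marginal distribution exactly as in the two-period definitions (\ref{def GL}) and (\ref{def GU}), yields candidate bounds $G^{L,(s)}_{t,x}(y)$ and $G^{U,(s)}_{t,x}(y)$. Next I would combine these pairwise bounds: since each $s \neq t$ supplies a valid bound, the sharper combined bounds appearing in (\ref{GL and GU T>=3}) are obtained by taking, respectively, a supremum over $s \neq t$ of the lower-bound constructions and an infimum over $s \neq t$ of the upper-bound constructions. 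Finally, inverting these combined $G$-functions as in Theorem 5 yields $g_t^L(x, u) \leq g_t(x, u) \leq g_t^U(x, u)$.

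The substance of the identification argument is already present in the proof of Theorem 5, so the bulk of the work is bookkeeping. One must verify that each pairwise construction $G^{L,(s)}_{t,x}$ remains valid after marginalization reduces $\mathbf{X}$ to $(X_s, X_t)$; this uses only that the remaining coordinates are integrated against a probability measure and that conditional stationarity survives conditioning on a sub-vector. A second point of care is the handling of left- versus right-continuous inverses at the jumps of the discrete $Y_t$ distribution; these were already tracked in Theorem 5, and the same conventions transfer unchanged to each pairwise construction. I expect no genuinely new obstacle beyond confirming that the supremum/infimum aggregation across $s$ preserves measurability and the stated bound inequalities.
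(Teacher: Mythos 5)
Your overall strategy---reduce to the two-period construction of Theorem 5, build pairwise transformations for each reference period $s \neq t$, and aggregate over $s$---is the same as the paper's, and each step you describe does produce valid bounds. Two details differ from the paper's actual construction, and they matter because the corollary asserts the bounds for the specific functions defined in (\ref{GL and GU T>=3}). First, the paper does not marginalize $\mathbf{X}_{-(t,s)}$ out of the conditional distributions: it keeps conditioning on the full covariate vector and defines $\tilde{T}^{U,t,s}_{x_t,x_s}$ as an infimum (and $\tilde{T}^{L,t,s}_{x_t,x_s}$ as a supremum) over the nuisance coordinates $\mathbf{x}_{-(t,s)}$ of the full-conditional quantile--distribution compositions. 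Your marginalized transformations are also valid---the key inequality $F^-_{Y_t|(X_t,X_s)}\left( g_t(x_t,u)|x_t,x_s \right) \leq F^+_{Y_s|(X_t,X_s)}\left( g_s(x_s,u)|x_t,x_s \right)$ follows by integrating the full-conditional inequality over $\mathbf{X}_{-(t,s)}$---but they are weakly less informative. Second, the paper takes $\inf_{s \neq t,\, \tilde{x} \in \mathcal{X}_s}$ (resp.\ $\sup$) \emph{inside} the transformation $\hat{T}^U_{t,x',x}$ before integrating $F^+_{Y_t|X_t=x'}(\cdot)$ against $dF_{X_t}(x')$, whereas you take the best over $s$ of the per-pair final bounds; by monotonicity of $F^+$ and the interchange of infimum and integral, the paper's $\hat{G}^L_{t,x}$ is pointwise no larger than your combined version, hence its lower bound is weakly sharper. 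So your argument establishes valid but weakly weaker bounds than the ones literally stated. One small slip to correct: the integration in (\ref{def GL}) and (\ref{def GU}) is over $x' \in \mathcal{X}_t$ against the marginal of $X_t$---this is what lets Assumption D3 turn $\int P(U_t \leq u \mid X_t = x')\, dF_{X_t}(x')$ into $u$---not over $x_s$; integrating over the reference period's regressor would break that step.
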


\section{Conclusion}

This paper explored the identification and estimation of nonseparable panel data models.
We showed that the structural function is nonparametrically identified when the structural function $g_t(x,u)$ is strictly increasing in $u$, the conditional distributions of $U_{it}$ are the same over time, and the joint support of $\mathbf{X}_i$ satisfies weak assumptions.
Many nonseparable panel data models assume that the structural function does not change over time and that stayers exist.
By contrast, our approach allows the structural function to depend on the time period in an arbitrary manner, and it does not require the existence of stayers.
In estimation part of the paper, we assumed that the admissible collection of structural functions is indexed by a finite-dimensional parameter.
We developed an estimator that implements our identification results. 
We demonstrated the consistency and asymptotic normality of our estimator and showed the validity of the nonparametric bootstrap.
Monte Carlo studies indicated that our estimator performs well with finite samples.
Finally, we extended our identification results to models with discrete outcomes and showed that the structural function is partially identified.

\afterpage{\clearpage}
\newpage
\section*{Appendix 1: The case with $T \geq 3$}

Here, we consider the estimation when $T \geq 3$.
Then, similar to the case of $T=2$, we show that under the assumptions of Corollary 2,
\begin{equation}
U_{is,\theta}|\mathbf{X}_i=\mathbf{x} \overset{d}{=} U_{it,\theta}|\mathbf{X}_i=\mathbf{x} \ \ \text{for all $\mathbf{x}$ and $s,t$} \ \ \Leftrightarrow \ \ \theta = \theta_0, \label{identification condition T>2}
\end{equation}
where $U_{it,\theta} \equiv g_t^{-1}(X_{it},Y_{it};\theta)$.
(\ref{identification condition T>2}) implies that
\begin{eqnarray}
E \left[ \mathbf{1}\{U_{it,\theta} \leq v_u, \ \mathbf{X}_i \leq v_{\mathbf{x}} \} \right] - E \left[ \mathbf{1}\{U_{is,\theta} \leq v_u, \ \mathbf{X}_i \leq v_{\mathbf{x}} \} \right] \nonumber
\end{eqnarray}
is zero for all $v \in \mathcal{V}$ and $t,s \in \{1, \dots, T\}$ if and only if $\theta = \theta_0$.
Therefore, we have
$$
D^t_{\theta}(v) = 0, \ \text{for all $v$ and $t$} \ \ \Leftrightarrow \ \ \theta = \theta_0,
$$
where
\begin{eqnarray}
D^t_{\theta}(v) & \equiv & E \left[ \mathbf{1}\{U_{t,\theta} \leq v_u, \ \mathbf{X} \leq v_{\mathbf{x}} \} \right]-\frac{1}{T} \sum_{s=1}^T E \left[ \mathbf{1}\{U_{s,\theta} \leq v_u, \ \mathbf{X} \leq v_{\mathbf{x}} \} \right]. \nonumber
\end{eqnarray}
Then, we show that $\frac{1}{T} \sum_{t=1}^T \| D^t_{\theta} \|_{\mu}^2 \geq 0$ and
\begin{equation}
\frac{1}{T} \sum_{t=1}^T \| D^t_{\theta} \|_{\mu}^2 = 0 \ \ \Leftrightarrow \ \ \theta = \theta_0.  \label{identification T>2}
\end{equation}
Hence, $\theta_0$ is the value that provides a global minimum for $\frac{1}{T} \sum_{t=1}^T \| D^t_{\theta} \|_{\mu}^2$.

We construct the estimator $\hat{D}_{n,\theta}^t(v)$ of $D_{\theta}^t(v)$ as a sample analogue of $D_{\theta}^t(v)$:
\begin{eqnarray}
\hat{D}_{n,\theta}^t(v) &=& \frac{1}{n} \sum_{i=1}^n \mathbf{1}\{ Y_{it} \leq g_t(X_{it},v_u;\theta), \ \mathbf{X}_i \leq v_{\mathbf{x}} \} \nonumber \\
& & - \frac{1}{T} \sum_{s=1}^T \left( \frac{1}{n} \sum_{i=1}^n \mathbf{1}\{ Y_{is} \leq g_s(X_{is},v_u;\theta), \ \mathbf{X}_i \leq v_{\mathbf{x}} \} \right) \nonumber \\
&=& \frac{1}{n} \sum_{i=1}^n \left( \mathbf{1}\{ Y_{it} \leq g_t(X_{it},v_u;\theta)\} - \frac{1}{T} \sum_{s=1}^T \mathbf{1}\{ Y_{is} \leq g_s(X_{is},v_u;\theta)\} \right) \mathbf{1}\{ \mathbf{X}_i \leq v_{\mathbf{x}} \}. \nonumber 
\end{eqnarray}
We obtain the estimator $\hat{\theta}_n$ of $\theta_0$ by minimizing $\frac{1}{T} \sum_{t=1}^T \|\hat{D}_{n,\theta}^t\|_{\mu}^2$.
When $T=2$, this estimator is identical to estimator (\ref{estimator}).
Define
$$
A_{\theta}^{t,v}(W_i) \equiv \left( \mathbf{1}\{ Y_{it} \leq g_t(X_{it},v_u;\theta)\} - \frac{1}{T} \sum_{s=1}^T \mathbf{1}\{ Y_{is} \leq g_s(X_{is},v_u;\theta)\} \right) \mathbf{1}\{ \mathbf{X}_i \leq v_{\mathbf{x}} \},
$$
then $\hat{D}^t_{n,\theta}(v) = \frac{1}{n} \sum_{i=1}^n A_{\theta}^{t,v}(W_i)$.

Then, similar to Theorems 3 and 4, the following theorems hold.

\begin{Theorem}
Suppose that Assumptions P1--P4, C2--C5, and (\ref{identification T>2}) hold.
If 
\begin{equation}
\frac{1}{T} \sum_{t=1}^T \| \hat{D}^t_{n,\hat{\theta}_n} \|_{\mu}^2 = \inf_{\theta \in \Theta} \frac{1}{T} \sum_{t=1}^T \| \hat{D}^t_{n,\theta} \|_{\mu}^2 \label{Def estimator T>2}
\end{equation}
holds, then $\hat{\theta}_n \rightarrow_{a.s.} \theta_0$.
\end{Theorem}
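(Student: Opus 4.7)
The plan is to replicate the consistency argument for the $T=2$ estimator (Theorem 2) after adjusting for the fact that the criterion is now $Q_n(\theta) \equiv \frac{1}{T}\sum_{t=1}^T \|\hat{D}^t_{n,\theta}\|_\mu^2$ with population analogue $Q_0(\theta) \equiv \frac{1}{T}\sum_{t=1}^T \|D^t_\theta\|_\mu^2$. The identification condition (\ref{identification T>2}) already supplies that $\theta_0$ is the unique minimizer of $Q_0$ over $\Theta$, and assumption (\ref{Def estimator T>2}) says $\hat\theta_n$ is a (near-)minimizer of $Q_n$. Hence by the standard extremum-estimator theorem (e.g.\ Newey and McFadden, 1994, Theorem 2.1), it suffices to verify (a) compactness of $\Theta$, (b) continuity of $\theta \mapsto Q_0(\theta)$ on $\Theta$, and (c) uniform almost sure convergence $\sup_{\theta \in \Theta}|Q_n(\theta) - Q_0(\theta)| \to 0$.

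For (a) we invoke Assumption C2 directly. For (b), note that $D^t_\theta(v)$ is a linear combination of expectations of indicators $\mathbf{1}\{Y_{is} \leq g_s(X_{is},v_u;\theta)\}\mathbf{1}\{\mathbf{X}_i \leq v_{\mathbf{x}}\}$ for $s=1,\ldots,T$. By the Lipschitz bound in Assumption C3 and the uniformly bounded conditional density in Assumption C4, each such expectation is continuous in $\theta$; dominated convergence then passes continuity through the integral defining $\|D^t_\theta\|_\mu^2$, and summing over $t$ gives continuity of $Q_0$. This is essentially the same computation already used to establish continuity of $\|D_\theta\|_\mu$ in Theorem 2.

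For (c), the key step is to show that the class of functions
\[
\mathcal{F} \equiv \{A^{t,v}_\theta : \theta \in \Theta,\ v \in \mathcal{V},\ t = 1,\ldots,T\}
\]
is $P$-Glivenko--Cantelli. Each $A^{t,v}_\theta(W)$ is a fixed linear combination (with coefficients $\pm 1, \pm 1/T$) of indicators of the form $\mathbf{1}\{Y_{is} \leq g_s(X_{is},v_u;\theta)\}\mathbf{1}\{\mathbf{X}_i \leq v_{\mathbf{x}}\}$. By Assumption C5, $g_s(x,v_u;\theta) = \sum_{j=1}^{J_s} \alpha_j^s(\theta,v_u)\beta_j(x)$, so the level set $\{y_s \leq g_s(x_s,v_u;\theta)\}$ is determined by the sign of a linear combination of the finitely many fixed functions $\{y_s, \beta_1(x_s),\ldots,\beta_{J_s}(x_s)\}$; this is a VC class (hyperplane-indexed), and intersections with the orthant indicators $\mathbf{1}\{\mathbf{X}_i \leq v_{\mathbf{x}}\}$ remain VC. A finite union and finite linear combinations of VC (hence Glivenko--Cantelli) classes are again Glivenko--Cantelli, so $\mathcal{F}$ has this property with integrable envelope $1$. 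This is exactly the step already carried out for $T=2$ in the proof of Theorem 2; the only change is that we now take the union over $t=1,\ldots,T$ and $s=1,\ldots,T$, which is still finite.

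Given Glivenko--Cantelli of $\mathcal{F}$, uniform convergence of $\hat{D}^t_{n,\theta}(v)$ to $D^t_\theta(v)$ holds almost surely, jointly in $(\theta,v,t)$; squaring, integrating against $\mu$ (which is finite), and averaging over $t$ preserve uniform convergence, yielding $\sup_{\theta \in \Theta}|Q_n(\theta) - Q_0(\theta)| \to 0$ a.s. Conditions (a)--(c) together with (\ref{identification T>2}) and (\ref{Def estimator T>2}) give $\hat\theta_n \to_{a.s.} \theta_0$ by the usual argument. The main (and really only) obstacle is verifying the Glivenko--Cantelli step cleanly; everything else is a routine adaptation of the $T=2$ proof.
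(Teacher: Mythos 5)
Your proof is correct. It verifies the standard extremum-estimator conditions directly: compactness (C2), continuity of the population criterion via the Lipschitz/bounded-density computation (C3, C4), and uniform almost-sure convergence via a Glivenko--Cantelli argument for the enlarged class $\{A^{t,v}_\theta : \theta\in\Theta,\, v\in\mathcal{V},\, t=1,\dots,T\}$, whose VC structure follows from Assumption C5 exactly as in the two-period case, since each $A^{t,v}_\theta$ is a fixed finite linear combination of indicators from VC classes. Combined with the identification condition and the definition of $\hat\theta_n$ as a minimizer, this delivers $\hat\theta_n \rightarrow_{a.s.} \theta_0$.

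The paper takes a shorter route: it defines the uniform measure $\mu_T$ on $\{1,\dots,T\}$ and the product measure $\tilde\mu = \mu_T\times\mu$, observes that $\frac{1}{T}\sum_{t=1}^T\|\hat D^t_{n,\theta}\|_\mu^2$ is just the squared $L_2(\tilde\mu)$-norm of the map $(t,v)\mapsto \hat D^t_{n,\theta}(v)$, and then notes that the two-period proofs go through verbatim with $\mathcal{V}$ replaced by $\{1,\dots,T\}\times\mathcal{V}$ and $\mu$ by $\tilde\mu$. In substance the two arguments are the same -- the product-measure device implicitly requires exactly the re-verifications you perform (continuity of the population criterion and the Glivenko--Cantelli property of the class now indexed also by $t$), but it makes the reduction to the $T=2$ lemmas immediate and avoids restating them. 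Your version is more self-contained and makes explicit which steps genuinely need re-checking (in particular the finite union over $t,s$ in the VC argument), at the cost of repeating machinery the paper reuses wholesale. Both are complete proofs of the statement.
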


We define $\Gamma^t_{\theta}(v) \equiv \nabla_{\theta} D^t_{\theta}(v)$ and $\Gamma^t_0(v) \equiv \Gamma^t_{\theta_0}(v)$.

\begin{Assumption_N3'}
(i) There exists $c >0$ such that $\sum_{t=1}^T \| \Gamma_0^t(v)'a \|_{\mu}^2 \geq c \|a\|^2$ for all $a \in \mathbb{R}^{d_{\theta}}$.
(ii) For all $t \in \{1, \cdots, T\}$, $\{\Gamma_{\theta}^t(v):v\in \mathcal{V}\}$ is equicontinuous in $\theta$ at $\theta_0$.
\end{Assumption_N3'}

\begin{Theorem}
Suppose that Assumptions P1--P4, C2--C5, N1, N2, N3', (\ref{identification T>2}), and (\ref{Def estimator T>2}) hold.
Then, $\sqrt{n}(\hat{\theta}_n - \theta_0) \rightsquigarrow N(0,\Delta_{0,T}^{-1} \Sigma_{0,T} \Delta_{0,T}^{-1})$ holds, where $\Delta_{0,T} \equiv \frac{1}{T} \sum_{t=1}^T \int \Gamma^t_0(v) \Gamma^t_0(v)' d\mu(v)$ and
$$
\Sigma_{0,T} \equiv \frac{1}{T^2} \sum_{s=1}^T \sum_{t=1}^T \int \int_{\mathcal{V} \times \mathcal{V}} \{ \Psi_{s,t}(v,v') \Gamma_0^s(v) \Gamma_0^t(v')' \} d\mu(v) d\mu(v')
$$
with $\Psi_{s,t}(v,v') \equiv E[A^{s,v}_{\theta_0}(W) A^{t,v'}_{\theta_0}(W)]$.
\end{Theorem}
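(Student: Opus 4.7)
The plan is to mirror the proof of Theorem 3, which follows Pakes--Pollard (1989) Theorem 3.3, generalized to the sum of $T$ non-smooth moment terms that defines the objective $Q_n(\theta) \equiv \frac{1}{T} \sum_{t=1}^T \|\hat{D}_{n,\theta}^t\|_\mu^2$. Because $\hat{D}_{n,\theta}^t(v)$ is not differentiable in $\theta$, a direct Taylor expansion of $Q_n$ is unavailable, so one combines a stochastic equicontinuity (Donsker) argument for the empirical process with the smoothness of the population moments $D_\theta^t(v)$.

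First, I would establish that the class $\{A_\theta^{t,v}: \theta \in \Theta, v \in \mathcal{V}, t=1,\dots,T\}$ is $P$-Donsker with a uniformly bounded envelope. Each summand in $A_\theta^{t,v}(w)$ is a product of indicators of the form $\mathbf{1}\{Y_{is} \le g_s(X_{is}, v_u; \theta)\}$ and $\mathbf{1}\{\mathbf{X}_i \le v_{\mathbf{x}}\}$. Using Assumption C5 to write $g_s(x,v_u;\theta) = \sum_{j=1}^{J_s} \alpha_j^s(\theta, v_u) \beta_j(x)$, each indicator becomes the indicator of a half-space in a fixed finite-dimensional linear space, hence a VC-subgraph. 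Stability of Donsker classes under finite products and finite sums, as used in Theorem 3 for $T=2$, then yields the required uniform CLT and the stochastic equicontinuity of $\sqrt{n}(\hat{D}_{n,\theta}^t - D_\theta^t)$ in $(\theta,v)$.

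Second, I would combine this with a linear approximation of the population moments. Assumptions N2 and C4 imply that each $D_\theta^t(v)$ is continuously differentiable in $\theta$ near $\theta_0$ with derivative $\Gamma_\theta^t(v)$; the equicontinuity in N3'(ii) upgrades this to a uniform-in-$v$ first-order expansion
$$ D_\theta^t(v) = \Gamma_0^t(v)'(\theta - \theta_0) + o(\|\theta - \theta_0\|), $$
using $D_{\theta_0}^t(v) = 0$ from identification. Together with Step 1 (and the $\sqrt{n}$-consistency of $\hat{\theta}_n$ that falls out as a by-product), one obtains, uniformly in $v$ and for $\theta$ in an $O_p(n^{-1/2})$ neighborhood of $\theta_0$,
$$ \hat{D}_{n,\theta}^t(v) = \hat{D}_{n,\theta_0}^t(v) + \Gamma_0^t(v)'(\theta - \theta_0) + o_p(n^{-1/2} + \|\theta - \theta_0\|). $$
Squaring, summing over $t$, integrating against $\mu$, and using the approximate-minimizer property (\ref{Def estimator T>2}) yields the approximate first-order condition
$$ \Delta_{0,T}(\hat{\theta}_n - \theta_0) = -\frac{1}{T} \sum_{t=1}^T \int \Gamma_0^t(v) \hat{D}_{n,\theta_0}^t(v)\, d\mu(v) + o_p(n^{-1/2}), $$
where $\Delta_{0,T}$ is invertible by Assumption N3'(i).

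Third, I would substitute $\hat{D}_{n,\theta_0}^t(v) = n^{-1} \sum_i A_{\theta_0}^{t,v}(W_i)$, note that $E[A_{\theta_0}^{t,v}(W)] = D_{\theta_0}^t(v) = 0$, interchange the finite sum/integral with the empirical mean via Fubini, and apply the multivariate CLT to
$$ \sqrt{n}(\hat{\theta}_n - \theta_0) = -\Delta_{0,T}^{-1} \cdot \frac{1}{T} \sum_{t=1}^T \int \Gamma_0^t(v)\, d\mu(v) \cdot \frac{1}{\sqrt{n}}\sum_{i=1}^n A_{\theta_0}^{t,v}(W_i) + o_p(1). $$
Computing the covariance of the i.i.d. summand $\frac{1}{T}\sum_t \int \Gamma_0^t(v) A_{\theta_0}^{t,v}(W_i)\, d\mu(v)$ reproduces the stated $\Sigma_{0,T}$, and the sandwich form $\Delta_{0,T}^{-1} \Sigma_{0,T} \Delta_{0,T}^{-1}$ then follows.

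The main obstacle I expect is the non-smoothness handling: establishing the Donsker property uniformly in $(t,\theta,v)$ and then propagating the stochastic equicontinuity through the argmin argument so that the $o_p$ remainders accumulate to a single $o_p(n^{-1/2})$ term in the approximate FOC. A secondary technical point is the integrability $\int \|\Gamma_0^t(v)\|^2 d\mu(v) < \infty$, which is needed to make $\Sigma_{0,T}$ and the limit variance finite; this is not stated explicitly in N3' but follows from the bound $|\nabla_\theta g_t(x,u;\theta)| \le \nabla\bar g(x)$ in N2 combined with Assumption C4 and the boundedness of $\mu$ on $\mathcal V$.
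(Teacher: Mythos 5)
Your proposal is correct and follows essentially the same route as the paper: the paper proves this theorem by observing that $\frac{1}{T}\sum_{t=1}^T\|\hat{D}^t_{n,\theta}\|_{\mu}^2$ is a single squared $L_2$-norm with respect to the product measure $\tilde{\mu}=\mu_T\times\mu$ on $\{1,\dots,T\}\times\mathcal{V}$ (with $\mu_T$ uniform on the time indices) and then invokes the Pakes--Pollard argument of the two-period asymptotic normality theorem verbatim, which is exactly the Donsker-plus-linearization argument you carry out term by term. Your closing remark is also well taken: Assumption N3' omits the analogue of N3(iii), but square-integrability of $\Gamma_0^t$ does follow from N2 and C4, since $\|\Gamma_0^t(v)\|$ is bounded uniformly in $v$ by a constant multiple of $\max_s E[\nabla\bar{g}(X_s)]$ and $\mu$ is a probability measure.
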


\section*{Appendix 2: Proofs}

\begin{proof}[Proof of Theorem 1]
First, we show that $g_t(x,u)$ is identified for all $x \in \cup_{m=0}^{\infty} \mathcal{S}_m^t$.
By the monotonicity of $g_t$ and (\ref{conditional stationary assumption}), equations (\ref{QF}) hold for all $(x_1,x_2)\in \mathcal{X}_{12}$.
First, we can identify $g_1(\overline{x},u)=u$ by Assumption I3.
We can also identify $g_2(x_2,u)$ for all $x_2 \in \mathcal{S}_0^2$ because $(\bar{x},x_2) \in \mathcal{X}_{12}$ and we have 
$$
g_2(x_2,u)= Q_{Y_2|\mathbf{X}}\left( F_{Y_1|\mathbf{X}}\left( g_1(\bar{x},u)|\bar{x},x_2 \right) |\bar{x},x_2 \right) = Q_{Y_2|\mathbf{X}}\left( F_{Y_1|\mathbf{X}}\left( u|\bar{x},x_2 \right) |\bar{x},x_2 \right).
$$
We now turn to identifying $g_1(x_1,u)$ for $x_1 \in \mathcal{S}_1^1$.
Fix $x_1 \in \mathcal{S}_1^1$.
According to the definition of $\mathcal{S}_1^1$, there exists $x_2 \in \mathcal{S}_0^2$ such that $(x_1,x_2) \in \mathcal{X}_{12}$.
Then, it follows from (\ref{QF}) that 
$$
g_1(x_1,u) = Q_{Y_1|\mathbf{X}}\left( F_{Y_2|\mathbf{X}}\left( g_2(x_2,u)|x_1,x_2 \right) |x_1,x_2 \right),
$$
and hence, $g_1(x_1,u)$ is identified because $g_2(x_2,u)$ is already identified.
Similarly, by using (\ref{QF}), we can identify $g_2(x,u)$ for all $x \in \mathcal{S}_1^2$.
Repeating this argument gives the identification of $g_t(x,u)$ for all $x \in \cup_{m=0}^{\infty} \mathcal{S}_m^t$.

Next, we show that $g_t(x,u)$ is identified for all $x \in \mathcal{X}_t$.
We fix $x' \in \mathcal{X}_t \backslash \left( \cup_{m=0}^{\infty} \mathcal{S}_m^t \right)$.
Since $\mathcal{X}_t = \overline{\cup_{m=0}^{\infty} \mathcal{S}_m^t}$, there exists a sequence $\{x^m\}_{m=1}^{\infty} \subset \cup_{m=0}^{\infty} \mathcal{S}_m^t$ such that $\lim_{m\rightarrow \infty} x^m = x'$.
By the continuity of $g_t$, we have $\lim_{m\rightarrow \infty} g_t(x^m,u) = g_t(x',u)$ for all $u \in \mathcal{U}$.
Hence, we can also identify $g_t(x',u)$ because $g_t(x^m,u)$ is identified for all $m$.
\end{proof}
\vspace{0.1in}

\begin{proof}[Proof of Corollary 1]
First, we show that if for all $x,x' \in \mathcal{X}_t$, we can identify the strictly increasing function $T_{t,x',x}(y)$ that satisfies
\begin{equation}
g_t(x',u) = T_{t,x',x}\left( g_t(x,u) \right), \label{T_t,x',x}
\end{equation}
then, $g_t(x,u)$ is point identified.
We define $G^t_x(y) \equiv \int F_{Y_t|X_t}\left( T_{t,x',x}(y) | x'\right) dF_{X_t}(x')$, and then we have
\begin{eqnarray}
G^t_x\left( g_t(x,u) \right) &=& \int F_{Y_t|X_t}\left( g_t(x',u) | x'\right) dF_{X_t}(x') \nonumber \\
&=& \int P\left( U_t \leq u | X_t=x'\right) dF_{X_t}(x') \nonumber \\
&=& P( U_t \leq u ) = u, \nonumber
\end{eqnarray}
where the last equality follows from Assumption I3'.
Because $T_{t,x',x}(y)$ is strictly increasing in $y$, $G^t_x(y)$ is invertible.
Hence, we obtain $g_t(x,u) = \left( G^t_x \right)^{-1}(u)$.
This implies that $g_t(x,u)$ is identified if we can construct $T_{t,x',x}(y)$ for all $x,x' \in \mathcal{X}_t$.

To construct $T_{t,x',x}(y)$, we show that for all $x \in \mathcal{X}_t$, we can identify the strictly increasing function $T^*_{t,x}(y)$ that satisfies 
\begin{equation}
g_t(x,u)=T^*_{t,x}(g_1(\bar{x},u)). \label{T^*}
\end{equation}
For all $x \in \cup_{m=0}^{\infty} \mathcal{S}_m^t$, the proof of Theorem 1 implies that we can construct $T^*_{t,x}(y)$ that satisfies (\ref{T^*}).
Because $F_{Y_t|\mathbf{X}}$ and $Q_{Y_t|\mathbf{X}}$ are strictly increasing, $T^*_{t,x}(y)$ is strictly increasing in $y$ for all $x \in \cup_{m=0}^{\infty} \mathcal{S}_m^t$.
We fix $x' \in \mathcal{X}_t \backslash \left( \cup_{m=0}^{\infty} \mathcal{S}_m^t \right)$.
Since $\mathcal{X}_t = \overline{\cup_{m=0}^{\infty} \mathcal{S}_m^t}$, there exists a sequence $\{x^m\}_{m=1}^{\infty} \subset \cup_{m=0}^{\infty} \mathcal{S}_m^t$ such that $\lim_{m\rightarrow \infty} x^m = x'$.
By the continuity of $g_t$ and (\ref{T^*}), we have $\lim_{m \rightarrow \infty}T^*_{t,x_m}(g_1(\overline{x},u)) = g_t(x,u)$.
Because $g_t(x,u)$ is strictly increasing in $u$, $\lim_{m \rightarrow \infty}T^*_{t,x_m}(y)$ is also strictly increasing in $y$.
Hence, for all $x \in \mathcal{X}_t$, we can identify the strictly increasing function $T^*_{t,x}(y)$ that satisfies (\ref{T^*}).

By using $T^*_{t,x}(y)$, we identify $T_{t,x',x}(y)$ that satisfies (\ref{T_t,x',x}).
Because, for $x,x' \in \mathcal{X}_t$, we have
$$
g_t(x',u) = T^*_{t,x'}\left( (T^{*}_{t,x})^{-1} (g_t(x,u)) \right),
$$
we can construct the function $T_{t,x',x}(y)$ that satisfies $g_t(x',u)=T_{t,x',x}(g_t(x,u))$.
Therefore, we can identify $g_t(x,u)$.
\end{proof}
\vspace{0.1in}

\begin{proof}[Proof of Corollary 2]
The proof is the same as that for Theorem 1.
\end{proof}
\vspace{0.1in}

\begin{proof}[Proof of Theorem 2]
We fix $\delta >0$.
By Lemma 1, C2, and (\ref{identification}), there exists $\epsilon >0$ such that $\|\theta-\theta_0\| \geq \delta$ implies $\|D_{\theta}\|_{\mu} \geq \epsilon$.
Therefore, we have
$$
\|D_{\hat{\theta}_n} \|_{\mu} < \epsilon \ \ \Rightarrow \ \ \|\hat{\theta}_n -\theta_0\| < \delta
$$
and it will suffice to show that $\|D_{\hat{\theta}_n}\|_{\mu} \rightarrow_{a.s.} 0$.
By (\ref{Glivenko-Cantelli}), we have
\begin{equation}
\sup_{\theta} \|\hat{D}_{n,\theta}-D_{\theta}\|_{\mu} \leq  \sup_{\theta,v} | \hat{D}_{n,\theta}(v)-D_{\theta}(v) |   = o_{a.s.}(1). \label{uniformly convergence D}
\end{equation}
By the triangle inequality and C1,
\begin{eqnarray}
\|D_{\hat{\theta}_n}\|_{\mu} &\leq & \| \hat{D}_{n,\hat{\theta}_n} - D_{\hat{\theta}_n}\|_{\mu} + \| \hat{D}_{n,\hat{\theta}_n}\|_{\mu} \nonumber \\
&\leq & \| \hat{D}_{n,\hat{\theta}_n} - D_{\hat{\theta}_n}\|_{\mu} + \| \hat{D}_{n,\theta_0}\|_{\mu}. \nonumber
\end{eqnarray}
By the uniform convergence (\ref{uniformly convergence D}), $\| \hat{D}_{n,\hat{\theta}_n} - D_{\hat{\theta}_n}\|_{\mu} = o_{a.s.}(1)$ and $\| \hat{D}_{n,\theta_0}\|_{\mu} = \| \hat{D}_{n,\theta_0} - D_{\theta_0} \|_{\mu} = o_{a.s.}(1)$.
Hence, we can show that $\hat{\theta}_n \rightarrow_{a.s.} \theta_0$.
\end{proof}
\vspace{0.1in}

\begin{proof}[Proof of Theorem 3]
First, we prove the $\sqrt{n}$-consistency of $\hat{\theta}_n$.
As seen in the previous theorem, $\hat{\theta}_n$ is a consistent estimator of $\theta_0$.
Because $\hat{\theta}_n$ is consistent, we can select a sequence $\{\delta_n\}$ that converges to zero sufficiently slowly to ensure
$$
P(\|\hat{\theta}_n-\theta_0\| \geq \delta_n) \rightarrow 0.
$$
For this sequence, the supremum in (\ref{stochastic equicontinuity D}) runs over a range that includes $\hat{\theta}_n$.
Hence, by the triangle inequality and Lemma 4, we have
$$
\|D_{\hat{\theta}_n}\|_{\mu}-\|\hat{D}_{n,\hat{\theta}_n}\|_{\mu}-\|\hat{D}_{n,\theta_0}\|_{\mu} \leq \|\hat{D}_{n,\hat{\theta}_n} - D_{\hat{\theta}_n} -\hat{D}_{n,\theta_0}\|_{\mu} = o_p(n^{-1/2}).
$$
From Assumption C1,
$$
\|D_{\hat{\theta}_n}\|_{\mu}\leq o_p(n^{-1/2}) + 2\|\hat{D}_{n,\theta_0}\|_{\mu}.
$$
Because $E[A_{\theta_0}^v(W)]=0$ for all $v$ and $E|A_{\theta_0}^v(W) A_{\theta_0}^{v'}(W)| \leq 1$ for all $v$ and $v'$, we have $\sqrt{n}\hat{D}_{n,\theta_0}(v) = \frac{1}{n} \sum_{i=1}^n A_{\theta}^v(W_i) \rightsquigarrow N(0,\Psi(v,v))$.
Since the proof for Lemma 2 shows that $\{A_{\theta_0}^v: v\in\mathcal{V} \}$ is a Donsker class, $\{\sqrt{n}\hat{D}_{n,\theta_0}(v):v\in \mathcal{V}\}$ converges weakly in $l^{\infty}(\mathcal{V})$ to a mean-zero Gaussian process with covariance function $\Psi(v,v')$ and we have $\|\hat{D}_{n,\theta_0}\|_{\mu} = O_p(n^{-1/2})$.
Hence, we have
$$
\|D_{\hat{\theta}_n}\|_{\mu} = O_p(n^{-1/2}).
$$
Because $D_{\theta_0}(v) = 0$ for all $v$, Lemma 3 implies that for all $\theta$ in a neighborhood of $\theta_0$,
\begin{eqnarray}
\|D_{\theta}\|_{\mu} &=& \| \Gamma_0(v)'(\theta-\theta_0) - \left( D_{\theta}(v) - D_{\theta_0}(v) - \Gamma_0(v)'(\theta-\theta_0)  \right) \|_{\mu} \nonumber \\
&\geq & \|\Gamma_0(v)'(\theta-\theta_0)\|_{\mu} - \|D_{\theta}(v) - D_{\theta_0}(v) - \Gamma_0(v)'(\theta-\theta_0) \|_{\mu} \nonumber \\
&\geq & (c-o(1)) \times \|\theta-\theta_0\|. \nonumber
\end{eqnarray}
Therefore, $\|\hat{\theta}_n-\theta_0\| \leq \frac{1}{c-o_p(1)} \|D_{\hat{\theta}_n}\|_{\mu} = O_p(n^{-1/2})$.

Next we establish the asymptotic normality of $\sqrt{n} (\hat{\theta}_n-\theta_0)$ by approximating $\hat{D}_{n,\theta}(v)$ as the linear function
$$
\hat{L}_{n,\theta}(v) \equiv \Gamma_0(v)'(\theta-\theta_0) + \hat{D}_{n,\theta_0}(v).
$$
We have
\begin{eqnarray}
\|\hat{D}_{n,\hat{\theta}_n} - \hat{L}_{n,\hat{\theta}_n} \|_{\mu} & \leq & \|\hat{D}_{n,\hat{\theta}_n} - D_{\hat{\theta}_n} -\hat{D}_{n,\theta_0}\|_{\mu} + \|D_{\hat{\theta}_n}(v)-\Gamma_0(v)'(\hat{\theta}_n-\theta_0)\|_{\mu} \nonumber \\
& \leq & o_p(n^{-1/2}) + o_p(\|\hat{\theta}_n-\theta_0\|) = o_p(n^{-1/2}), \nonumber
\end{eqnarray}
where the second inequality follows from Lemma 3 and Lemma 4, and the last equality follows from the $\sqrt{n}$-consistency of $\hat{\theta}_n$.

Let $\tilde{\theta}_n$ be the value that provides a global minimum for $\|\hat{L}_{n,\theta}\|$.
Then, $\Gamma_0(\cdot)'(\tilde{\theta}_n-\theta_0)$ is the $L_2(\mu)$-projection of $-\hat{D}_{n,\theta_0}(\cdot)$ onto the subspace of $L_2(\mu)$ spanned by the components of $\Gamma_0(\cdot)$.
Because $\Delta_0 = \int \Gamma_0(v)\Gamma_0(v)' d\mu(v)$ is finite and invertible by N3, we have
\begin{equation}
\sqrt{n} (\tilde{\theta}_n-\theta_0) = - \Delta_0^{-1} \int \Gamma_0(v) \sqrt{n} \hat{D}_{n,\theta_0}(v) d\mu(v). \label{theta_tilde}
\end{equation}
Then, we have
\begin{eqnarray}
\int \Gamma_0(v) \sqrt{n} \hat{D}_{n,\theta_0}(v) d\mu(v) &=& \frac{1}{\sqrt{n}} \sum_{i=1}^{n} \int A_{\theta_0}^v(W_i) \Gamma_0(v) d\mu(v) \nonumber \\
&=& \frac{1}{\sqrt{n}} \sum_{i=1}^{n} \xi_i, \nonumber
\end{eqnarray}
for $\xi_i \equiv \int A_{\theta_0}^v(W_i) \Gamma_0(v) d\mu(v)$.
By Fubini's theorem, $E[ \xi_i ] = \int E[A_{\theta_0}^v(W_i)] \Gamma_0(v) d\mu(v) =0$, and
\begin{eqnarray}
E[\xi_i \xi_i'] &=& \int \int_{\mathcal{V}\times \mathcal{V}} \left\{ E[A_{\theta_0}^v(W_i) A_{\theta_0}^{v'}(W_i)] \Gamma_0(v) \Gamma_0(v')' \right\} d\mu(v) d\mu(v') \nonumber \\
&=& \int \int_{\mathcal{V}\times \mathcal{V}} \left\{ \Psi(v,v') \Gamma_0(v) \Gamma_0(v')' \right\} d\mu(v) d\mu(v'), \nonumber
\end{eqnarray}
where all elements of $E[\xi \xi']$ are finite.
Hence, $\sqrt{n}(\tilde{\theta}_n-\theta_0) \rightsquigarrow N(0,\Sigma_0)$ by (\ref{theta_tilde}).
Consequently, $\tilde{\theta}_n=\theta_0 + O_p(n^{-1/2})$, and $\{\delta_n\}$ can be assumed to satisfy $P(\|\tilde{\theta}_n - \theta_0\| \geq \delta_n)\rightarrow 0$.
Because $\theta_0$ is an interior point of $\Theta$, $\tilde{\theta}_n$ lies in $\Theta$ with probability approaching one.
To simplify the argument, we assume that $\|\tilde{\theta}_n-\theta_0\|<\delta_n$ and $\tilde{\theta}_n$ always belongs to $\Theta$.

Because $|D_{\theta}(v)| \leq |\Gamma_0(v)'(\theta-\theta_0)| + o(\|\theta-\theta_0\|)$ uniformly over $v$ by Lemma 3, we have
$$
\|D(\tilde{\theta}_n)\|_{\mu} \leq \|\Gamma_0(v)'(\tilde{\theta}_n-\theta_0)\|_{\mu}  + o_p(\|\tilde{\theta}_n-\theta_0\|) = O_p(n^{-1/2}).
$$
By the triangle inequality and Lemma 4, we have $\|\hat{D}_{n,\tilde{\theta}_n}\|_{\mu}-\|D_{\tilde{\theta}_n}\|_{\mu}-\|\hat{D}_{n,\theta_0}\|_{\mu} = o_p(n^{-1/2})$, and hence $\|\hat{D}_{n,\tilde{\theta}_n}\|_{\mu} = O_p(n^{-1/2})$.
Then, we can argue as for $\hat{\theta}_n$ to deduce that
$$
\|\hat{D}_{n,\tilde{\theta}_n}-\hat{L}_{n,\tilde{\theta}_n}\|_{\mu} = o_p(n^{-1/2}).
$$

Above, we showed that $\|\hat{D}_{n,\hat{\theta}_n}-\hat{L}_{n,\hat{\theta}_n}\|_{\mu} = o_p(n^{-1/2})$ and $\|\hat{D}_{n,\tilde{\theta}_n}-\hat{L}_{n,\tilde{\theta}_n}\|_{\mu} = o_p(n^{-1/2})$.
Therefore, we have
\begin{eqnarray}
\|\hat{L}_{n,\hat{\theta}_n}\|_{\mu} - o_p(n^{-1/2}) &\leq & \|\hat{D}_{n,\hat{\theta}_n}\|_{\mu} \nonumber \\
&\leq & \|\hat{D}_{n,\tilde{\theta}_n}\|_{\mu} + o_p(n^{-1/2}) \nonumber \\
&\leq & \|\hat{L}_{n,\tilde{\theta}_n}\|_{\mu}  + o_p(n^{-1/2}). \nonumber
\end{eqnarray}
That is,
$$
\|\hat{L}_{n,\hat{\theta}_n}\|_{\mu} = \|\hat{L}_{n,\tilde{\theta}_n}\|_{\mu}  + o_p(n^{-1/2}),
$$
and by squaring both sides, we have
$$
\|\hat{L}_{n,\hat{\theta}_n}\|_{\mu}^2 = \|\hat{L}_{n,\tilde{\theta}_n}\|_{\mu}^2  + o_p(n^{-1}),
$$
where the cross product term is absorbed into $o_p(n^{-1})$ because $\|\hat{L}_{n,\tilde{\theta}_n}\|_{\mu} = O_p(n^{-1/2})$.
Because $\hat{L}_{n,\tilde{\theta}_n}(\cdot)$ and $\Gamma_0(\cdot)$ are orthogonal according to the definition of $\tilde{\theta}_n$, we can obtain
\begin{eqnarray}
\|\hat{L}_{n,\theta}\|_{\mu}^2 &=& \|\hat{L}_{n,\tilde{\theta}_n}(v) + \Gamma_0(v)' (\theta-\tilde{\theta}_n) \|_{\mu}^2 \nonumber \\
&=& \|\hat{L}_{n,\tilde{\theta}_n}\|_{\mu}^2 + \|\Gamma_0(v)' (\theta-\tilde{\theta}_n)\|_{\mu}^2. \nonumber 
\end{eqnarray}
By making $\theta$ equal to $\hat{\theta}_n$, we have
$$
o_p(n^{-1}) = \|\Gamma_0(v) (\hat{\theta}_n-\tilde{\theta}_n)\|_{\mu}^2 \geq c^2 \|\hat{\theta}_n-\tilde{\theta}_n\|^2.
$$
Hence, $\sqrt{n}(\hat{\theta}_n-\theta_0)=\sqrt{n}(\tilde{\theta}_n-\theta_0)+o_p(1)\rightsquigarrow N(0,\Delta_0^{-1} \Sigma_0 \Delta_0^{-1})$.
\end{proof}
\vspace{0.1in}

Suppose that each $W_i$ is a coordinate function of $(\prod_{i=1}^{\infty}S, \prod_{i=1}^{\infty}\sigma(S), \prod_{i=1}^{\infty} P)$.
Let $\omega$ denote one of the realizations of $W_i$, and let $(W_{n1}^*, \cdots , W_{nn}^*)$ denote the bootstrap sample.
Following \cite{hahn1996note}, we introduce the following notations.
Let $\{\zeta^*_n\}$ be a sequence of some bootstrap statistic: each $\zeta^*_n$ is some function $f_n(W_{n1}^*, \cdots , W_{nn}^*)$ of the bootstrap sample.
We write $\zeta^*_n = O_p^{\omega}(a_n)$ if $\zeta^*_n$, when conditioned on $\omega$, is $O_p(a_n)$ for almost all $\omega$.
If $\zeta^*_n$, when conditioned on $\omega$, is $o_p(a_n)$ for almost all $\omega$, we write $\zeta^*_n = o_p^{\omega}(a_n)$.
We write $\zeta^*_n = O_B(1)$ if, for a given subsequence $\{n'\}$, there exists a further subsequence $\{n''\}$ such that $\zeta^*_{n''} = O_p^{\omega}(1)$.
If for any subsequence $\{n'\}$ there is a further subsequence $\{n''\}$ such that $\zeta^*_{n''} = o_p^{\omega}(1)$, we write $\zeta^*_n = o_B(1)$.
Note that $\zeta^*_n = o_B(1)$ if and only if $\zeta^*_n$ converges weakly to zero in probability.

\begin{proof}[Proof of Theorem 4]
The proof is similar to that of \cite{brown2002weighted}.
First, we define
\begin{eqnarray}
M(\theta) &\equiv & \int D_{\theta}(v)^2 d\mu(v), \nonumber \\
M_n(\theta) &\equiv & \int \hat{D}_{n,\theta}(v)^2 d\mu(v), \nonumber \\
M_n^*(\theta) &\equiv & \int \hat{D}_{n,\theta}^*(v)^2 d\mu(v). \nonumber
\end{eqnarray}
Then, for any $\theta \rightarrow \theta_0$, we have
\begin{eqnarray}
M_n^*(\theta)-M_n(\theta) &=& \int (\hat{D}^*_{n,\theta}(v)-\hat{D}_{n,\theta}(v))^2 d\mu(v) + 2 \int \hat{D}_{n,\theta}(v) (\hat{D}^*_{n,\theta}(v)-\hat{D}_{n,\theta}(v)) d\mu(v) \nonumber \\
&=& \int (\hat{D}^*_{n,\theta_0}(v)-\hat{D}_{n,\theta_0}(v))^2 d\mu(v) + \left[ \int (\hat{D}^*_{n,\theta}(v)-\hat{D}_{n,\theta}(v))^2 d\mu(v) \right. \nonumber \\
& & \left. - \int (\hat{D}^*_{n,\theta_0}(v)-\hat{D}_{n,\theta_0}(v))^2 d\mu(v) \right] + 2 \int D_{\theta}(v) (\hat{D}^*_{n,\theta}(v)-\hat{D}_{n,\theta}(v)) d\mu(v) \nonumber \\
& & + 2 \int (\hat{D}_{n,\theta}(v) - D_{\theta}(v)) (\hat{D}^*_{n,\theta}(v)-\hat{D}_{n,\theta}(v)) d\mu(v). \nonumber 
\end{eqnarray}
Suppose that $\| \theta - \theta_0 \| \leq \delta_n$ for $\delta_n \downarrow 0$.
By Lemma 6, we obtain $\|\sqrt{n}(\hat{D}^*_{n,\theta}-\hat{D}_{n,\theta})\|_{\mu} - \|\sqrt{n}(\hat{D}^*_{n,\theta_0}-\hat{D}_{n,\theta_0})\|_{\mu}=o_B(1)$.
Hence, 
$$
\int (\hat{D}^*_{n,\theta}(v)-\hat{D}_{n,\theta}(v))^2 d\mu(v) - \int (\hat{D}^*_{n,\theta_0}(v)-\hat{D}_{n,\theta_0}(v))^2 d\mu(v) =o_B(n^{-1}).
$$
Similarly, by the Donsker property of $\{A_{\theta}^v:\theta \in \Theta, v \in \mathcal{V}\}$,
\begin{eqnarray}
& & \int (\hat{D}_{n,\theta}(v) - D_{\theta}(v)) (\hat{D}^*_{n,\theta}(v)-\hat{D}_{n,\theta}(v)) d\mu(v) \nonumber \\
&=& \int \hat{D}_{n,\theta_0}(v) (\hat{D}^*_{n,\theta}(v)-\hat{D}_{n,\theta}(v)) d\mu(v) + o_p(n^{-1}). \nonumber
\end{eqnarray}
Therefore, we have
\begin{eqnarray}
M_n^*(\theta)-M_n(\theta) &=& \int (\hat{D}^*_{n,\theta_0}(v)-\hat{D}_{n,\theta_0}(v))^2 d\mu(v) + 2 \int \hat{D}_{n,\theta_0}(v) (\hat{D}^*_{n,\theta}(v)-\hat{D}_{n,\theta}(v)) d\mu(v) \nonumber \\
& & + 2 \int D_{\theta}(v) (\hat{D}^*_{n,\theta}(v)-\hat{D}_{n,\theta}(v)) d\mu(v) + o_B(n^{-1}) \nonumber \\
&=& \int (\hat{D}^*_{n,\theta_0}(v)-\hat{D}_{n,\theta_0}(v))^2 d\mu(v) + 2 \int \hat{D}_{n,\theta_0}(v) (\hat{D}^*_{n,\theta}(v)-\hat{D}_{n,\theta}(v)) d\mu(v) \nonumber \\
& & + 2 (\theta - \theta_0)' \int \Gamma_0(v) (\hat{D}^*_{n,\theta}(v)-\hat{D}_{n,\theta}(v)) d\mu(v) \nonumber \\
& & + o_B(n^{-1/2}\| \theta-\theta_0 \| + n^{-1}). \nonumber
\end{eqnarray}
Consequently, for $\theta \rightarrow \theta_0$ and $\eta \rightarrow \theta_0$,
\begin{eqnarray}
& & M_n^*(\theta)-M_n^*(\eta) \nonumber \\
&=& [(M_n^*-M_n)(\theta)-(M_n^*-M_n)(\eta)] + [(M_n-M)(\theta)-(M_n-M)(\eta)] + [M(\theta)-M(\eta)] \nonumber \\
&=& 2 (\theta - \eta)' \int \Gamma_0(v) \left[ (\hat{D}^*_{n,\theta_0}(v)-\hat{D}_{n,\theta_0}(v))-(\hat{D}_{n,\theta_0}(v)-D_{\theta_0}(v)) \right] d\mu(v)  \nonumber \\
& & + \int \left[ (\theta-\theta_0)' \Gamma_0(v)+o(\|\theta-\theta_0\|) \right]^2 d\mu(v) - \int \left[ (\eta-\theta_0)' \Gamma_0(v)+o(\|\eta-\theta_0\|) \right]^2 d\mu(v) \nonumber \\
& & +o_B(n^{-1/2}\|\theta-\theta_0\|+n^{-1/2}\|\eta-\theta_0\|+n^{-1}) \nonumber \\
&=& 2 (\theta - \eta)' \int \Gamma_0(v) \left[ (\hat{D}^*_{n,\theta_0}(v)-\hat{D}_{n,\theta_0}(v))-(\hat{D}_{n,\theta_0}(v)-D_{\theta_0}(v)) \right] d\mu(v)  \nonumber \\
& & + (\theta-\theta_0)'\Delta_0(\theta-\theta_0) - (\eta-\theta_0)'\Delta_0(\eta-\theta_0) \nonumber \\
& & +o_B(\|\theta-\theta_0\|^2 + \|\eta-\theta_0\|^2 +n^{-1/2}\|\theta-\theta_0\| + n^{-1/2}\|\eta-\theta_0\| + n^{-1}). \label{M_n^*-M_n^*}
\end{eqnarray}
We define 
\begin{eqnarray}
\gamma_n &\equiv & \Delta_0^{-1} \int \Gamma_0(v)(\hat{D}_{n,\theta_0}(v)-D_{\theta_0}(v)) d\mu(v), \nonumber \\
\gamma_n^* &\equiv & \Delta_0^{-1} \int \Gamma_0(v) (\hat{D}^*_{n,\theta_0}(v)-\hat{D}_{n,\theta_0}(v))d\mu(v). \nonumber
\end{eqnarray}
Then, we can rewrite (\ref{M_n^*-M_n^*}) by
\begin{eqnarray}
M_n^*(\theta)-M_n^*(\eta) &=& 2(\theta-\eta)' \Delta_0 (\gamma_n+\gamma_n^*) + (\theta-\theta_0)'\Delta_0(\theta-\theta_0) - (\eta-\theta_0)'\Delta_0(\eta-\theta_0) \nonumber \\
& & +o_B(\|\theta-\theta_0\|^2 + \|\eta-\theta_0\|^2 +n^{-1/2}\|\theta-\theta_0\| + n^{-1/2}\|\eta-\theta_0\| + n^{-1}). \nonumber
\end{eqnarray}
We take $\theta = \hat{\theta}_n^*$ and $\eta = \theta_0-(\gamma_n+\gamma_n^*)$.
Observe that $\eta \in \Theta$ for $n$ is sufficiently large, because $\theta_0$ is an interior point of $\Theta$.
Hence, we have
\begin{eqnarray}
0 &\geq & M_n^*(\hat{\theta}_n^*) - M_n^*(\theta_0-(\gamma_n+\gamma_n^*)) \nonumber \\
&=& 2\left[ (\hat{\theta}_n^*-\theta_0) + (\gamma_n + \gamma_n^*) \right]' \Delta_0 (\gamma_n + \gamma_n^*) \nonumber \\
& & + (\hat{\theta}_n^*-\theta_0)' \Delta_0 (\hat{\theta}_n^*-\theta_0) - (\gamma_n + \gamma_n^*)' \Delta_0 (\gamma_n + \gamma_n^*) \nonumber \\
& & + o_B(\|\hat{\theta}_n^*-\theta_0\|^2 + \|\gamma_n + \gamma_n^*\|^2 +n^{-1/2}\|\hat{\theta}_n^*-\theta_0\| + n^{-1/2}\|\gamma_n + \gamma_n^*\| + n^{-1}) \nonumber \\
&=& \left[ (\hat{\theta}_n^*-\theta_0) + (\gamma_n + \gamma_n^*) \right]' \Delta_0 \left[ (\hat{\theta}_n^*-\theta_0) + (\gamma_n + \gamma_n^*) \right] \nonumber \\
& & + o_B(\|\hat{\theta}_n^*-\theta_0\|^2 + \|\gamma_n + \gamma_n^*\|^2 +n^{-1/2}\|\hat{\theta}_n^*-\theta_0\| + n^{-1/2}\|\gamma_n + \gamma_n^*\| + n^{-1}). \nonumber 
\end{eqnarray}
By the same argument in Theorem 4, we have $\|\hat{\theta}_n^* - \hat{\theta}_n\| = O_B(n^{-1/2})$.
Hence, $\|\hat{\theta}_n^*-\theta_0\| \leq \|\hat{\theta}_n^* - \hat{\theta}_n\| + \|\hat{\theta}_n - \theta_0\| = O_B(n^{-1/2}) + O_P(n^{-1/2})$.
Since $\gamma_n = O_p(n^{-1/2})$ and $\gamma_n^* =O_B(n^{-1/2})$, we have
$$
n \| \hat{\theta}_n^*-\theta_0 + (\gamma_n + \gamma_n^*) \|^2 = o_B(1).
$$
Because it follows from Theorem 4 that
$$
\hat{\theta}_n - \theta_0 = -\gamma_n + o_p(n^{-1/2}),
$$
and we can obtain $\hat{\theta}_n^*-\hat{\theta}_n= -\gamma_n^* + o_B(n^{-1/2})$.
The term $\gamma_n^*$ has the same limiting distribution as $\gamma_n$ by the bootstrap theorem for the mean in $\mathbb{R}^{d_{\theta}}$.
This concludes the proof.
\end{proof}
\vspace{0.1in}

\begin{proof}[Proof of Theorem 5]
We establish the partial identification of $g_t$ by showing that we can identify functions $T^U_{t,x',x}:\mathbb{R}\rightarrow \mathbb{R}$ and $T^L_{t,x',x}:\mathbb{R}\rightarrow \mathbb{R}$ that satisfy
\begin{eqnarray}
g_t(x',u) &\leq & T^U_{t,x',x}\left( g_t(x,u) \right), \nonumber \\
g_t(x',u) &\geq & T^L_{t,x',x}\left( g_t(x,u) \right). \label{TU and TL*}
\end{eqnarray}
If $T^U_{x',x}$ is identified for all $x,x'$, then we can obtain a lower bound of the function $g$ as follows.
For any random variables $V,W$, we define
\begin{eqnarray}
F^{+}_{V|W}(v|w) &\equiv & P\left( V \leq v | W=w \right), \nonumber \\
F^{-}_{V|W}(v|w) &\equiv & P\left( V < v | W=w \right), \nonumber
\end{eqnarray}
where $F^+$ is an usual distribution function.
In addition, we define 
\begin{equation}
G_{t,x}^L(y) \equiv \int F^+_{Y_t|X_t=x'}\left( T^U_{t,x',x}(y) \right) dF_{X_t}(x'). \label{def GL}
\end{equation}
Then, we have
\begin{eqnarray}
G_{t,x}^L\left( g_t(x,u) \right) &=& \int F^+_{Y_t|X_t=x'}\left( T^U_{t,x',x}\left( g_t(x,u) \right) \right) dF_{X_t}(x') \nonumber \\
& \geq & \int F^+_{Y_t|X_t=x'}\left(  g_t(x',u) \right) dF_{X_t}(x') \nonumber \\
&=& \int P\left( g_t(x',U_t) \leq g_t(x',u) | X_t = x' \right) dF_{X_t}(x') \nonumber \\
& \geq & \int P\left( U_t \leq u | X_t = x' \right) dF_{X_t}(x') = u, \label{GL*}
\end{eqnarray}
where the first inequality follows from (\ref{TU and TL*}).
Because $g_t(x,u)$ is weakly increasing in $u$, we have $\{U_t\leq u\} \subset \{g_t(x,U_t) \leq g_t(x,u)\}$ and the second inequality of (\ref{GL*}) holds.
Hence, because $G_{t,x}^L\left( g_t(x,u) \right) \geq u$, we can obtain a lower bound
\begin{equation}
g_t(x,u) \geq \inf \{y \in [\underline{y},\overline{y}] :G_{t,x}^L(y) \geq u\}. \label{lower bound*}
\end{equation}
Similarly, we define
\begin{equation}
G^U_{t,x}(y) \equiv \int F^-_{Y_t|X_t=x'}\left( T^L_{t,x',x}(y) \right) dF_{X_t}(x'). \label{def GU}
\end{equation}
Then, we have
\begin{eqnarray}
G^U_{t,x}\left( g_t(x,u) \right) &= & \int F^-_{Y_t|X_t=x'}\left(  T^L_{t,x',x}(g_t(x,u)) \right) dF_{X_t}(x') \nonumber \\
&\leq & \int F^-_{Y_t|X_t=x'}\left(  g_t(x',u) \right) dF_{X_t}(x') \nonumber \\
&=& \int P\left( g_t(x',U_t) < g_t(x',u) | X_t = x' \right) dF_{X_t}(x') \nonumber \\
& \leq & \int P\left( U_t < u | X_t = x' \right) dF_{X_t}(x') = u. \label{GU*}
\end{eqnarray}
Owing to the weak monotonicity of $g_t$, we have $\{g_t(x,U_t) < g_t(x,u)\} \subset \{U_t < u\}$ and the second inequality of (\ref{GU*}) holds.
Hence, similarly, we can obtain an upper bound
\begin{equation}
g_t(x,u) \geq \sup \{y\in [\underline{y},\overline{y}]: G^U_{t,x}(y) \leq u\}. \label{upper bound*}
\end{equation}

We here describe the construction of the functions $T_{t,x',x}^U(y)$ and $T_{t,x',x}^L(y)$ that satisfy (\ref{TU and TL*}).
We define 
\begin{eqnarray}
Q^+_{Y_t|\mathbf{X}}(\tau | \mathbf{x}) &\equiv & \sup \{y \in [\underline{y},\overline{y}] : F^-_{Y_t|\mathbf{X}}(y|\mathbf{x}) \leq \tau \}, \nonumber \\
Q^-_{Y_t|\mathbf{X}}(\tau | \mathbf{x}) &\equiv & \inf \{y \in [\underline{y},\overline{y}] : F^+_{Y_t|\mathbf{X}}(y|\mathbf{x}) \geq \tau \}. \nonumber
\end{eqnarray}
Because $\{U:U\leq u\} \subset \{U:g(x,U) \leq g(x,u)\}$ and $\{U:g(x,U) < g(x,u)\} \subset \{U:U < u\}$, for all $(x_1,x_2) \in \mathcal{X}_{12}$ and $t,s \in \{(1,2),(2,1)\}$, we have
\begin{eqnarray}
F^-_{Y_t|\mathbf{X}}\left( g_t(x_t,u) | x_1,x_2 \right) &=& P\left( g_t(x_t,U_t) < g_t(x_t,u) | X_1=x_1, X_2=x_2 \right) \nonumber \\
&\leq & P\left( U_t < u | X_1=x_1, X_2=x_2 \right) \nonumber \\
&=& P\left( U_s < u | X_1=x_1, X_2=x_2 \right) \nonumber \\
&\leq & P\left( g_s(x_s,U_s) < g_s(x_s,u) | X_1=x_1, X_2=x_2 \right) \nonumber \\
&=& F^+_{Y_s|\mathbf{X}}\left( g_s(x_s,u) | x_1,x_2 \right). \nonumber
\end{eqnarray}
For $t\neq s$, we define
\begin{eqnarray}
\tilde{T}^{U,t,s}_{x_t,x_{s}}(y) &\equiv & Q^+_{Y_t|(X_t,X_{s})=(x_t,x_{s})}\left( F^+_{Y_{s}|(X_t,X_{s})=(x_t,x_{s})}( y ) \right) \nonumber \\
\tilde{T}^{L,t,s}_{x_t,x_{s}}(y) &\equiv & Q^-_{Y_t|(X_t,X_{s})=(x_t,x_{s})}\left( F^-_{Y_{s}|(X_t,X_{s})=(x_t,x_{s})}( y ) \right). \nonumber
\end{eqnarray}
Then, we have
$$
g_t(x_t,u) \leq \tilde{T}^{U,t,s}_{x_t,x_{s}}\left( g_s(x_s,u) \right),
$$
because $Q^+_{Y_t|\mathbf{X}}(F^-_{Y_t|\mathbf{X}}(y | \mathbf{x}) | \mathbf{x} ) = \sup \{y' \in [\underline{y},\overline{y}]: F^-_{Y_t|\mathbf{X}}(y' | \mathbf{x}) \leq F^+_{Y_t|\mathbf{X}}(y | \mathbf{x}) \} \geq y$. 
Hence, if $(x',\tilde{x})\in supp(X_t,X_s)$ and $(x,\tilde{x})\in supp(X_t,X_s)$, then we have
\begin{equation}
g_t(x',u) \leq \tilde{T}^{U,t,s}_{x',\tilde{x}} \circ \tilde{T}^{U,s,t}_{\tilde{x},x}\left( g_t(x,u) \right). \label{tilde TU*}
\end{equation}
Similarly, we have
\begin{equation}
g_t(x',u) \geq \tilde{T}^{L,t,s}_{x',\tilde{x}} \circ \tilde{T}^{L,s,t}_{\tilde{x},x}\left( g_t(x,u) \right). \label{tilde TL*}
\end{equation}
We define
\begin{eqnarray}
T^U_{t,x',x}(y) &\equiv & \begin{cases}
\inf_{\tilde{x}} \{ \tilde{T}^{U,t,s}_{x',\tilde{x}} \circ \tilde{T}^{U,s,t}_{\tilde{x},x}(y) \} & \text{if $x \neq x'$} \\
y & \text{if $x = x'$}
\end{cases} \nonumber \\
T^L_{t,x',x}(y) &\equiv & \begin{cases}
\sup_{\tilde{x}} \{ \tilde{T}^{L,t,s}_{x',\tilde{x}} \circ \tilde{T}^{L,s,t}_{\tilde{x},x}(y)\} & \text{if $x \neq x'$} \\
y & \text{if $x = x'$}
\end{cases}. \nonumber
\end{eqnarray}
Then, these functions satisfy (\ref{TU and TL*}).
\end{proof}
\vspace{0.1in}

\begin{proof}[Corollary 3]
Similarly to (\ref{tilde TU*}) and (\ref{tilde TL*}), for $t \neq s$, we have
\begin{eqnarray}
g_t(x',u) & \leq & \tilde{T}^{U,t,s}_{x',\tilde{x}} \circ \tilde{T}^{U,s,t}_{\tilde{x},x} \left( g_t(x,u) \right), \nonumber \\
g_t(x',u) & \geq & \tilde{T}^{L,t,s}_{x',\tilde{x}} \circ \tilde{T}^{L,s,t}_{\tilde{x},x} \left( g_t(x,u) \right), \nonumber
\end{eqnarray}
where 
\begin{eqnarray}
\tilde{T}^{U,t,s}_{x_t,x_s}(y) & \equiv & \inf_{\mathbf{x}_{-(t,s)}} Q^+_{Y_t|(X_t,X_s,\mathbf{X}_{-(t,s)})=(x_t,x_s,\mathbf{x}_{-(t,s)})}\left( F^+_{Y_s|(X_t,X_s,\mathbf{X}_{-(t,s)})=(x_t,x_s,\mathbf{x}_{-(t,s)})}( y ) \right), \nonumber \\
\tilde{T}^{L,t,s}_{x_t,x_s}(y) & \equiv & \sup_{\mathbf{x}_{-(t,s)}} Q^-_{Y_t|(X_t,X_s,\mathbf{X}_{-(t,s)})=(x_t,x_s,\mathbf{x}_{-(t,s)})}\left( F^-_{Y_s|(X_t,X_s,\mathbf{X}_{-(t,s)})=(x_t,x_s,\mathbf{x}_{-(t,s)})}( y ) \right), \nonumber
\end{eqnarray}
and $\mathbf{X}_{-(t,s)}$ denotes a vector of $\mathbf{X}$ except $X_t$ and $X_s$.
Hence,
\begin{eqnarray}
\hat{T}^U_{t,x',x}(y) & \equiv & \begin{cases}
\inf_{s \neq t, \tilde{x}\in \mathcal{X}_s} \{ \tilde{T}^{U,t,s}_{x',\tilde{x}} \circ \tilde{T}^{U,s,t}_{\tilde{x},x}(y) \} & \text{if $x \neq x'$} \\
y & \text{if $x = x'$}
\end{cases} \nonumber
\end{eqnarray}
and
\begin{eqnarray}
\hat{T}^L_{t,x',x}(y) & \equiv & \begin{cases}
\sup_{s \neq t, \tilde{x}\in \mathcal{X}_s} \{ \tilde{T}^{L,t,s}_{x',\tilde{x}} \circ \tilde{T}^{L,s,t}_{\tilde{x},x}(y) \} & \text{if $x \neq x'$} \\
y & \text{if $x = x'$}
\end{cases} \nonumber
\end{eqnarray}
satisfy inequality (\ref{TU and TL*}).
Define
\begin{eqnarray}
\hat{G}^L_{t,x}(y) &\equiv & \int F^+_{Y_t|X_t=x'}\left( \hat{T}^U_{t,x',x}(y) \right) dF_{X_t}(x'), \nonumber \\
\hat{G}^U_{t,x}(y) &\equiv & \int F^-_{Y_t|X_t=x'}\left( \hat{T}^L_{t,x',x}(y) \right) dF_{X_t}(x'). \label{GL and GU T>=3}
\end{eqnarray}
By a similar argument to the proof for Theorem 2, we have $g_t(x,u) \geq \inf \{y \in [\underline{y},\overline{y}] : G_{t,x}^L(y) \geq u\}$ and $g_t(x,u) \leq \sup \{y \in [\underline{y},\overline{y}] : G_{t,x}^U(y) \leq u\}$.
\end{proof}
\vspace{0.1in}

\begin{proof}[Proof of Theorems 6 and 7]
We define $\mu_{T}$ as a probability measure on $\{1,\cdots ,T\}$ such that $\mu_T(\{t\})=1/T$ for all $t \in \{1, \cdots , T\}$. 
Let $\tilde{\mu}$ be $\mu_T \times \mu$, then $\frac{1}{T} \sum_{t=1}^T \|\hat{D}^t_{n,\theta}\|_{\mu}^2 = \frac{1}{T} \sum_{t=1}^T \int \hat{D}^t_{n,\theta}(v)^2 d\mu(v) = \|\hat{D}^t_{n,\theta}(v)\|_{\tilde{\mu}}^2$, where $\| \cdot \|_{\tilde{\mu}}$ is the $L_2$-norm with respect to $\tilde{\mu}$.
Hence, the estimator $\hat{\theta}_n$ satisfies
$$
\| \hat{D}^t_{n,\hat{\theta}_n}(v) \|_{\tilde{\mu}} = \inf_{\theta \in \Theta} \| \hat{D}^t_{n,\theta}(v) \|_{\tilde{\mu}}.
$$
Therefore, we can prove Theorems 6 and 7 by following arguments similar to the proofs of Theorems 3 and 4, respectively.
\end{proof}

\section*{Appendix 3: Auxiliary Lemmas}

\begin{Lemma}
Under Assumptions C3 and C4, $\|D_\theta\|_{\mu}$ is continuous in $\theta$.
\end{Lemma}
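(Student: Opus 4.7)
The plan is to establish a Lipschitz bound of the form $|D_{\theta'}(v) - D_\theta(v)| \leq L\|\theta' - \theta\|$ that is uniform in $v$, and then pass to $\|D_\theta\|_\mu$ via the reverse triangle inequality for the $L_2(\mu)$ norm. Because $\mu$ is a probability measure on $\mathcal{V}$, uniform-in-$v$ Lipschitz continuity of $D_\theta(v)$ in $\theta$ immediately yields
$$\bigl|\|D_{\theta'}\|_\mu - \|D_\theta\|_\mu\bigr| \leq \|D_{\theta'} - D_\theta\|_\mu \leq L\|\theta' - \theta\|,$$
which proves the claim (in fact, even Lipschitz continuity on $\Theta$).

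The core of the argument is to control the difference between the two indicator functions appearing in $D_\theta(v)$. Starting from the representation of $D_\theta$ in (\ref{D}), I would write
$$|D_{\theta'}(v) - D_\theta(v)| \leq \sum_{t=1}^{2} E\bigl[\bigl|\mathbf{1}\{Y_t \leq g_t(X_t,v_u;\theta')\} - \mathbf{1}\{Y_t \leq g_t(X_t,v_u;\theta)\}\bigr|\bigr],$$
dropping the factor $\mathbf{1}\{\mathbf{X}\leq v_\mathbf{x}\}$ since it is bounded by $1$. Conditioning on $\mathbf{X}$ and using Assumption C4, which supplies a conditional density $f_{Y_t|\mathbf{X}}(\cdot|\mathbf{x})$ uniformly bounded by some constant $M<\infty$, each term satisfies
$$E\bigl[\bigl|\mathbf{1}\{Y_t \leq a\} - \mathbf{1}\{Y_t \leq b\}\bigr|\,\big|\,\mathbf{X}\bigr] = \bigl|F_{Y_t|\mathbf{X}}(a|\mathbf{X}) - F_{Y_t|\mathbf{X}}(b|\mathbf{X})\bigr| \leq M\,|a-b|,$$
so that
$$E\bigl[\bigl|\mathbf{1}\{Y_t \leq g_t(X_t,v_u;\theta')\} - \mathbf{1}\{Y_t \leq g_t(X_t,v_u;\theta)\}\bigr|\bigr] \leq M\,E\bigl[|g_t(X_t,v_u;\theta') - g_t(X_t,v_u;\theta)|\bigr].$$

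The second step is to invoke Assumption C3 to bound the right-hand side by $M\,E[\bar{g}(X_t)]\,\|\theta'-\theta\| \leq MK\,\|\theta'-\theta\|$, uniformly in $v_u$ (and trivially in $v_\mathbf{x}$). Summing over $t=1,2$ gives the uniform Lipschitz bound with $L = 2MK$, and the reverse-triangle-inequality step above then yields continuity (indeed, Lipschitz continuity) of $\theta \mapsto \|D_\theta\|_\mu$.

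There is no real obstacle here: the only mildly subtle point is noticing that C4 supplies both the uniform density bound needed to turn the expected indicator difference into $M\,|a-b|$ and the continuity (in $y$) that keeps the argument free of pathological jumps, while C3 supplies the $\|\theta'-\theta\|$ factor together with an integrable envelope $\bar{g}$. The argument requires no smoothness of $g_t$ in $\theta$ and does not depend on the measure $\mu$ beyond it being a probability measure on $\mathcal{V}$.
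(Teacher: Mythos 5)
Your proof is correct and follows essentially the same route as the paper: bound the indicator-difference by the conditional CDF difference, use the uniform density bound from C4 and the Lipschitz envelope from C3 to obtain a uniform-in-$v$ Lipschitz bound on $D_\theta(v)$, and conclude via the reverse triangle inequality in $L_2(\mu)$. The only cosmetic difference is that you sum over $t$ while the paper writes $2\max_t$, which yields the same conclusion with an equivalent constant.
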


\begin{proof}
By Assumption C4, the density $f_{Y_t|\mathbf{X}}(y|\mathbf{x})$ is bounded above by a constant $C$.
For any $\theta',\theta$ and $v$, we have
\begin{eqnarray}
& & |D_{\theta'}(v)-D_{\theta}(v)| \nonumber \\
&\leq & 2 \max_t |E\left[ \left( \mathbf{1}\{Y_t\leq g_t(X_t,v_u;\theta')\}-\mathbf{1}\{Y_t\leq g_t(X_t,v_u;\theta)\}  \right) \mathbf{1}\{\mathbf{X}\leq v_\mathbf{x}\} \right]| \nonumber \\
&\leq & 2 \max_t |E\left[ \left( F_{Y_t|\mathbf{X}} \left( g_t(X_t,v_u;\theta')| \mathbf{X} \right) - F_{Y_t|\mathbf{X}} \left( g_t(X_t,v_u;\theta)| \mathbf{X} \right) \right) \mathbf{1}\{\mathbf{X}\leq v_\mathbf{x}\} \right]| \nonumber \\
&\leq & 2 \max_t  \int \left| \int_{g_t(x_t,v_u;\theta)}^{g_t(x_t,v_u;\theta')} f_{Y_t|\mathbf{X}}(y|\mathbf{x}) dy  \right| dF_{\mathbf{X}}(\mathbf{x})  \nonumber \\
&\leq & 2 C \max_t \int \left| g_t(x_t,v_u;\theta')-g_t(x_t,v_u;\theta) \right| dF_{X_t}(x_t) \leq 2CK \|\theta'-\theta\|. \nonumber 
\end{eqnarray}
Hence, $| \|D_{\theta'}\|_{\mu} - \|D_{\theta}\|_{\mu} | \leq \|D_{\theta'}-D_{\theta}\|_{\mu} \leq 2CK \|\theta'-\theta\|$, which implies the continuity of $\|D_{\theta}\|_{\mu}$.
\end{proof}
\vspace{0.1in}

\begin{Lemma}
Under Assumptions C3, C4, and C5, 
\begin{equation}
\sup_{\theta,v} \left| \hat{D}_{n,\theta}(v) - D_{\theta}(v) \right| = o_{a.s.}(1), \label{Glivenko-Cantelli}
\end{equation}
and for any $\delta_n \downarrow 0$
\begin{equation}
\sup_{v \in \mathcal{V}, \|\theta-\theta_0\| < \delta_n} \left| \sqrt{n} \left( \hat{D}_{n,\theta}(v) - D_{\theta}(v) \right) - \sqrt{n} \left( \hat{D}_{n,\theta_0}(v) - D_{\theta_0}(v) \right) \right| = o_p(1). \label{Donsker}
\end{equation}
\end{Lemma}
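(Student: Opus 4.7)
The plan is to treat the class
$$
\mathcal{F} \equiv \{A_{\theta}^v : \theta \in \Theta,\ v \in \mathcal{V}\}
$$
with empirical process machinery, and to show that it is $P$-Donsker with a constant envelope (bounded by $2$). Once this is established, the Glivenko--Cantelli conclusion (\ref{Glivenko-Cantelli}) follows immediately, and the stochastic equicontinuity assertion (\ref{Donsker}) follows from the asymptotic equicontinuity of the empirical process indexed by a Donsker class, together with the fact that the $L_2(P)$ diameter of $\{A_{\theta}^v - A_{\theta_0}^v : v \in \mathcal{V}\}$ shrinks to zero as $\theta \to \theta_0$.

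To verify that $\mathcal{F}$ is Donsker, I would first exploit Assumption C5 to write, for each $t$,
$$
\mathbf{1}\{y_t \leq g_t(x_t,v_u;\theta)\} = \mathbf{1}\Big\{ y_t - \sum_{j=1}^{J_t} \alpha_j^t(\theta,v_u) \beta_j(x_t) \leq 0\Big\}.
$$
The function $(y_t,x_t) \mapsto y_t - \sum_{j=1}^{J_t} \alpha_j \beta_j(x_t)$ lies in the fixed finite-dimensional vector space spanned by $y_t, \beta_1(x_t), \ldots, \beta_{J_t}(x_t)$ as $(\alpha_1,\ldots,\alpha_{J_t})$ ranges over $\mathbb{R}^{J_t}$. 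By the standard Vapnik--\v Cervonenkis lemma for subgraphs of finite-dimensional function spaces (e.g., van der Vaart and Wellner Lemma 2.6.15), the corresponding collection of half-spaces is VC, and hence the class of indicator functions
$\{(y_t,x_t) \mapsto \mathbf{1}\{y_t \leq g_t(x_t,v_u;\theta)\} : \theta \in \Theta, v_u \in \mathcal{U}\}$
is a VC-subgraph class. The rectangle class $\{\mathbf{x} \mapsto \mathbf{1}\{\mathbf{x} \leq v_{\mathbf{x}}\} : v_{\mathbf{x}} \in \mathcal{X}_{12}\}$ is a classical VC class. Differences and products of uniformly bounded VC classes are again VC-type (with envelopes controlled by constants), so $\mathcal{F}$ has uniformly bounded uniform entropy integral. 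Combined with the constant envelope $F \equiv 2$, this shows $\mathcal{F}$ is $P$-Donsker, and in particular Glivenko--Cantelli, which yields (\ref{Glivenko-Cantelli}).

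For (\ref{Donsker}), I would apply the standard fact that for a Donsker class, the empirical process is asymptotically uniformly equicontinuous with respect to the $L_2(P)$-seminorm: for every $\eta > 0$ there is $\delta > 0$ such that
$$
\limsup_n P\Big(\sup_{\rho_P(f,g) < \delta} |\mathbb{G}_n f - \mathbb{G}_n g| > \eta \Big) < \eta,
$$
where $\mathbb{G}_n = \sqrt{n}(P_n - P)$. It therefore suffices to show
$$
\sup_{v\in\mathcal{V}} E\big[ (A_{\theta}^v(W) - A_{\theta_0}^v(W))^2 \big] \longrightarrow 0 \quad \text{as } \theta \to \theta_0.
$$
This estimate follows from the Lipschitz argument already used in Lemma 1: bounding by a sum over $t$, the envelope $1$ allows $(\cdot)^2$ to be dropped, and one obtains
$$
E|\mathbf{1}\{Y_t \leq g_t(X_t,v_u;\theta)\} - \mathbf{1}\{Y_t \leq g_t(X_t,v_u;\theta_0)\}| \leq C \, E|g_t(X_t,v_u;\theta) - g_t(X_t,v_u;\theta_0)| \leq C K \|\theta - \theta_0\|
$$
uniformly in $v_u$, by Assumptions C4 and C3 respectively. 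Taking $\theta$ within $\delta_n \downarrow 0$ of $\theta_0$ makes the $L_2$ distance tend uniformly to zero, and the Donsker equicontinuity then delivers (\ref{Donsker}).

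The main obstacle I anticipate is the VC-subgraph verification for the class of indicators generated by $g_t(x,u;\theta)$: Assumption C5 is precisely the device that lets one reduce an otherwise infinite-dimensional parametrization to the VC regime, and it must be combined carefully with the rectangle indicator $\mathbf{1}\{\mathbf{x} \leq v_\mathbf{x}\}$ and the two separate $t$-terms so that the resulting composite class retains a universal uniform entropy bound. Once this is handled, the remainder of the argument is a routine application of equicontinuity of Donsker classes together with the Lipschitz estimate from Lemma 1.
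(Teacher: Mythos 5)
Your proposal is correct and follows essentially the same route as the paper: establish that $\{A_{\theta}^v\}$ is $P$-Donsker (hence Glivenko--Cantelli) by combining the VC rectangle class with the VC-subgraph class generated via Assumption C5 and the preservation properties in van der Vaart and Wellner, then obtain (\ref{Donsker}) from asymptotic equicontinuity together with the bound $\sup_v E[(A_{\theta}^v - A_{\theta_0}^v)^2] \leq \mathrm{const}\cdot\|\theta-\theta_0\|$ derived from Assumptions C3 and C4. The only difference is cosmetic: you spell out the finite-dimensional vector-space argument (Lemma 2.6.15) for the VC-subgraph step, where the paper simply cites Examples 2.10.7 and 2.10.8.
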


\begin{proof}
The collection of indicator functions $\{ \mathbf{x} \mapsto \mathbf{1}\{\mathbf{x} \leq v_\mathbf{x}\} : v_\mathbf{x} \in \mathcal{X}_{12} \}$ is a VC-class.
By Assumption C5, the collection of indicator functions for subgraphs of $\{g_t(\cdot,v_u;\theta) : \theta \in \Theta, v_u \in \mathcal{U} \}$ is also a VC-class.
By reference to Examples 2.10.7 and 2.10.8 in \cite{van1996weak}, $\{A_{\theta}^v: \theta \in \Theta, v \in \mathcal{V}\}$ is $P$-Donsker, and also $P$-Glivenko--Cantelli.
Hence, we have (\ref{Glivenko-Cantelli}) and for any $\delta_n \downarrow 0$
\begin{eqnarray}
& & \sup_{(\theta,v), (\theta',v') : \mathbb{P}(A_{\theta}^v - A_{\theta'}^{v'})^2 < \delta_n} | \mathbb{G}_n A_{\theta}^v - \mathbb{G}_n A_{\theta'}^{v'} | = o_p(1) \nonumber \\
& \Rightarrow & \sup_{v, \theta, \theta_0 : \mathbb{P}(A_{\theta}^v - A_{\theta_0}^v)^2 < \delta_n} | \mathbb{G}_n A_{\theta}^v - \mathbb{G}_n A_{\theta_0}^v | = o_p(1). \nonumber
\end{eqnarray}
Then, we have
\begin{eqnarray}
& & \mathbb{P}(A_{\theta}^v - A_{\theta_0}^v)^2 \nonumber \\
&\leq & E\left[ \left\{ (\mathbf{1}\{Y_1 \leq g_1(X_1,v_u;\theta)\} - \mathbf{1}\{Y_1 \leq g_1(X_1,v_u;\theta_0)\}) \right. \right. \nonumber \\
& & \left. \left. - (\mathbf{1}\{Y_2 \leq g_2(X_2,v_u;\theta)\} - \mathbf{1}\{Y_2 \leq g_2(X_2,v_u;\theta_0)\}) \right\}^2 \right] \nonumber \\
&\leq & 4 \max_{t} E\left[ \left| \mathbf{1}\{Y_t \leq g_t(X_t,v_u;\theta)\} - \mathbf{1}\{Y_t \leq g_t(X_t,v_u;\theta_0)\} \right| \right] \nonumber \\
&=& 4 \max_{t} E\left[ \left| \mathbf{1}\{g_t(X_t,v_u;\theta_0) < Y_t \leq g_t(X_t,v_u;\theta)\} + \mathbf{1}\{g_t(X_t,v_u;\theta)\} < Y_t \leq g_t(X_t,v_u;\theta_0)\} \right| \right] \nonumber \\
&\leq & 8 \max_{t} \int \left| F_{Y_t|\mathbf{X}} \left( g_t(x_t,v_u;\theta)| \mathbf{x} \right) - F_{Y_t|\mathbf{X}} \left( g_t(x_t,v_u;\theta_0)| \mathbf{x} \right) \right| dF_{\mathbf{X}}(\mathbf{x}) \nonumber \\
&\leq & 8 CK \|\theta - \theta_0\|. \nonumber
\end{eqnarray}
Because $\|\theta - \theta_0\|\rightarrow 0$ implies that $\mathbb{P}(A_{\theta}^v - A_{\theta_0}^v)^2 \rightarrow 0$, we have (\ref{Donsker}).
\end{proof}
\vspace{0.1in}

\begin{Lemma}
Under Assumptions C4, N2, and N3, $D_{\theta}(v)$ is continuously differentiable in $\theta$ in a neighborhood of $\theta_0$ for all $v$, and $|D_{\theta}(v) - D_{\theta_0}(v) - \Gamma_0(v)'(\theta - \theta_0)| = o(\|\theta - \theta_0\|)$ uniformly over $v$.
\end{Lemma}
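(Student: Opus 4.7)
The plan is to first compute $\Gamma_\theta(v)$ explicitly by exchanging derivative and expectation, then establish continuity of this derivative in $\theta$ (both via dominated convergence), and finally convert equicontinuity into the claimed uniform first-order approximation by integrating the derivative along the line segment from $\theta_0$ to $\theta$.

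\textbf{Step 1 (formula for $\Gamma_\theta(v)$).} By conditioning on $\mathbf{X}$, I would rewrite
\begin{equation*}
D_\theta(v) = E\!\left[ \left( F_{Y_1|\mathbf{X}}(g_1(X_1,v_u;\theta)|\mathbf{X}) - F_{Y_2|\mathbf{X}}(g_2(X_2,v_u;\theta)|\mathbf{X}) \right) \mathbf{1}\{\mathbf{X} \leq v_\mathbf{x}\} \right].
\end{equation*}
Assumption C4 gives that $y \mapsto F_{Y_t|\mathbf{X}}(y|\mathbf{x})$ is continuously differentiable with bounded derivative $f_{Y_t|\mathbf{X}}(y|\mathbf{x}) \leq C$; Assumption N2 gives that $\theta \mapsto g_t(x,v_u;\theta)$ is $C^1$ near $\theta_0$ with $\|\nabla_\theta g_t(x,v_u;\theta)\| \leq \nabla \bar{g}(x)$, where $E[\nabla \bar{g}(X_t)] < \infty$. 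By the chain rule the pointwise derivative of the integrand is bounded in norm by $C\,\nabla \bar{g}(X_t)\,\mathbf{1}\{\mathbf{X}\leq v_\mathbf{x}\}$, an integrable dominating function. Dominated convergence then justifies differentiating under the expectation, yielding
\begin{equation*}
\Gamma_\theta(v) = E\!\left[ \sum_{t=1}^{2} (-1)^{t+1} f_{Y_t|\mathbf{X}}(g_t(X_t,v_u;\theta)|\mathbf{X})\, \nabla_\theta g_t(X_t,v_u;\theta)\, \mathbf{1}\{\mathbf{X} \leq v_\mathbf{x}\} \right].
\end{equation*}

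\textbf{Step 2 (continuity of $\Gamma_\theta(v)$ in $\theta$).} For fixed $v$ and a sequence $\theta_n \to \theta$ in a neighborhood of $\theta_0$, the integrand converges pointwise by continuity of $f_{Y_t|\mathbf{X}}(\cdot|\mathbf{x})$ (C4) and of $\nabla_\theta g_t(x,v_u;\cdot)$ (N2), and remains dominated by $C\,\nabla\bar{g}(X_t)$. Dominated convergence yields $\Gamma_{\theta_n}(v) \to \Gamma_\theta(v)$, so $D_\theta(v)$ is continuously differentiable in $\theta$ near $\theta_0$ for each $v$.

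\textbf{Step 3 (uniform linear approximation).} Since $\theta \mapsto D_\theta(v)$ is $C^1$ on a convex neighborhood $B$ of $\theta_0$, for $\theta \in B$ the fundamental theorem of calculus along the segment $\theta_0 + s(\theta-\theta_0)$, $s\in[0,1]$, gives
\begin{equation*}
D_\theta(v) - D_{\theta_0}(v) - \Gamma_0(v)'(\theta-\theta_0) = \int_0^1 \bigl[ \Gamma_{\theta_0 + s(\theta-\theta_0)}(v) - \Gamma_0(v) \bigr]' (\theta-\theta_0)\, ds.
\end{equation*}
Taking absolute values and applying Cauchy--Schwarz,
\begin{equation*}
\sup_{v \in \mathcal{V}} \bigl| D_\theta(v) - D_{\theta_0}(v) - \Gamma_0(v)'(\theta-\theta_0) \bigr| \leq \|\theta-\theta_0\| \cdot \sup_{s \in [0,1]} \sup_{v \in \mathcal{V}} \bigl\| \Gamma_{\theta_0 + s(\theta-\theta_0)}(v) - \Gamma_0(v) \bigr\|.
\end{equation*}
Assumption N3(ii), equicontinuity of $\{\Gamma_\theta(v):v\in\mathcal{V}\}$ at $\theta_0$, implies that the inner supremum vanishes as $\|\theta-\theta_0\|\to 0$, delivering the $o(\|\theta-\theta_0\|)$ bound uniformly in $v$.

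\textbf{Main obstacle.} None of the individual steps is deep; the slightly delicate point is Step 1, where one must be careful that the dominating function is integrable uniformly over $v \in \mathcal{V}$ and $\theta$ in a neighborhood of $\theta_0$ so that the interchange of derivative and expectation is valid. The uniform bound $f_{Y_t|\mathbf{X}} \leq C$ from C4 combined with the $\theta$-uniform bound $\|\nabla_\theta g_t\| \leq \nabla \bar{g}(x)$ from N2 makes this work; the cutoff $\mathbf{1}\{\mathbf{X}\leq v_\mathbf{x}\}$ is harmless because it only decreases the integrand. The remaining argument is then a standard Taylor-remainder estimate driven by the equicontinuity assumption N3(ii).
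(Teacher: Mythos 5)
Your proposal is correct and follows essentially the same route as the paper's proof: differentiation under the expectation justified by the dominating function $C\,\nabla\bar{g}(x_t)$, continuity of $\Gamma_\theta(v)$ via dominated convergence, and the uniform remainder bound obtained from Assumption N3(ii). The only cosmetic difference is that you use the integral form of the Taylor remainder where the paper invokes the mean value theorem; both reduce to the same equicontinuity estimate.
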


\begin{proof}
First, we show continuous differentiability of $D_{\theta}(v)$.
For all $v$ and $\theta$ in the neighborhood,
\begin{eqnarray}
\nabla_{\theta} D_{\theta}(v) &=& \nabla_{\theta} E\left[ \left(F_{Y_1|\mathbf{X}}(g_1(X_1,v_u;\theta)|\mathbf{X}) - F_{Y_2|\mathbf{X}}(g_2(X_2,v_u;\theta)|\mathbf{X}) \right) \mathbf{1}\{\mathbf{X} \leq v_{\mathbf{x}}\} \right] \nonumber \\
&=& \nabla_{\theta} \int_{\{\mathbf{x} \leq v_{\mathbf{x}}\}} \left(F_{Y_1|\mathbf{X}}(g_1(x_1,v_u;\theta)|\mathbf{x}) - F_{Y_2|\mathbf{X}}(g_2(x_2,v_u;\theta)|\mathbf{x}) \right) dF_{\mathbf{X}}(\mathbf{x}). \nonumber 
\end{eqnarray}
Let $C$ be a constant such that $f_{Y_t|\mathbf{X}}(y|\mathbf{x}) \leq C$.
Because $|f_{Y_t|\mathbf{X}}(g_t(x_t,v_u;\theta)|\mathbf{x}) \nabla_{\theta} g_t(x_t,v_u;\theta)|$ is bounded by the integrable function $C \nabla \bar{g}(x_t)$, we can interchange a differential operator with an integral.
Hence, we have
\begin{eqnarray}
\nabla_{\theta} D_{\theta}(v) &=& E[f_{Y_1|\mathbf{X}}(g_1(X_1,v_u;\theta)|\mathbf{X}) \nabla_{\theta} g_1(X_1,v_u;\theta) \mathbf{1}\{\mathbf{X}\leq v_\mathbf{x}\}] \nonumber \\
& & - E[f_{Y_2|\mathbf{X}}(g_2(X_2,v_u;\theta)|\mathbf{X}) \nabla_{\theta} g_2(X_2,v_u;\theta)\mathbf{1}\{\mathbf{X}\leq v_\mathbf{x}\}]. \nonumber
\end{eqnarray}
According to the dominated convergence theorem, $\nabla_{\theta} D_{\theta}(v)$ is continuous in $\theta$ in a neighborhood of $\theta_0$ for all $v$.

Next, we show the second statement.
Because $D_{\theta}(v)$ is continuously differentiable in $\theta$, for $\theta$ in a neighborhood of $\theta_0$, there exists $\bar{\theta}_v$ between $\theta$ and $\theta_0$ such that
\begin{eqnarray}
 |D_{\theta}(v) - D_{\theta_0}(v)- \Gamma_0(v)'(\theta - \theta_0)| &=& | \{ \Gamma_{\bar{\theta}_v}(v) - \Gamma_0(v) \} (\theta - \theta_0) | \nonumber \\
&\leq & \|\theta - \theta_0\| \times \sup_{v \in \mathcal{V}} \| \Gamma_{\bar{\theta}_v}(v) - \Gamma_0(v) \|  \nonumber
\end{eqnarray}
It follows from Assumption N3(ii) that $\sup_{v \in \mathcal{V}} \| \Gamma_{\bar{\theta}_v}(v) - \Gamma_0(v) \| \rightarrow 0$ as $\|\theta - \theta_0\|\rightarrow 0$.
Hence, we have $|D_{\theta}(v) - D_{\theta_0}(v) - \Gamma_0(v)'(\theta - \theta_0)| = o(\|\theta - \theta_0\|)$ uniformly over $v$.
\end{proof}
\vspace{0.1in}

\begin{Lemma}
Under Assumptions C3 and C5, for every sequence $\{\delta_n\}$ of positive numbers that converges to zero,
\begin{equation}
\sup_{\|\theta - \theta_0\|<\delta_n} \|\hat{D}_{n,\theta} - D_{\theta} -\hat{D}_{n,\theta_0}\|_{\mu} = o_p(n^{-1/2}). \label{stochastic equicontinuity D}
\end{equation}
\end{Lemma}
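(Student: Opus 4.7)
The plan is to reduce the claim to the uniform equicontinuity statement (\ref{Donsker}) from Lemma 2, which has already been proved via the Donsker property of the class $\{A_\theta^v : \theta \in \Theta, v \in \mathcal{V}\}$. Multiplying the target by $\sqrt{n}$, it is equivalent to showing
$$
\sup_{\|\theta - \theta_0\| < \delta_n} \bigl\| \sqrt{n}(\hat{D}_{n,\theta} - D_\theta) - \sqrt{n}\,\hat{D}_{n,\theta_0} \bigr\|_\mu = o_p(1).
$$

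First I would observe that the identification condition (\ref{identification condition}) gives $D_{\theta_0}(v) \equiv 0$ on $\mathcal{V}$, so $\hat{D}_{n,\theta_0}(v) = \hat{D}_{n,\theta_0}(v) - D_{\theta_0}(v)$. The quantity to be controlled is therefore $\sqrt{n}(\hat{D}_{n,\theta}(v) - D_\theta(v)) - \sqrt{n}(\hat{D}_{n,\theta_0}(v) - D_{\theta_0}(v)) = \mathbb{G}_n(A_\theta^v - A_{\theta_0}^v)$, the standardized empirical process indexed by $(\theta, v)$ applied to the increments at $\theta_0$.

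Next I would upgrade from the sup-over-$v$ statement to an $L_2(\mu)$-over-$v$ statement. Since $\mu$ is a probability measure on $\mathcal{V}$, one has the trivial bound $\|f\|_\mu \leq \|f\|_\infty$ for any bounded measurable $f$ on $\mathcal{V}$. Hence
$$
\sup_{\|\theta - \theta_0\| < \delta_n} \bigl\| \mathbb{G}_n(A_\theta^{\cdot} - A_{\theta_0}^{\cdot}) \bigr\|_\mu \;\leq\; \sup_{\|\theta - \theta_0\| < \delta_n,\; v \in \mathcal{V}} \bigl| \mathbb{G}_n(A_\theta^v - A_{\theta_0}^v) \bigr|.
$$

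Finally, the right-hand side is exactly the quantity shown to be $o_p(1)$ in (\ref{Donsker}) of Lemma 2, whose proof already verified via Assumption C5 and Examples 2.10.7--2.10.8 of \cite{van1996weak} that $\{A_\theta^v\}$ is $P$-Donsker and that $\mathbb{P}(A_\theta^v - A_{\theta_0}^v)^2 \leq 8CK\|\theta - \theta_0\| \to 0$ on the shrinking neighborhood $\{\|\theta - \theta_0\| < \delta_n\}$. So the conclusion follows directly. There is no real obstacle: the work was done inside Lemma 2, and the only additional ingredient is the passage from the uniform-in-$v$ sup to the $L_2(\mu)$ norm, which is automatic because $\mu$ is a probability measure. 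If $\mu$ were not finite one would instead need to split $\mathcal{V}$ by a compactness/truncation argument, but here no such step is required.
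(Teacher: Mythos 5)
Your proof is correct and follows essentially the same route as the paper's: bound the $L_2(\mu)$ norm by the supremum over $v$ (valid since $\mu$ is a probability measure) and invoke the stochastic equicontinuity statement (\ref{Donsker}) of Lemma 2. Your explicit observation that $D_{\theta_0}(v)\equiv 0$, so that $\hat{D}_{n,\theta_0}$ coincides with the centered empirical process at $\theta_0$ and the displayed quantity is exactly $\mathbb{G}_n(A_{\theta}^v - A_{\theta_0}^v)$, is a detail the paper's one-line proof leaves implicit.
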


\begin{proof}
Note that
$$
\sup_{\|\theta - \theta_0\|<\delta_n} \|\hat{D}_{n,\theta} - D_{\theta} -\hat{D}_{n,\theta_0}\|_{\mu} \leq \sup_{v \in \mathcal{V}, \|\theta - \theta_0\|<\delta_n} \left| \hat{D}_{n,\theta}(v) - D_{\theta}(v) -\hat{D}_{n,\theta_0}(v) \right|.
$$
By Lemma 2, the right-hand side is $o_p(n^{-1/2})$.
Hence, (\ref{stochastic equicontinuity D}) holds.
\end{proof}
\vspace{0.1in}

\begin{Lemma}
Under the assumptions for Theorem 3, $\hat{\theta}^* \rightarrow_{a.s} \theta_0$ for almost all samples $W_1, \cdots, W_n$.
\end{Lemma}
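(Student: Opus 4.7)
The plan is to follow the structure of the consistency argument for $\hat{\theta}_n$ in Theorem 2, replacing $\hat{D}_{n,\theta}$ with its bootstrap counterpart $\hat{D}^*_{n,\theta}$. Fix $\delta > 0$. By Lemma 1, Assumption C2, and the identification condition (\ref{identification}), there will exist $\epsilon > 0$ such that $\|\theta - \theta_0\| \geq \delta$ implies $\|D_\theta\|_\mu \geq \epsilon$. So it suffices to show that, for almost every sample path $\{W_i\}_{i=1}^\infty$, we have $\|D_{\hat{\theta}^*_n}\|_\mu \to 0$ almost surely with respect to the bootstrap randomness.

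The ingredient I would use is the bootstrap analogue of (\ref{Glivenko-Cantelli}), namely
\[
\sup_{\theta \in \Theta,\, v \in \mathcal{V}} \bigl| \hat{D}^*_{n,\theta}(v) - D_\theta(v) \bigr| \to 0
\]
almost surely along almost every sample. By the triangle inequality this reduces to (i) the Glivenko--Cantelli property of $\{A^v_\theta\}$ already proved in Lemma 2, and (ii) the bootstrap Glivenko--Cantelli statement $\sup_{\theta, v} |\hat{D}^*_{n,\theta}(v) - \hat{D}_{n,\theta}(v)| \to 0$ conditional on almost every sample. With this uniform convergence available, the triangle inequality together with the minimizer property (\ref{bootstrap estimator}) gives
\[
\|D_{\hat{\theta}^*_n}\|_\mu \leq \|\hat{D}^*_{n,\hat{\theta}^*_n} - D_{\hat{\theta}^*_n}\|_\mu + \|\hat{D}^*_{n,\hat{\theta}^*_n}\|_\mu \leq \|\hat{D}^*_{n,\hat{\theta}^*_n} - D_{\hat{\theta}^*_n}\|_\mu + \|\hat{D}^*_{n,\theta_0}\|_\mu,
\]
and both terms on the right vanish almost surely since $D_{\theta_0} \equiv 0$.

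The main obstacle is step (ii). My plan is to appeal to the fact, established in the proof of Lemma 2, that $\{A^v_\theta : \theta \in \Theta, v \in \mathcal{V}\}$ is $P$-Donsker with a bounded (hence integrable) envelope, so in particular it is $P$-Glivenko--Cantelli. The standard bootstrap theorem for Glivenko--Cantelli classes (see, e.g., Theorem 3.6.1 of van der Vaart and Wellner, 1996) then yields $\sup_{\theta, v} |\hat{D}^*_{n,\theta}(v) - \hat{D}_{n,\theta}(v)| \to 0$ in outer probability conditional on almost every sample; passing to an almost-sure subsequence if needed, or invoking the stronger bootstrap Glivenko--Cantelli conclusion available for uniformly bounded classes, delivers the a.s.\ statement required. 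Once this is in hand the remainder of the argument is a verbatim transcription of Theorem 2, so no further novelty is needed.
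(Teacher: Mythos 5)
Your proposal is correct and follows essentially the same route as the paper's own proof: a triangle-inequality decomposition of $\sup_{\theta}\|\hat{D}^*_{n,\theta}-D_{\theta}\|_{\mu}$ into the bootstrap-to-empirical and empirical-to-population terms, each controlled by the Glivenko--Cantelli/Donsker property of $\{A_{\theta}^v\}$, followed by a verbatim repetition of the Theorem~2 consistency argument. If anything you are more careful than the paper, which simply asserts almost-sure convergence of the bootstrap term ``since the class is Donsker,'' whereas you flag the gap between conditional convergence in (outer) probability and the almost-sure statement and indicate how to close it for this uniformly bounded class.
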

\begin{proof}
By the triangle inequality, for almost all samples $W_1, \cdots , W_n$, we have
\begin{eqnarray}
\sup_{\theta}\|\hat{D}_{n,\theta}^* - D_{\theta} \|_{\mu} &\leq & \sup_{\theta}\|\hat{D}_{n,\theta}^* - \hat{D}_{n,\theta}\|_{\mu} + \sup_{\theta}\|\hat{D}_{n,\theta} - D_{\theta}\|_{\mu} \nonumber \\
&\leq & \sup_{\theta, v} | \hat{D}_{n,\theta}^*(v) - \hat{D}_{n,\theta}(v)| + \sup_{\theta , v} |\hat{D}_{n,\theta}(v) - D_{\theta}(v)| \rightarrow_{a.s.} 0, \nonumber
\end{eqnarray}
since $\{A_{\theta}^v: \theta \in \Theta, v \in \mathcal{V} \}$ is a Donsker class.
The reminder of the proof is same as for Theorem 3.
\end{proof}
\vspace{0.1in}

\begin{Lemma}
Suppose that the assumptions of Theorem 4 hold.
For every sequence $\{\delta_n\}$ of positive numbers that converges to zero, 
\begin{equation}
\sup_{\|\theta - \theta_0\|<\delta_n} \| \sqrt{n} (\hat{D}^*_{n,\theta}-\hat{D}_{n,\theta}) - \sqrt{n} (\hat{D}^*_{n,\theta_0}-\hat{D}_{n,\theta_0}) \|_{\mu}=o_B(1). \label{bootstrap stochastic equicontinuity D}
\end{equation}
\end{Lemma}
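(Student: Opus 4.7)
The plan is to bound the $L_2(\mu)$-norm on the left-hand side by the corresponding supremum over $v\in\mathcal{V}$ and then invoke asymptotic equicontinuity of the bootstrap empirical process. Since $\mu$ is a probability measure on $\mathcal{V}$,
$$
\sup_{\|\theta-\theta_0\|<\delta_n}\bigl\|\sqrt{n}(\hat D^*_{n,\theta}-\hat D_{n,\theta})-\sqrt{n}(\hat D^*_{n,\theta_0}-\hat D_{n,\theta_0})\bigr\|_\mu
\;\le\; \sup_{v\in\mathcal{V},\,\|\theta-\theta_0\|<\delta_n}\bigl|\mathbb{G}_n^*A_\theta^v-\mathbb{G}_n^*A_{\theta_0}^v\bigr|,
$$
where $\mathbb{G}_n^*\equiv\sqrt{n}(\mathbb{P}_n^*-\mathbb{P}_n)$. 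It therefore suffices to show that the right-hand side is $o_B(1)$.

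The key input is that, by the proof of Lemma 2, the class $\mathcal{F}=\{A_\theta^v:\theta\in\Theta,v\in\mathcal{V}\}$ is $P$-Donsker with envelope bounded by $1$. Giné and Zinn's (1990) bootstrap central limit theorem then yields that, conditional on almost every realization of the sample, $\mathbb{G}_n^*$ converges weakly in $\ell^\infty(\mathcal{F})$ to the same tight Gaussian process $\mathbb{G}$ as $\mathbb{G}_n$. A standard consequence of this conditional weak convergence (see, e.g., Lemma 1.5.7 and Theorem 2.5.2 in van der Vaart and Wellner, 1996) is bootstrap asymptotic equicontinuity with respect to the intrinsic $L_2(P)$ pseudometric $\rho(f,g)=\bigl\{P(f-g)^2\bigr\}^{1/2}$: for every $\eta>0$,
$$
\lim_{\delta\downarrow 0}\limsup_{n\to\infty}P\!\left(\sup_{\rho(f,g)<\delta}\bigl|\mathbb{G}_n^*(f-g)\bigr|>\eta\,\Big|\,W_1,\dots,W_n\right)=0\quad\text{a.s.}
$$

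To apply this to our indexing, I would appeal to the pseudometric bound already derived in the proof of Lemma 2: $P(A_\theta^v-A_{\theta_0}^v)^2\le 8CK\|\theta-\theta_0\|$, with the bound uniform in $v\in\mathcal{V}$. Hence for $\delta_n\downarrow 0$ the collection $\{(A_\theta^v,A_{\theta_0}^v):v\in\mathcal{V},\,\|\theta-\theta_0\|<\delta_n\}$ eventually lies within any prescribed $\rho$-ball, so bootstrap equicontinuity gives
$$
\sup_{v\in\mathcal{V},\,\|\theta-\theta_0\|<\delta_n}\bigl|\mathbb{G}_n^*A_\theta^v-\mathbb{G}_n^*A_{\theta_0}^v\bigr|=o_B(1),
$$
which combined with the first display delivers (\ref{bootstrap stochastic equicontinuity D}).

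The main obstacle is ensuring the $o_B(1)$ conclusion rather than merely $o_p(1)$: this requires citing the Giné--Zinn statement in the strong (almost sure) form so that the conditional bootstrap equicontinuity holds along almost every realization of $\{W_i\}$, and then extracting subsequences as in the definition of $o_B$. The remaining ingredients, namely the Donsker property and the Lipschitz-type control of $\rho(A_\theta^v,A_{\theta_0}^v)$ in $\|\theta-\theta_0\|$ uniformly in $v$, have already been verified in Lemma 2 and require no further work.
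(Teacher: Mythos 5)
Your proposal is correct and follows essentially the same route as the paper: bound the $L_2(\mu)$-norm by the supremum over $v\in\mathcal{V}$ and then invoke the Gin\'e--Zinn bootstrap equicontinuity for the Donsker class $\{A_\theta^v\}$. In fact you supply more detail than the paper's two-line argument, in particular by making explicit the uniform-in-$v$ pseudometric bound $P(A_\theta^v-A_{\theta_0}^v)^2\le 8CK\|\theta-\theta_0\|$ from Lemma 2 that justifies shrinking into a $\rho$-ball.
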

\begin{proof}
The left-hand side of (\ref{bootstrap stochastic equicontinuity D}) is dominated above by
$$
\sup_{v, \|\theta - \theta_0\|<\delta_n} | \sqrt{n} (\hat{D}^*_{n,\theta}(v)-\hat{D}_{n,\theta}(v)) - \sqrt{n} (\hat{D}^*_{n,\theta_0}(v)-\hat{D}_{n,\theta_0}(v)) |.
$$
The bootstrap equicontinuity due to \cite{gine1990bootstrapping} implies that this random variable is $o_B(1)$.
Hence, we can obtain (\ref{bootstrap stochastic equicontinuity D}).
\end{proof}

\afterpage{\clearpage}
\newpage
\section*{Appendix 4: Figures and Tables}

\begin{figure}[h]
\centering
\includegraphics[width=16cm]{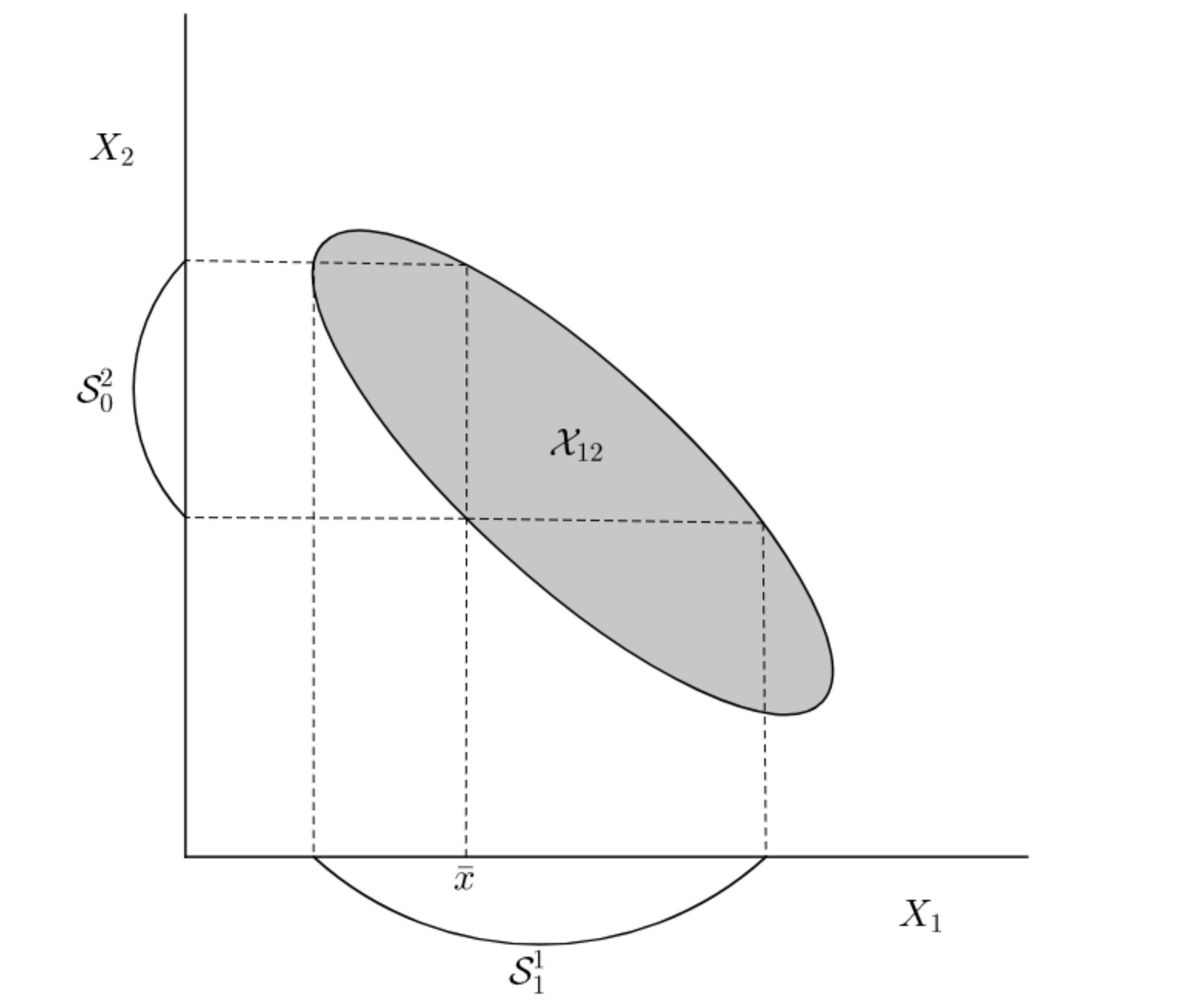}
\caption{Description of $\mathcal{S}_m^t$.}
\end{figure}

\begin{figure}
\centering
\includegraphics[width=16cm]{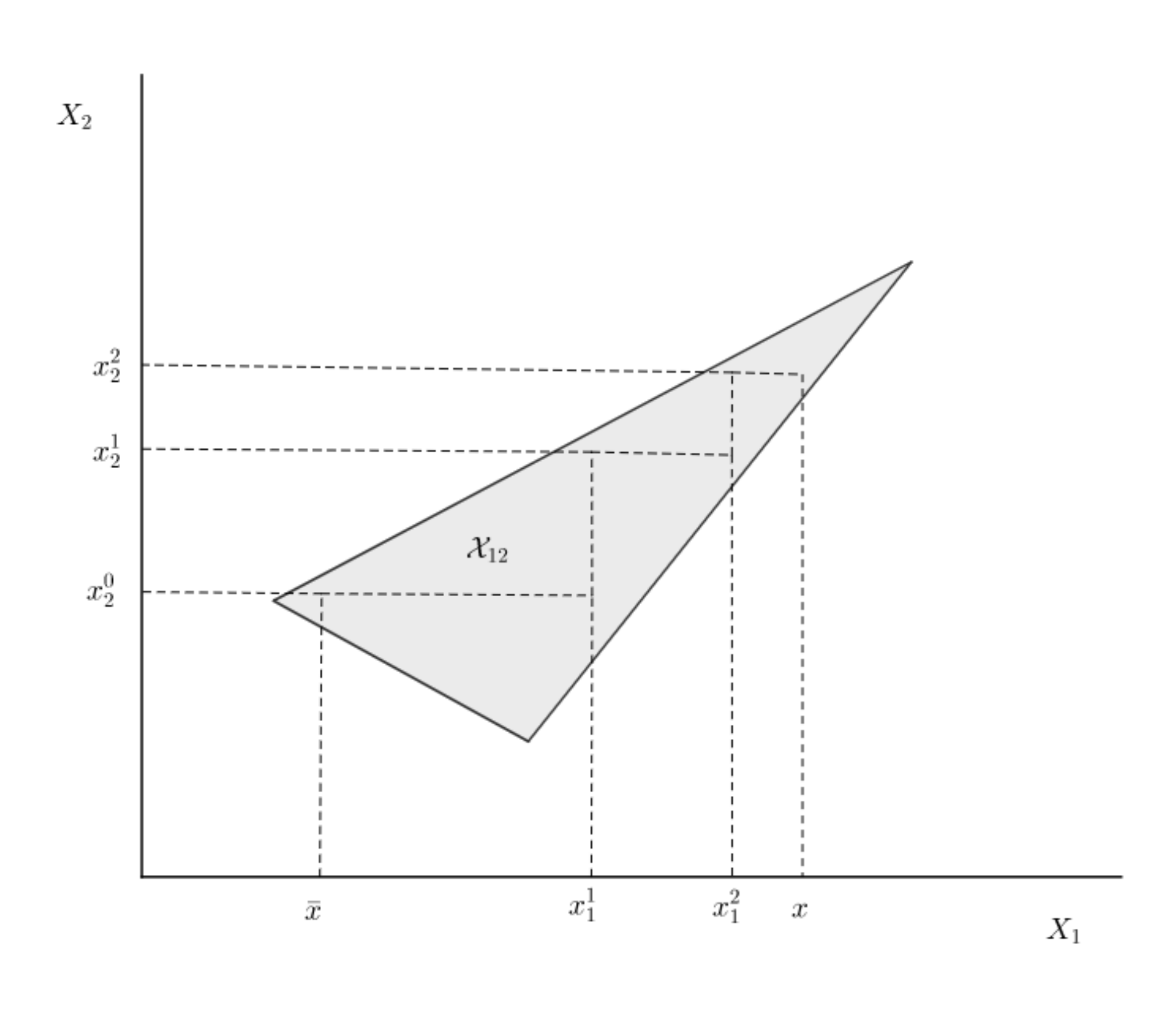}
\caption{Connected support.}
\end{figure}

\begin{table}[htb]
\begin{center}
\caption{Results of Simulation 1(i)}\vspace{0.1in}
  \begin{tabular}{c c r r r} \hline
      & & $N=400$ & $N=800$ & $N=1600$  \\ \hline
      & bias & 0.0262 & 0.0195 & 0.0125 \\ 
$\theta_1$ & std & 0.1991 & 0.1709 & 0.1121 \\ 
      & mse & 0.0403 & 0.0296 & 0.0127 \\ \hline
      & bias & 0.0657 & 0.0733 & 0.0363 \\ 
$\theta_2$ & std & 0.1751 & 0.1375 & 0.1045 \\ 
      & mse & 0.0350 & 0.0243 & 0.0122 \\ \hline
      & bias & 0.0472 & 0.0605 & 0.0320 \\ 
$\theta_3$ & std & 0.1158 & 0.1069 & 0.0889 \\ 
      & mse & 0.0231 & 0.0171 & 0.0089 \\ \hline
  \end{tabular}
\end{center}
\end{table}

\begin{table}[htb]
\begin{center}
\caption{Results of Simulation 1(ii)}\vspace{0.1in}
  \begin{tabular}{c c r r} \hline
      & & $N=500$ & $N=1000$   \\ \hline
      & bias & 0.0101 & 0.0087  \\ 
Our method & std & 0.0978 & 0.0745  \\ 
      & mse & 0.0097 & 0.0056 \\ \hline
      & bias & 0.0042 & 0.0017  \\ 
Hoderlein and White & std & 0.1175 & 0.0855  \\ 
      & mse & 0.0138 & 0.0073  \\ \hline
  \end{tabular}
\end{center}
\end{table}

\begin{table}[htb]
\begin{center}
\caption{Results of Simulation 2} \vspace{0.1in}
  \begin{tabular}{c c r r r} \hline
      & & $N=400$ & $N=800$ & $N=1600$  \\ \hline 
      & bias & -0.0025 & -0.0031 & 0.0005 \\ 
$\theta_1$ & std & 0.0875 & 0.0604 & 0.0418 \\ 
      & mse & 0.0077 & 0.0037 & 0.0018 \\ \hline
      & bias & 0.0020 & 0.0022 & 0.0001 \\ 
$\theta_2$ & std & 0.0522 & 0.0351 & 0.0244 \\ 
      & mse & 0.0027 & 0.0012 & 0.0006 \\ \hline
      & bias & -0.0082 & 0.0004 & -0.0044 \\ 
$\theta_3$ & std & 0.1837 & 0.1256 & 0.0886 \\ 
      & mse & 0.0338 & 0.0158 & 0.0079 \\ \hline
      & bias & 0.0043 & 0.0001 & 0.0023 \\ 
$\theta_4$ & std & 0.0839 & 0.0562 & 0.0398 \\ 
      & mse & 0.0071 & 0.0032 & 0.0016 \\ \hline
      & bias & -0.0025 & -0.0014 & -0.0011 \\ 
ATE & std & 0.0972 & 0.0673 & 0.0457 \\ 
      & mse & 0.0095 & 0.0045 & 0.0021 \\ \hline
      & bias & -0.0333 & -0.0303 & -0.0343 \\ 
QTE25 & std & 0.1148 & 0.0814 & 0.0565 \\ 
      & mse & 0.0143 & 0.0075 & 0.0044 \\ \hline
      & bias & -0.0017 & -0.0016 & -0.0009 \\ 
QTE50 & std & 0.0994 & 0.0685 & 0.0474 \\ 
      & mse & 0.0099 & 0.0047 & 0.0022 \\ \hline
      & bias & 0.0306 & 0.0292 & 0.0328 \\ 
QTE75 & std & 0.1324 & 0.0896 & 0.0618 \\ 
      & mse & 0.0185 & 0.0089 & 0.0049 \\ \hline
  \end{tabular}
\end{center}
\end{table}

\begin{table}[htb]
  \begin{center}
  \caption{Lower and upper bounds for Example 4}
    \begin{tabular}{c c c c c c c } \hline
    $x$ & $g_1^L(x,u)$ & $g_1^U(x,u)$ & $g_2^L(x,u)$ & $g_2^U(x,u)$ & $g_1(x,u)$ & $g_2(x,u)$ \\ \hline 
    $-2$ & $\mathbf{1}\{u > 0.99\}$ & $\mathbf{1}\{u > 0.47\}$ & $\mathbf{1}\{u > 0.99\}$ & $\mathbf{1}\{u > 0.41\}$ & $\mathbf{1}\{u>0.73\}$ & $\mathbf{1}\{u>0.71\}$ \\ 
    $-1$ & $\mathbf{1}\{u > 0.98\}$ & $\mathbf{1}\{u > 0.35\}$ & $\mathbf{1}\{u > 0.98\}$ & $\mathbf{1}\{u > 0.30\}$ & $\mathbf{1}\{u>0.62\}$ & $\mathbf{1}\{u>0.57\}$ \\ 
    $0$ & $\mathbf{1}\{u > 0.85\}$ & $\mathbf{1}\{u > 0.13\}$ & $\mathbf{1}\{u > 0.82\}$ & $\mathbf{1}\{u > 0.10\}$ & $\mathbf{1}\{u>0.50\}$ & $\mathbf{1}\{u>0.43\}$ \\ 
    $1$ & $\mathbf{1}\{u > 0.64\}$ & $\mathbf{1}\{u > 0.02\}$ & $\mathbf{1}\{u > 0.59\}$ & $\mathbf{1}\{u>0.01\}$ & $\mathbf{1}\{u>0.38\}$ & $\mathbf{1}\{u>0.29\}$ \\ 
    $2$ & $\mathbf{1}\{u > 0.53\}$ & $\mathbf{1}\{u>0.01\}$ & $\mathbf{1}\{u > 0.47\}$ & $\mathbf{1}\{u>0.01\}$ & $\mathbf{1}\{u>0.27\}$ & $\mathbf{1}\{u>0.18\}$ \\ \hline
    \end{tabular}
  \end{center}
\end{table}

\clearpage

\bibliographystyle{ecta}
\bibliography{nonseparable_model}

\end{document}